\author{Santiago Viertel\thanks{Supported by Coordination for the Improvement of Higher Education Personnel (CAPES) proc. 1431510/2014-9.}
  \and Andr\'e Lu\'is Vignatti}
\title[Labeling Algorithm and Compact Routing Scheme for a Small World Network Model]{Labeling Algorithm and Compact Routing\\Scheme for a Small World Network Model}
\affiliation{
  DINF, Federal University of Paran\'a, Brazil}
\newtheorem{theorem}{Theorem}
\newtheorem{lemma}[theorem]{Lemma}
\newtheorem{corollary}[theorem]{Corollary}
\newtheorem{definition}{Definition}
\newtheorem{fact}{Fact}
\DeclareMathOperator*{\E}{\textrm{E}}
\newcommand{\nil}{\texttt{nil}}
\newcommand{\true}{\texttt{true}}
\newcommand{\false}{\texttt{false}}
\begin{document}
\publicationdetails{VOL}{2019}{ISS}{NUM}{SUBM}
\maketitle

\begin{abstract}

This paper defines the toroidal small world labeling problem that asks for a labeling of the vertices of a network such that the labels possess information that allows a compact routing scheme in the network.
We consider the problem over a small world network model we propose.
Both the model and the compact routing scheme have applications in peer-to-peer networks.
The proposed model is based on the model of Kleinberg (2000), and generates an undirected two-dimensional torus with one random long-range edge per vertex.
These random edges create forbidden cycles that mimic the underlying torus topology, and this behavior confuses attempts for extracting the routing information from the network.
We show that such forbidden cycles happen with small probability, allowing us to use a breadth-first search that finds the vertices positions on the torus.
The positions are pairs of integer numbers that provides routing information to a greedy routing algorithm that finds small paths of the network.
We present a linear time labeling algorithm that detects and removes the random edges, finds the underlying torus and labels almost all vertices through a breadth-first search.
The labeling algorithm is then used by a compact routing scheme for the proposed small world model.

\end{abstract}

\keywords{small world network, labeling algorithm, compact routing scheme, greedy routing}
\section{Introduction}
\label{sec:introduction}

Milgram's experiment~\citep{Milgram:1967} concludes that the average path length between two vertices in real-world networks is small.
This phenomenon is known as ``small world'', which motivates the development of formal models.
There are two well known small world graph models~\citep{Watts:1998,Kleinberg:2000a}, both focusing on small average path length and also vertices clustering.
Small world graphs have many applications such as networks of electronic mail messages exchanging~\citep{Adamic:2005}, friendship~\citep*{Liben-Nowell:2005}, Internet domain~\citep*{Krioukov:2004}, peer-to-peer~\citep*{Jovanovic:2001,Manku:2004} and wireless sensors~\citep*{Liu:2009}.

\cite{Kleinberg:2000a} presents a small world graph model, and a greedy routing algorithm for his model, and proves that the algorithm delivers messages with few forwards.
The model consists of a two-dimensional $n\times n$ lattice with some random directed long-range edges.
The greedy routing algorithm forwards a received message to the neighbor closest to the target.
Kleinberg proves that this decentralized routing procedure executes O$\left(\log^2n\right)$ expected number of message forwardings in graphs generated by his model.
Some works make modifications on Kleinberg model and on the greedy routing in order to obtain improvements or trade-offs between storage needed in bits per vertex and paths lengths \citep*{Martel:2004,Zeng:2005,Liu:2009}.
Kleinberg routing algorithm requires labels assigned to the vertices and routing tables with positioning information.
Finding these labels is the main goal of this paper.

In the context of compact routing, we refer to \emph{preprocessing algorithms} those that generate labels and routing tables to each vertex, which typically stores the graph topological information.
A \emph{routing algorithm} uses the label and routing table of a vertex to forward messages through it \citep{Cowen:1999}.
A \emph{routing scheme} is a combination of a preprocessing algorithm and the related routing algorithm, that provides a complete mechanism for routing messages in a network \citep{Thorup:2001a}.
A routing scheme is \emph{compact} if each vertex uses sublinear storage space in the number of vertices \citep*{Chen:2009,Chen:2012}.
The worst and expected cases on the number of bits for storage per vertex are frequent metrics for measure effectiveness of routing schemes \citep{Thorup:2001a,Chen:2012}.
Some works also measure through \emph{stretch} \citep*{Cowen:1999,Abraham:2006a,Chen:2009}, which is the maximal ratio between the path lengths performed by the routing scheme and a shortest path over all pairs of vertices.
Others measure through the \emph{expected path length} performed by the routing scheme \citep*{Zeng:2005,Zeng:2006,Liu:2009}.

There are compact routing schemes for general graphs~\citep{Cowen:1999,Thorup:2001a}, trees~\citep{Thorup:2001a}, graphs with low doubling dimension~\citep*{Abraham:2006a,Konjevod:2007a} and power law graphs~\citep{Chen:2009,Chen:2012}.
Some works deal with greedy routing on metric spaces~\citep*{Ban:2010,Bringmann:2017}, even in dynamic networks~\citep*{Krioukov:2008,Krioukov:2009}.
Some of them propose the study of embeddings in metric spaces.
Embedding graphs in metric spaces defined by the two-dimensional lattice is an interesting problem in the context of greedy routing in graphs generated by the Kleinberg model.
An embedding attaches to each vertex a label with a pair numbers representing the vertex position in the lattice.
\cite{Kleinberg:2006} also mentioned this problem.

Labeling the vertices with their respective positions in the lattice is a problem that rises from specific contexts.
In peer-to-peer file sharing networks, such as Symphony~\citep*{Manku:2003} and Freenet~\citep*{Clarke:2000}, the algorithm for searching documents in the network should find as many as possible source nodes with a fell hops.
A solution is to create a peer-to-peer overlay network with a small world model, which is an alternative to hypercubes, for example.
In this context, hypercubes have diameter $\log|V|$, for a graph with $|V|$ vertices, and all vertices have the same degree $\log|V|$.
On the other hand, small world graphs generated by Kleinberg model, for example, have expected diameter $\log|V|$ while all vertices have at most constant outdegree.
Constant degree for all vertices is an important feature that directly impacts in the time complexity of distributed search and routing algorithms.
We present a small world model that generates random undirected graphs with expected diameter $\log|V|$, expected constant degree in all vertices and with diameter $|V|^{1/2}$ in the worst case, which is smaller than $2|V|^{1/2}$ of the Kleinberg model in the two-dimensional lattice.

The need of routing messages through small paths may naturally become apparent in large peer-to-peer networks that are already created and in using.
In this context, the nodes may not know their positions in the initial regular structure generated by the small world model, as the document search algorithm may not use this information.
Thus, an algorithm that retrieves these positions is necessary for routing messages through small paths in these peer-to-peer networks.
\cite{Sandberg:2006} claims that a vertex may not have stored the positions of its neighbors in some applications.
He solved the problem of finding labels to the vertices through a Markov chain Monte Carlo algorithm that runs in a distributed fashion.
His algorithm iteratively estimates the positions in the lattice, assuming that the network was created by the Kleinberg model.

We present an alternative algorithm that runs in expected linear time on the size of the network, and can easily be adapted to a distributed version that runs in expected constant time in each node.
The algorithm labels almost all vertices of small world graphs built upon a torus, instead of a lattice.
It tries to identify the spanning torus and labels the vertices with their positions in the torus.
The \emph{spanning torus} is a torus subgraph generated by the small world model presented in this paper.
It is a spanning graph, by the fact that it comprehends all the vertices.
Section~\ref{sec:undirectedToroidalSmallWorldModel} presents it with more details.
The algorithm runs a bounded depth search from each vertex aiming to find all the four-cycles rooted in it.
The vertex recognizes lattice patterns in all combinations of four-cycles and identifies the edges in the torus, if possible.
Finally, the algorithm labels all vertices with identified edges and their neighbors by a global breadth-first search.

\textbf{Contributions.} We present a small world model based on Kleinberg model, called \emph{undirected toroidal small world} (\UTSW) model.
It differs on the generation of graphs over a torus, instead of a lattice, and with undirected long-range edges, instead of directed.
We present upper and lower bounds on the normalizing factor of $\Theta\left(\log^{-1}n\right)$, where $n$ is a parameter of the model that defines the torus size.
The \emph{normalizing factor} is a multiplicative factor that keeps the probability distribution summing to one on the long-range edges random generation.
We present a formal definition for the labeling problem, an upper bound on the probability of existence of four-cycles composed by edges not in the spanning torus and show that this probability is small, which is O$\left(\log^{-1}n\right)$.
We also present an expected linear time algorithm that labels almost all vertices and show a compact routing scheme for \UTSW model.

\textbf{Organization.} Section~\ref{sec:relatedWorks} presents related works on small world graph models, improvements and trade-offs in greedy routing and labeling algorithms for routing.
Section~\ref{sec:undirectedToroidalSmallWorldModel} presents our model and an analysis of its normalizing factor.
Section~\ref{sec:toroidalSmallWorldLabelingProblem} presents the formal definition of the \emph{toroidal small world labeling problem}.
Section~\ref{sec:cyclesOfSize4OutsideTheTorus} presents a sequence of results in order to bound the probability of the existence of four-cycles, rooted in a given vertex, that do not belong to the spanning torus.
This upper bound is important for the probabilistic analysis of the labeling algorithm.
Section~\ref{sec:labelingAlgorithm} presents all procedures of the labeling algorithm, together with their analyses, and finishes presenting a compact routing scheme.
Section~\ref{sec:conclusion} presents the conclusion and future works.
\section{Related Works}
\label{sec:relatedWorks}

\cite{Watts:1998} claim that many technological and social networks have topology between regular and random.
They present a model that generates an undirected $n$-vertex ring and connects each vertex with its $k$ closest vertices.
After that, it changes the head of each edge uniformly and independently at random with probability $c$.
The resulting graphs have small average of path length for some values of $c$, as \cite{Bollobas:1988} prove that a cycle with a random matching has diameter $\Theta(\log n)$ with high probability.

\cite{Kleinberg:2000a} claims that people have ``close'' and ``far'' contacts.
He presents a model based on geographic positions.
It generates a $n\times n$ lattice where each vertex has a distinct position in the lattice, represented by a pair in $\{1,\ldots,n\}\times\{1,\ldots,n\}$.
For each vertex, it creates directed edges to vertices within $p\ge1$ steps in the lattice and $q\ge0$ directed edges with independent random trials.
Each edge with tail in $u\in V$ has head $v\in V\setminus\{u\}$ with probability $Z_u\cdot d_{uv}^{-r}$, where $d_{uv}$ is the distance in the lattice between $u$ and $v$, $r\ge0$ is the parameter for the probability distribution and $Z_u$ is the normalizing factor in $u$.
The \emph{distance in the lattice} between $u$ and $v$ is $d_{uv}=|x_u-x_v|+|y_u-y_v|$, where $(x_w,y_w)$ is the position of $w\in V$ in the lattice.
The \emph{normalizing factor} in $u$ is $Z_u=\left(\sum_{w\in V\setminus\{u\}}d_{uw}^{-r}\right)^{-1}$.
The probability distribution $Z_u\cdot d_{uv}^{-r}$ is referred in this paper as \emph{inverse $r^\textrm{th}$-power distribution}.

Kleinberg also investigates the influence of $r$ in a specific greedy routing procedure.
\emph{Myopic search}\footnote{As the author named it in one of his books \citep{Easley:2010}.} is a greedy routing procedure that uses the positions of the neighbors and the target.
In myopic search, a vertex $u$ forwards the message to its neighbor $\textrm{argmin}_{v\in N(u)}d_{vt}$, where $N(u)$ is the set of neighbors of $u$ and $t\in V\setminus\{u\}$ is the target.
He proves that myopic search performs O$\left(\log^2n\right)$ expected number of forwards, for $p=q=1$ and $r=2$.
\cite{Martel:2004} show that this bound is tight.
Myopic search is considered an efficient algorithm because a polylogarithmic function in $n$ defines the expected number of forwards.
\cite{Sandberg:2006} presents the problem of vertices labeling when the positions are unknown.
His algorithm estimates the position of each vertex in the case of the graph is toroidal and created by the Kleinberg model.
The labeling algorithm presented in this paper labels a large fraction of vertices with exact positions, which distinguishes it from the statistical estimations that the Sandberg method outputs.

Analyzing the greedy routing in alternative models are also considered.
\cite{Manku:2004} present a routing algorithm similar to myopic search that also considers the neighbors of the neighbors.
The expected number of forwards is O$\left(\frac{\log^2n}{q\cdot\log q}\right)$ for a unidirectional $n$-ring based model, where $q$ is the number of directed edges generated by random trials.
\cite{Martel:2004} present a routing algorithm that also considers the positions of the long-range edges endpoints of the $\log n$ nearest vertices.
The expected number of forwards is O$\left(\log^{1+1/d}n\right)$ for a $d$-dimensional lattice with long-range edges created with the inverse $d^\textrm{th}$-power distribution.
\cite*{Fraigniaud:2006} obtain this result through an oblivious routing algorithm, that does not change the message header.

\cite{Zeng:2005} present a unidirectional $n$-ring based model.
For each vertex, it creates one long-range edge with the inverse first-power distribution and two augmented links with the vertices within distance $\log^2n$ chosen uniformly at random.
They define a non-oblivious and an oblivious routing algorithms that consider the positions of the O$(\log n)$ nearest vertices reachable through augmented links.
Both have O$(\log n\log\log n)$ expected number of forwards.
\cite{Zeng:2006} generalize the model for $d$ dimensions and define an oblivious algorithm that performs O$(\log n)$ expected number of forwards.

\cite{Liu:2009} present a model that creates one long-range edge per subgroup of $k\times k$ vertices.
The routing algorithm considers the positions of the endpoints of the O$(\log m)$ nearest long-range edges, where $m=n/k$.
The expected number of forwards is O$\left(\log m\cdot\log^2\log m\right)$.
More works~\citep*{Aspnes:2002,Dietzfelbinger:2009} also present models, greedy routing algorithms and their expected number of forwards.

In the routing context, each vertex usually has a label and a routing table that encode network topological information.
\emph{Preprocessing algorithms} take as input a graph that models the network and output the labels and routing tables of all vertices.
\emph{Labeling algorithms} assign addresses, or labels, to the vertices \citep{Cowen:1999}.
Running time and labels size are the main metrics for analyzing labeling algorithms.

\cite{Cowen:1999} presents a compact routing scheme for general networks.
The labeling algorithm runs in \~O$\left(n^{1+2\alpha}+n^{1-\alpha}m+n^{(1+\alpha)/2}m\right)$ time\footnote{The asymptotic notation \~O hides polylogarithmic multiplicative factors from the bound.}, where $n$ is the number of vertices, $m$ is the number of edges and $0<\alpha<1$.
It chooses a set of vertices called landmarks and generates labels with $2\lfloor\log n\rfloor+\lfloor\log\Delta\rfloor+3$ bits, where $\Delta$ is the maximum degree of the graph.
\cite{Thorup:2001a} present a landmark selection process that changes the running time to \~O$(nm)$.

There are specialized compact routing schemes.
\cite{Abraham:2006a} present one for networks with low doubling dimension.
The labeling algorithm has linear running time and it generates labels with $\lceil\log n\rceil$ bits.
\cite{Chen:2009,Chen:2012} present a compact routing scheme for power law networks.
The labels encode a shortest path and, because of that, they have size of O$(\log n\log\log n)$ bits with probability $1-$o$(1)$.
The preprocessing algorithm runs in O$\left(n^{1+\gamma}\log n\right)$ expected time, where $\epsilon<\gamma<1/3+\epsilon$.

Some authors use existing compact routing schemes.
\cite{Brady:2006} present a compact routing scheme for power law networks that uses the Thorup and Zwick scheme for trees~\citep{Thorup:2001a}.
The labels have O$\left(e\log^2n\right)$ bits, where $e$ assumes small values.
\cite{Konjevod:2007a} present a compact routing scheme for networks with low doubling dimension that also uses the Thorup and Zwick scheme for trees, where the labels have $\lceil\log n\rceil$ bits.
\section{Undirected Toroidal Small World Model}
\label{sec:undirectedToroidalSmallWorldModel}

This section presents the \emph{undirected toroidal small world} (\UTSW) model.
The model is similar to the \cite{Kleinberg:2000a} model.
Roughly, instead of using a grid, it ``ties the borders'' of the lattice, obtaining a torus.
Also, Kleinberg uses directed long-range edges, but we use undirected instead.
Let $S^2=S\times S$ and $[\![n]\!]=\{0,1,...,n-1\}$.
At first, the model generates a two-dimensional $n\times n$ undirected torus as follows.
It creates $n^2$ vertices such that each vertex is a distinct element of $[\![n]\!]^2$.
After that, each vertex $(i,j)$ is connected with the vertices $(i,(j+1)\mod n)$ and $((i+1)\mod n,j)$ through undirected edges.
Now, we have a torus as Figure~\ref{fig:2DimensionalTorus} shows, in that case, it is the two-dimensional undirected torus with size three.
We denote this generated torus as $T=(V,E')$ \emph{with size $n$} in the rest of the paper, unless the generated torus or $T$ are locally redefined.
Note that $|V|=n^2$ and $|E'|=2|V|$.
We also assume sizes $n\ge3$, otherwise the torus may have parallel edges or loops, both of which we do not consider in this paper.

\begin{figure}
	\centering
	\includegraphics{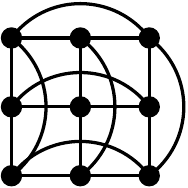}
	\caption{The two-dimensional undirected torus with size $n=3$.}
	\label{fig:2DimensionalTorus}
\end{figure}

Let $d_{uv}$ be the distance between $u$ and $v$ on $T$.
Note that $d_{uv}=\min(|x_u-x_v|,n-|x_u-x_v|)+\min(|y_u-y_v|,n-|y_u-y_v|)$ for all $u,v\in V$, where $(x_w,y_w)$ is defined for $w\in V$ during the torus generation.
After the torus creation, each vertex $u\in V$ chooses another vertex $v\in V\setminus\{u\}$ to create an edge.
The choice is sampled from the inverse second-power distribution, \textit{i.e.}, $u$ chooses $v$ with probability $Z_u\cdot d_{uv}^{-2}$, where $Z_u$ is the \emph{normalizing factor} in $u$.
The \emph{normalizing factor} $Z_u$ is a value that keeps the probability distribution summing to one in each $u$.
Its value is
\[Z_u=\left(\sum\limits_{w\in V\setminus\{u\}}d_{uw}^{-2}\right)^{-1}.\]

To avoid parallel edges, an edge is not created if it has already been created.
We refer the edges created in the torus generation process as \emph{local edges} and those created by the randomized process as \emph{long-range edges}.
Let $C_{uv}$ be the event of vertex $u$ choosing the vertex $v$ to create a long-range edge.
Note that the probability distribution $\Pr\left(C_{uv}\right)=Z_u\cdot d_{uv}^{-2}$, for all $v\in V\setminus\{u\}$, sums to one.
We use the distance function $d$ and the event $C$ in the rest of the paper.
We also assume that the positions $(x_w,y_w)$ of all $w\in V$, computed during the torus generation, are not an output of the model.
In fact, finding these positions is the problem that we solve in this paper.

\begin{theorem}
	$Z_u=Z_v$ for all $u,v\in V$.
	\label{thm:normalizingFactor}
\end{theorem}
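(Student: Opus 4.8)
The plan is to use the fact that the torus $T$ is \emph{vertex-transitive}: the map that shifts both coordinates by a fixed amount modulo $n$ is an automorphism of $T$ that preserves the distance $d$, and such shifts act transitively on $V$. Since $Z_u$ depends only on the multiset of distances $\{d_{uw} : w \in V \setminus \{u\}\}$, and a $d$-preserving automorphism carries this multiset for $u$ onto the corresponding multiset for its image, all the $Z_u$ must coincide. Concretely, I would prove the equivalent statement that $S_u := \sum_{w \in V \setminus \{u\}} d_{uw}^{-2}$ does not depend on $u$, and then take reciprocals.

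To carry this out, fix $u = (x_u, y_u) \in V$ and reindex $S_u$ by the displacement from $u$ rather than by the vertex $w$. For each $(a,b) \in [\![n]\!]^2 \setminus \{(0,0)\}$ there is a unique vertex $w(a,b) = \bigl((x_u + a) \bmod n,\, (y_u + b) \bmod n\bigr)$, and $(a,b) \mapsto w(a,b)$ is a bijection from $[\![n]\!]^2 \setminus \{(0,0)\}$ onto $V \setminus \{u\}$. Since $x_{w(a,b)} - x_u \equiv a$ and $y_{w(a,b)} - y_u \equiv b \pmod n$, the per-coordinate circular distances satisfy $\min(|x_u - x_{w(a,b)}|,\, n - |x_u - x_{w(a,b)}|) = \min(a, n-a)$ and likewise in the second coordinate, so $d_{u\,w(a,b)} = \min(a,n-a) + \min(b,n-b) =: \delta(a,b)$, a quantity involving only $a$, $b$ and $n$. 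Substituting gives \[ S_u = \sum_{(a,b)\in[\![n]\!]^2\setminus\{(0,0)\}} \delta(a,b)^{-2}, \] whose right-hand side is manifestly independent of $u$; hence $Z_u = S_u^{-1}$ is the same for every vertex, and in particular $Z_u = Z_v$ for all $u,v \in V$.

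The only step that is not pure bookkeeping is the per-coordinate claim $\min(|x_u - x_{w(a,b)}|,\, n - |x_u - x_{w(a,b)}|) = \min(a, n-a)$: because of the $\bmod\,n$ in the definition of $w(a,b)$, the raw difference $|x_u - x_{w(a,b)}|$ can be either $a$ or $n-a$ depending on whether a wrap occurred, so one checks in both cases that the unordered pair $\{|x_u - x_{w(a,b)}|,\, n - |x_u - x_{w(a,b)}|\}$ equals $\{a, n-a\}$ and therefore the two minima agree. This is a one-line case analysis; everything else reduces to relabeling the index set of a finite sum. An alternative, essentially equivalent phrasing avoids the explicit reindexing by arguing directly that the shift $\sigma(i,j) = \bigl((i+\delta_x)\bmod n,(j+\delta_y)\bmod n\bigr)$ is a graph automorphism of $T$ (it sends each torus edge to a torus edge) and hence preserves $d$, from which $S_u = S_{\sigma(u)}$ follows by substitution $w \mapsto \sigma(w)$ in the sum.
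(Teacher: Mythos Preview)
Your proof is correct and rests on the same idea as the paper's: the vertex-transitivity (``symmetry'') of the torus $T$ forces the multiset of distances from any vertex to be the same. The paper phrases this by grouping terms into distance shells $P_{ui}=\{w:d_{uw}=i\}$ and asserting $|P_{ui}|=|P_{vi}|$ from the symmetry of $T$, whereas you make the underlying bijection explicit via the coordinate shift and verify directly that $d_{u\,w(a,b)}=\delta(a,b)$; the two arguments are equivalent, with yours being the more detailed unpacking of the one-line symmetry claim.
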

\begin{proof}
	Let $P_{ui}=\{w\in V|d_{uw}=i\}$ for all $1\le i\le n$.
	Figure~\ref{fig:VerticesAtDistI} shows the vertices in $P_{u3}$ in a torus with $n=7$.
	As $T$ is symmetrical for all $u$ and $v$, so $|P_{ui}|=|P_{vi}|$ and, then
	\[Z_u=\left(\sum\limits_{w\in V\setminus\{u\}}d_{uw}^{-2}\right)^{-1}=\left(\sum\limits_{i=1}^{n}|P_{ui}|\cdot i^{-2}\right)^{-1}=\left(\sum\limits_{i=1}^{n}|P_{vi}|\cdot i^{-2}\right)^{-1}=Z_v.\]
\end{proof}
\begin{figure}
	\centering
	\input{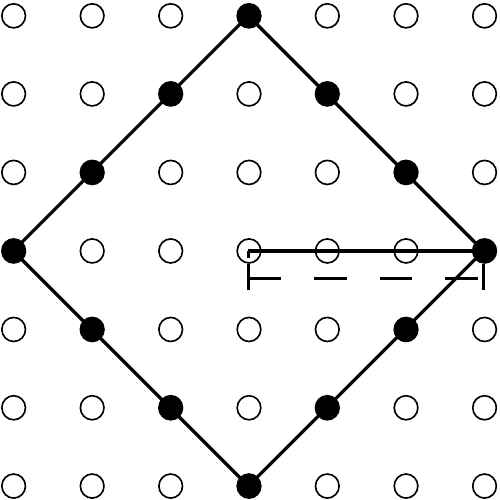_t}
	\caption{Vertices at a distance $i=3<n/2$ from $u$, with $n=7$.}
	\label{fig:VerticesAtDistI}
\end{figure}

Despite that each vertex has its own normalizing factor value, Theorem~\ref{thm:normalizingFactor} shows that all of them are equal to each other.
So, we denote it as $Z$, as shown in Definition~\ref{dfn:normalizingFactor}.
The \UTSW model is defined in Definition~\ref{dfn:UTSWModel}.

\begin{definition}[Normalizing factor]
	The \emph{normalizing factor} is $Z=\left(\sum\limits_{v\in V\setminus\{u\}}d_{uv}^{-2}\right)^{-1}$ for any $u\in V$.
	\label{dfn:normalizingFactor}
\end{definition}

\begin{definition}[\UTSW model]
	The \emph{undirected toroidal small world} (\UTSW) model is a graph $G=(V,E)$ over a two-dimensional undirected torus with size $n\ge3$ and, for all $u\in V$ and a $v\in V\setminus\{u\}$, the edge $\{u,v\}$ is included in $E$ with probability $\Pr\left(C_{uv}\right)=Z\cdot d_{uv}^{-2}$.
	\label{dfn:UTSWModel}
\end{definition}

We assume that $G$ is a \UTSW graph from now on.
The remaining of this section presents some results related to the \UTSW model that are used throughout the paper.

\begin{lemma}
	Let $P_{ui}=\{v\in V|d_{uv}=i\}$ for each $u\in V$ and $1\le i\le n$.
	Then, $|P_{ui}|=4i$ if $i<n/2$, and $|P_{ui}|<4i$ otherwise.
	\label{lem:numberOfVerticesAtDistanceI}
\end{lemma}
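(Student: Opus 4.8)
The plan is to reduce the two-dimensional shell count to a one-dimensional convolution. As already observed in the proof of Theorem~\ref{thm:normalizingFactor}, the torus is vertex-transitive, so $|P_{ui}|$ does not depend on $u$; I therefore fix $u$ and describe any vertex $v$ by its pair of coordinate offsets in $[\![n]\!]$. Writing $\delta(a)=\min(a,\,n-a)$ for the one-dimensional wrap distance on $[\![n]\!]$, the expression for $d_{uv}$ factors as $d_{uv}=\delta(a)+\delta(b)$ for the two offsets $a,b$, so with $m_k=|\{a\in[\![n]\!]:\delta(a)=k\}|$ we obtain $|P_{ui}|=\sum_{k}m_k\,m_{i-k}$, the sum running over all integers $k$ with $0\le k\le i$.

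Next I would record the multiplicities: $m_0=1$; $m_k=2$ for every integer $k$ with $1\le k<n/2$; $m_{n/2}=1$ when $n$ is even; and $m_k=0$ otherwise. For $i<n/2$, every index $k\in\{0,\dots,i\}$ satisfies $k<n/2$ and $i-k<n/2$, so the only degenerate multiplicity that appears is $m_0$, and only at the two endpoints $k=0$ and $k=i$. Splitting off those endpoints gives $|P_{ui}|=2m_0m_i+\sum_{k=1}^{i-1}m_km_{i-k}=2\cdot2+4(i-1)=4i$, using $m_i=2$ (and the middle sum empty when $i=1$).

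For $i\ge n/2$ I would treat $n$ even and $n$ odd separately, in each case bounding the number of indices $k$ for which both $k$ and $i-k$ are attainable values of $\delta$ — wrap-around forces this range to shrink once $i\ge n/2$ — using $m_km_{i-k}\le 4$, and checking that the count lands strictly below $4i$. Concretely: for $n$ even and $i=n/2$ a direct sum gives $1+1+4(n/2-1)=2n-2<2n=4i$; for $n$ even and $n/2<i\le n$ only the $n-i+1$ indices $k\in[\,i-n/2,\,n/2\,]$ contribute, and $n-i+1\le n/2<i$, so $|P_{ui}|\le 4(n-i+1)<4i$; for $n$ odd, $i\ge n/2$ forces $i\ge(n+1)/2$, the admissible $k$ all lie in $\{1,\dots,(n-1)/2\}$ so every product equals $4$ and there are exactly $n-i$ of them, with $4(n-i)<4i$ because $i>n/2$.

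The main obstacle, such as it is, is the bookkeeping in the last case: one must track the interplay between the admissible range of $k$ and the two degenerate multiplicities $m_0$ and (for even $n$) $m_{n/2}$, so that the strict inequality is not lost at the boundary values $i=n/2$ and $i=n$. Everything else is a one-line convolution computation.
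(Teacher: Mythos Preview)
Your proof is correct, and the underlying decomposition is the same as the paper's: both count by fixing one coordinate and summing over the possible values of the other. The paper phrases this as a geometric ``sweep'' along one axis, obtaining $1+\sum_{j=1}^{i-1}2+2+\sum_{j=1}^{i-1}2+1=4i$ directly, while you formalise the same idea via the one-dimensional multiplicities $m_k$ and the convolution $|P_{ui}|=\sum_k m_k m_{i-k}$. These are equivalent computations.

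Where you differ is the case $i\ge n/2$: the paper simply asserts that $4i$ is an upper bound, whereas you carry out the case analysis (even vs.\ odd $n$, boundary value $i=n/2$) needed to verify the \emph{strict} inequality. Your extra care here is justified and does no harm; the paper's version suffices for all later uses of the lemma, which only need $|P_{ui}|\le 4i$.
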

\begin{proof}
	The proof follows by ``sweeping'' in one of the two dimensions of $T$.
	When $i<n/2$,
	\[|P_{ui}|=1+\sum\limits_{j=1}^{i-1}2+2+\sum\limits_{j=1}^{i-1}2+1=4+2\cdot2(i-1)=4i.\]
	When $i\ge n/2$, $4i$ is an upper bound for $|P_{ui}|$.
\end{proof}

\begin{fact}[Harmonic number]
	Let $H_k$ be the $k^\textrm{th}$ \emph{harmonic number}, then $\ln(k+1)<H_k\le\ln k+1$.
	\label{fct:harmonicNumber}
\end{fact}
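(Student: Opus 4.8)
The plan is to bound the partial sums of the harmonic series by comparison with the integral of $1/x$, using only the fact that $1/x$ is strictly decreasing on each unit subinterval.

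First I would establish the lower bound. Since $1/x < 1/j$ for every $x \in (j, j+1)$, integrating gives $\frac{1}{j} > \int_j^{j+1} \frac{dx}{x}$ for each integer $j \ge 1$. Summing this over $j = 1, \ldots, k$ telescopes the integrals into $\int_1^{k+1} \frac{dx}{x} = \ln(k+1)$, so $H_k = \sum_{j=1}^k \frac{1}{j} > \ln(k+1)$, as claimed.

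Next I would handle the upper bound by the same comparison shifted by one interval. For $j \ge 2$ and $x \in (j-1, j)$ we have $1/x > 1/j$, hence $\frac{1}{j} < \int_{j-1}^{j} \frac{dx}{x}$. Summing over $j = 2, \ldots, k$ gives $\sum_{j=2}^k \frac{1}{j} < \int_1^k \frac{dx}{x} = \ln k$, and adding the first term $\frac{1}{1} = 1$ yields $H_k < \ln k + 1$. The one point requiring care is the case $k = 1$: then the sum over $j = 2, \ldots, k$ is empty, so this argument only gives $H_1 = 1 = \ln 1 + 1$; this is precisely why the statement uses the non-strict inequality $\le$ for the upper bound while the lower bound stays strict.

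I do not expect any genuine obstacle here — this is the standard integral-test estimate for the harmonic number — so the only thing to be attentive to is matching the strictness of each inequality to the claim, which the $k=1$ boundary case already pins down.
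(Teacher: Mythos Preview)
Your argument is correct and is the standard integral-comparison proof of these harmonic-number bounds; the handling of the $k=1$ boundary case to explain the non-strict upper inequality is also right. The paper itself does not prove this statement at all---it is recorded as a \emph{Fact} and invoked without justification---so you have simply supplied the classical proof where the authors chose to cite a well-known estimate.
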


\begin{theorem}
	The normalizing factor $Z<(4\ln(n/2))^{-1}$.
	\label{thm:upperBoundOfZ}
\end{theorem}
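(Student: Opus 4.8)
The plan is to bound $Z^{-1}=\sum_{v\in V\setminus\{u\}}d_{uv}^{-2}$ from below by $4\ln(n/2)$ and then take reciprocals. First I would group the vertices by distance from $u$, using the sets $P_{ui}$ of Lemma~\ref{lem:numberOfVerticesAtDistanceI}, to write
\[
Z^{-1}=\sum_{v\in V\setminus\{u\}}d_{uv}^{-2}=\sum_{i=1}^{n}|P_{ui}|\cdot i^{-2}.
\]
Since every term is nonnegative, I can drop all terms with $i\ge n/2$ and keep only the indices $1\le i\le\lceil n/2\rceil-1$, for which Lemma~\ref{lem:numberOfVerticesAtDistanceI} gives the exact count $|P_{ui}|=4i$.

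Next I would evaluate the truncated sum and recognize a harmonic number:
\[
Z^{-1}\ge\sum_{i=1}^{\lceil n/2\rceil-1}4i\cdot i^{-2}=4\sum_{i=1}^{\lceil n/2\rceil-1}\frac{1}{i}=4\,H_{\lceil n/2\rceil-1}.
\]
Then I would apply the lower bound $H_k>\ln(k+1)$ from Fact~\ref{fct:harmonicNumber} with $k=\lceil n/2\rceil-1$, obtaining $H_{\lceil n/2\rceil-1}>\ln(\lceil n/2\rceil)\ge\ln(n/2)$. Chaining these inequalities yields $Z^{-1}>4\ln(n/2)$, and inverting these positive quantities gives $Z<(4\ln(n/2))^{-1}$, as claimed.

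There is no genuinely hard step; the only points requiring care are bookkeeping. One is that the condition $i<n/2$ must be turned into the integer bound $i\le\lceil n/2\rceil-1$ uniformly in the parity of $n$ (it equals $n/2-1$ when $n$ is even and $(n-1)/2$ when $n$ is odd, both of which coincide with $\lceil n/2\rceil-1$), and the assumption $n\ge3$ guarantees this index set is nonempty. The other is that the required strict inequality already comes for free from the strict bound $H_k>\ln(k+1)$ in Fact~\ref{fct:harmonicNumber}, so there is no need to separately account for the discarded terms with $i\ge n/2$.
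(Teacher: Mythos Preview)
Your proof is correct and follows essentially the same approach as the paper: group by distance, truncate at $i\le\lceil n/2\rceil-1$ where Lemma~\ref{lem:numberOfVerticesAtDistanceI} gives $|P_{ui}|=4i$, recognize $4H_{\lceil n/2\rceil-1}$, and finish with Fact~\ref{fct:harmonicNumber} together with $\lceil n/2\rceil\ge n/2$. The only cosmetic difference is that the paper carries the argument on $Z$ directly (so the truncation step is written as a strict inequality), whereas you work with $Z^{-1}$ and obtain strictness from the harmonic-number bound; both are valid.
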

\begin{proof}
	Let $P_{ui}=\{v\in V|d_{uv}=i\}$ for any $u\in V$ and all $1\le i\le n$.
	Given the Definition~\ref{dfn:normalizingFactor}, \[Z=\left(\sum\limits_{v\in V\setminus\{u\}}d_{uv}^{-2}\right)^{-1}=\left(\sum\limits_{i=1}^n|P_{ui}|\cdot i^{-2}\right)^{-1}<\left(\sum\limits_{i=1}^{\lceil n/2\rceil-1}|P_{ui}|\cdot i^{-2}\right)^{-1}=(4H_{\lceil n/2\rceil-1})^{-1},\]
	due to $i\le\lceil n/2\rceil-1<n/2$ and Lemma~\ref{lem:numberOfVerticesAtDistanceI}.
	By Fact~\ref{fct:harmonicNumber} and $\lceil n/2\rceil\ge n/2$, $Z<(4\ln(n/2))^{-1}$.
\end{proof}

\begin{theorem}
	The normalizing factor $Z>(4(\ln n+1))^{-1}$.
	\label{thm:lowerBoundOfZ}
\end{theorem}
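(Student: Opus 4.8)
The plan is to mirror the proof of Theorem~\ref{thm:upperBoundOfZ}, but now use the upper bound $|P_{ui}| \le 4i$ supplied by Lemma~\ref{lem:numberOfVerticesAtDistanceI} to bound the \emph{entire} sum $\sum_{i=1}^n |P_{ui}| \cdot i^{-2}$ from above, instead of throwing away the tail. First I would fix an arbitrary $u \in V$ and group $V \setminus \{u\}$ by distance, writing, as in Definition~\ref{dfn:normalizingFactor},
\[
Z^{-1} = \sum_{v \in V \setminus \{u\}} d_{uv}^{-2} = \sum_{i=1}^{n} |P_{ui}| \cdot i^{-2}.
\]

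Next I would apply Lemma~\ref{lem:numberOfVerticesAtDistanceI}, which gives $|P_{ui}| = 4i$ when $i < n/2$ and $|P_{ui}| < 4i$ when $i \ge n/2$; in particular $|P_{ui}| \le 4i$ for every $1 \le i \le n$, and the bound is strict for $i = n$ since $n \ge n/2$. Hence
\[
Z^{-1} < \sum_{i=1}^{n} 4i \cdot i^{-2} = 4 \sum_{i=1}^{n} i^{-1} = 4 H_n.
\]
Finally I would invoke Fact~\ref{fct:harmonicNumber} to get $H_n \le \ln n + 1$, so that $Z^{-1} < 4 H_n \le 4(\ln n + 1)$, and taking reciprocals yields $Z > (4(\ln n + 1))^{-1}$.

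There is essentially no hard step here; the argument is the dual of Theorem~\ref{thm:upperBoundOfZ}. The only point requiring a moment's care is the strictness claimed in the statement: it suffices to note that some distance class, e.g.\ $P_{un}$, has $|P_{un}| < 4n$ by Lemma~\ref{lem:numberOfVerticesAtDistanceI}, so the termwise estimate $|P_{ui}| \le 4i$ is strict for at least one $i$ and therefore the summed estimate $Z^{-1} < 4H_n$ is strict as well. (If only the non-strict inequality $Z \ge (4(\ln n + 1))^{-1}$ were wanted, the harmonic-number bound alone would already give it.)
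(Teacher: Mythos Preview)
Your proof is correct and follows essentially the same route as the paper: rewrite $Z^{-1}$ as $\sum_{i=1}^n |P_{ui}|\,i^{-2}$, bound each $|P_{ui}|$ by $4i$ via Lemma~\ref{lem:numberOfVerticesAtDistanceI} (with strictness for some $i\ge n/2$), obtain $Z^{-1}<4H_n$, and finish with Fact~\ref{fct:harmonicNumber}. Your explicit remark on why the inequality is strict is a nice addition that the paper leaves implicit.
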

\begin{proof}
	As in the beginning of the Theorem~\ref{thm:upperBoundOfZ} proof and by Lemma~\ref{lem:numberOfVerticesAtDistanceI}, $Z>\left(\sum\limits_{i=1}^n4i\cdot i^{-2}\right)^{-1}=\left(4H_n\right)^{-1}$.
	By Fact~\ref{fct:harmonicNumber}, $Z>(4(\ln n+1))^{-1}$.
\end{proof}

Despite that Theorem~\ref{thm:lowerBoundOfZ} is not used in the rest of the paper, it implies that the upper bound of $Z$, presented in Theorem~\ref{thm:upperBoundOfZ}, is tight.
\section{Toroidal Small World Labeling Problem}
\label{sec:toroidalSmallWorldLabelingProblem}

A \emph{position}, or \emph{label}, $p=(p_1,p_2)$ in a two-dimensional torus of size $n$ is an element of $[\![n]\!]^2$, and $p_i$, for $i\in\{1,2\}$, are the \emph{coordinates} of $p$.

\begin{definition}
	Let $d'_n:[\![n]\!]^2\times[\![n]\!]^2\rightarrow\naturals$ be the function such that $d'_n(x,y)=\sum_{i=1}^2\min(|x_i-y_i|,n-|x_i-y_i|)$, where $x,y\in[\![n]\!]^2$.
	A \emph{two-dimensional toroidal vertex labeling function} of $T$ is a function $\ell_T:V\rightarrow[\![n]\!]^2$ such that $d_{uv}=d'_n(\ell_T(u),\ell_T(v))$ for all $u,v\in V$ and $n=|V|^{1/2}$.
	\label{dfn:2DimensionalToroidalVertexLabelingFunction}
\end{definition}

The function $d'$ is similar to $d$, but $d$ defines the distance between a pair of vertices on $T$ and $d'$ defines the distance between a pair of positions (labels).
The function $\ell_T$ defines a label to each vertex of $T$ such that the distances $d$ and $d'$ between any two vertices of $T$ are equal.
The \emph{toroidal small world labeling problem} consists of finding a two-dimensional toroidal vertex labeling function $\ell_{T'}$ for a \UTSW graph $G$ and a spanning torus $T'$ of size $n$.
Note that $T'$ is a spanning torus of $G$ that may be distinct from the original torus $T$ generated by \UTSW model.
This is an interesting problem because, given a two-dimensional toroidal vertex labeling function $\ell_{T'}$, it is possible to run myopic search for routing messages in $G$.

Note that $G$ has at least one two-dimensional spanning torus, due to the \UTSW definition.
Besides that, the long-range edges creation process in \UTSW may generate others two-dimensional spanning tori, as we explain next.
The original torus $T$ is composed only by local edges, while the others tori are composed by long-range edges also.
The left side of the Figure~\ref{fig:Ambiguity} shows a part of a \UTSW graph.
In this case, each vertex chose a specific vertex to create a long-range edge, represented by the dashed arrows.
This process generates a graph that may have more than one spanning torus.
The Figure~\ref{fig:Ambiguity} highlights one of them in the right side.

\begin{figure}
	\centering
	\includegraphics{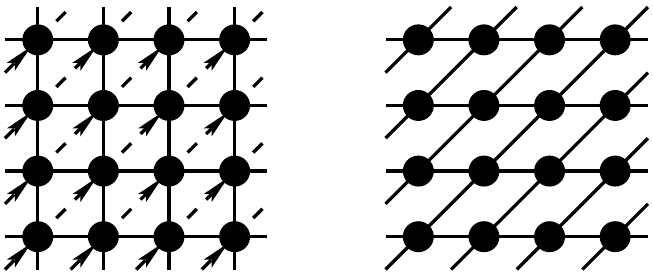}
	\caption{Graph with more than one spanning torus.}
	\label{fig:Ambiguity}
\end{figure}
\section{Cycles of Size Four Outside the Torus}
\label{sec:cyclesOfSize4OutsideTheTorus}

The identification of a spanning two-dimensional torus is an important procedure.
It can be used by a breadth-first search to label the vertices, as we do in Section~\ref{sec:labelingAlgorithm}.
Our approach is to detect cycles of size four composed by distinct vertices, due to a two-dimensional torus being an arrange of them.

A \UTSW graph may have induced four-cycles composed by some long-range edges, instead of only local edges.
The existence of this type of cycles is a problem in the torus identification process, if one uses the approach of searching for induced four-cycles.
For any $u\in V$, there are only four cycles of size four rooted in $u$ in the torus $T$.
Besides that, the \UTSW process may create long-range edges such that more induced four-cycles rooted in $u$ appear in $G$.
Despite this, we show in Theorem~\ref{thm:upperBoundOfEu} that the probability of the existence of four-cycles composed by at least one long-range edge is small, and this is the main result of this section.

We aim to bound the probability of the event that at least one four-cycle, rooted in $u$, is in $G$ but not is in $T$.
Such event, turns out to be a union of nine other events.
Lemmas~\ref{lem:E1uProbability} to \ref{lem:E9uProbability} show upper bounds on the probabilities of each of these nine events.
Lemmas~\ref{lem:sumOfDistances1} and \ref{lem:sumOfDistances2} are technical results and, together with the Fact~\ref{fct:RiemannZetaFunctionOf3}, are used to prove the other results.

\begin{lemma}
	$\sum\limits_{v\in V\setminus\{u\}}d_{uv}^{-2}<4(\ln n+1)$ for all $u\in V$.
	\label{lem:sumOfDistances1}
\end{lemma}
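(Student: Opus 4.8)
The plan is to group the vertices of $V \setminus \{u\}$ by their distance from $u$ on the torus $T$ and bound the contribution of each distance class. Using the sets $P_{ui} = \{v \in V \mid d_{uv} = i\}$ from Lemma~\ref{lem:numberOfVerticesAtDistanceI}, I would rewrite the sum as $\sum_{v \in V \setminus \{u\}} d_{uv}^{-2} = \sum_{i=1}^{n} |P_{ui}| \cdot i^{-2}$, since every vertex other than $u$ lies at some distance between $1$ and $n$ (the diameter of an $n \times n$ torus is $n$ when $n$ is even, and $n-1$ when $n$ is odd, so summing up to $n$ is safe and only adds a harmless zero term).

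Next I would apply the bound $|P_{ui}| \le 4i$ from Lemma~\ref{lem:numberOfVerticesAtDistanceI}, which holds for all $1 \le i \le n$ (with equality when $i < n/2$). This gives
\[
\sum_{v \in V \setminus \{u\}} d_{uv}^{-2} = \sum_{i=1}^{n} |P_{ui}| \cdot i^{-2} \le \sum_{i=1}^{n} 4i \cdot i^{-2} = 4 \sum_{i=1}^{n} \frac{1}{i} = 4 H_n.
\]
Then I would invoke Fact~\ref{fct:harmonicNumber}, namely $H_n \le \ln n + 1$, to conclude $\sum_{v \in V \setminus \{u\}} d_{uv}^{-2} \le 4(\ln n + 1)$. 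The strict inequality claimed in the statement comes from observing that the bound $|P_{ui}| \le 4i$ is \emph{strict} for $i \ge n/2$ (indeed $|P_{un}|$ is either $0$ or very small, and $P_{ui}$ for large $i$ is a proper subset of the "diamond" of size $4i$), so at least one term in the chain is a strict inequality, yielding $\sum_{v \in V \setminus \{u\}} d_{uv}^{-2} < 4H_n \le 4(\ln n + 1)$. Since the torus is vertex-transitive, this holds for every $u \in V$.

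I do not anticipate a serious obstacle here; this is essentially the same computation that underlies Theorem~\ref{thm:lowerBoundOfZ} (which shows $Z > (4(\ln n+1))^{-1}$, i.e. $Z^{-1} < 4(\ln n + 1)$, and $Z^{-1}$ is exactly this sum). The only point requiring a little care is justifying \emph{strictness} rather than just $\le$: one must confirm that the summation range genuinely includes an index $i$ with $|P_{ui}| < 4i$. For $n \ge 3$ this is clear because the maximum distance is at most $n-1 < n$ (so the $i=n$ term contributes $0 < 4n \cdot n^{-2}$), or alternatively because $H_n < H_n + (\text{positive slack})$; either way strictness is immediate and the bound $4(\ln n + 1)$ follows.
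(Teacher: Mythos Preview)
Your proposal is correct and follows essentially the same approach as the paper's proof: group vertices by distance, apply Lemma~\ref{lem:numberOfVerticesAtDistanceI} to bound $|P_{ui}|$ by $4i$ (with strictness for $i \ge n/2$), collapse to $4H_n$, and finish with Fact~\ref{fct:harmonicNumber}. The paper compresses this to two lines, but the logic is identical; your extra care about where the strict inequality enters is a harmless elaboration of the same argument.
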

\begin{proof}
	Grouping the terms of the sum by their values and using Lemma~\ref{lem:numberOfVerticesAtDistanceI}, $\sum\limits_{v\in V\setminus\{u\}}d_{uv}^{-2}<\sum\limits_{i=1}^n4i\cdot i^{-2}$.
	By Fact~\ref{fct:harmonicNumber}, $\sum\limits_{v\in V\setminus\{u\}}d_{uv}^{-2}<4(\ln n+1)$.
\end{proof}

\begin{fact}[Riemann zeta function]
	$\zeta(3)=\sum\limits_{i=1}^{\infty}i^{-3}<1.20206,$ where $\zeta$ is the \emph{Riemann zeta function}.
	\label{fct:RiemannZetaFunctionOf3}
\end{fact}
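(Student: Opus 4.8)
The plan is to prove the bound by a completely elementary argument: split Apéry's constant $\zeta(3)$ into a finite head and an infinite tail, evaluate the head by a finite computation, and control the tail by a telescoping comparison.

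First I would write $\zeta(3)=S_N+T_N$, where $S_N=\sum_{i=1}^{N}i^{-3}$ is a partial sum and $T_N=\sum_{i=N+1}^{\infty}i^{-3}$ is the remaining tail, for a fixed integer $N$ to be chosen. For the tail I would use that $(i-1)i(i+1)=i^3-i<i^3$ for every $i\ge 2$, hence $i^{-3}<\frac{1}{(i-1)i(i+1)}$, and that $\frac{1}{(i-1)i(i+1)}=\tfrac12\bigl(\frac{1}{(i-1)i}-\frac{1}{i(i+1)}\bigr)$ telescopes. Summing from $i=N+1$ yields $T_N<\frac{1}{2N(N+1)}$. It then remains to pick $N$ large enough that $S_N+\frac{1}{2N(N+1)}<1.20206$ and to verify this inequality, which is a finite arithmetic check: $N=20$ already suffices, since then $S_{20}=1.2008\ldots$ and $\frac{1}{2\cdot 20\cdot 21}=\frac{1}{840}<0.001191$, whose sum is strictly below $1.20206$; any larger $N$ leaves even more room.

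The only delicate point is that the claimed bound is numerically tight: $1.20206$ exceeds $\zeta(3)=1.20205\ldots$ by only about $3\times 10^{-6}$, so the naive tail estimate $T_N<\int_N^{\infty}x^{-3}\,dx=\frac{1}{2N^2}$ overshoots the true tail by roughly $\frac{1}{2N^3}$ and would force $N$ into the fifties before the inequality becomes visible. The sharper telescoping estimate above (equivalently, the midpoint comparison $T_N<\int_{N+1/2}^{\infty}x^{-3}\,dx$) cuts this error down to $O(N^{-4})$ and keeps $N$ small. There is no conceptual obstacle here; the ``hard part'' is merely bookkeeping the finitely many decimals carefully enough that the strict inequality is manifest.

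If one prefers to avoid carrying twenty terms, the same bound follows from a rapidly convergent representation such as $\zeta(3)=\tfrac{5}{2}\sum_{n\ge 1}\frac{(-1)^{n-1}}{n^{3}\binom{2n}{n}}$, whose terms strictly decrease in modulus (the ratio of consecutive moduli is $\frac{n^3}{(n+1)(2n+1)(2n+2)}<1$); truncating after the seventh term already gives a value lying strictly between $\zeta(3)$ and $1.20206$ by the alternating-series estimate. In a write-up one could equally well just cite the standard decimal expansion of Apéry's constant, but the telescoping argument above makes the statement self-contained.
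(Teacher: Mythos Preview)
Your argument is correct and self-contained: the telescoping bound $T_N<\frac{1}{2N(N+1)}$ is sharp enough that $N=20$ (or a bit larger, to leave visible slack) makes the strict inequality $S_N+\frac{1}{2N(N+1)}<1.20206$ a finite arithmetic check, and your remark about the alternating central-binomial series is a valid alternative.

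The paper, however, does not prove this statement at all. It is stated as a \emph{Fact} (alongside the harmonic-number bounds), i.e.\ a well-known numerical estimate quoted without argument; the authors simply use $\zeta(3)<1.20206$ as an input to Lemmas~5 through~13. So your write-up is strictly more than what the paper provides: you supply an actual proof where the paper merely cites the value of Ap\'ery's constant. In a version meant to match the paper you could simply state the bound and move on; if you keep your proof, the telescoping estimate is the cleaner of your two options and would fit naturally as a one-paragraph justification.
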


\begin{lemma}
	$\sum\limits_{w\in V\setminus\{u,v\}}d_{uw}^{-2}d_{wv}^{-2}<12\zeta(3)+4$ for all $u,v\in V$ and $u\ne v$.
	\label{lem:sumOfDistances2}
\end{lemma}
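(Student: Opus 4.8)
The plan is to bound the double-weighted sum by splitting the index set $V\setminus\{u,v\}$ according to which of the two endpoints $u,v$ is closer to the summation vertex $w$, and then to use the triangle inequality for the torus metric $d$ together with the shell sizes from Lemma~\ref{lem:numberOfVerticesAtDistanceI} and the value of $\zeta(3)$ from Fact~\ref{fct:RiemannZetaFunctionOf3}. Write the sum as $\Sigma_{\le}+\Sigma_{>}$, where $\Sigma_{\le}$ ranges over those $w\in V\setminus\{u,v\}$ with $d_{uw}\le d_{wv}$ and $\Sigma_{>}$ over those with $d_{uw}>d_{wv}$; these two sets partition $V\setminus\{u,v\}$.

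For a $w$ counted in $\Sigma_{\le}$ we have $d_{wv}\ge d_{uw}\ge 1$, hence $d_{uw}^{-2}d_{wv}^{-2}\le d_{uw}^{-4}$, so $\Sigma_{\le}\le\sum_{w\in V\setminus\{u\}}d_{uw}^{-4}$. Grouping this last sum by the value $i=d_{uw}$ and using $|P_{ui}|\le 4i$ from Lemma~\ref{lem:numberOfVerticesAtDistanceI}, it is at most $\sum_{i=1}^{n}4i\cdot i^{-4}=4\sum_{i=1}^{n}i^{-3}$, which is below $4\zeta(3)$ by Fact~\ref{fct:RiemannZetaFunctionOf3}. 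By the symmetric argument, exchanging the roles of $u$ and $v$, one gets $\Sigma_{>}<4\zeta(3)$ as well. Adding the two estimates gives $\sum_{w\in V\setminus\{u,v\}}d_{uw}^{-2}d_{wv}^{-2}<8\zeta(3)$, which is comfortably below $12\zeta(3)+4$.

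The only point that needs care is that one cannot simply pull a factor $d_{uv}^{-2}$ out of the sum (via $\max(d_{uw},d_{wv})\ge d_{uv}/2$) and then estimate the remaining $\sum_w d_{uw}^{-2}$ by Lemma~\ref{lem:sumOfDistances1}: that would leave a spurious $\Theta(\log n\cdot d_{uv}^{-2})$ term, which is unbounded precisely when $u$ and $v$ are adjacent, i.e.\ $d_{uv}=1$, the hardest configuration. The split by $\min(d_{uw},d_{wv})$ avoids this because it never separates the two inverse‑square peaks of the summand around $u$ and around $v$; in the regime where those peaks overlap (small $d_{uv}$) the bound $d_{uw}^{-2}d_{wv}^{-2}\le\min(d_{uw},d_{wv})^{-4}$ is already tight up to constants, and summing $\min(d_{uw},d_{wv})^{-4}$ over the torus is a convergent $\zeta(3)$-type series. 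The looser constant $12\zeta(3)+4$ stated in the lemma therefore leaves ample slack; it can equally be obtained by a coarser three-region decomposition of $V\setminus\{u,v\}$, but the two-way split above is the cleanest route.
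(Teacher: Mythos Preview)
Your argument is correct and in fact yields the sharper bound $8\zeta(3)$, well below the claimed $12\zeta(3)+4$. The two-way split on $\min(d_{uw},d_{wv})$ together with $d_{uw}^{-2}d_{wv}^{-2}\le\min(d_{uw},d_{wv})^{-4}$ and the shell bound $|P_{ui}|\le 4i$ from Lemma~\ref{lem:numberOfVerticesAtDistanceI} is all that is needed; the triangle inequality you mention in your plan is never actually invoked, and that is fine.

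The paper proceeds differently. It fixes $u$ as the reference point and decomposes $V\setminus\{u,v\}$ into three shells according to whether $d_{uw}<d_{uv}$, $d_{uw}=d_{uv}$, or $d_{uw}>d_{uv}$. In the first and third shells it applies the triangle inequality to bound $d_{wv}$ from below by $|d_{uv}-d_{uw}|$, reindexes, and reduces each piece to a partial $\zeta(3)$ series; the equality shell contributes the additive constant $4$. This gives $8\zeta(3)+4+4\zeta(3)=12\zeta(3)+4$, exactly the constant stated. Your approach is more symmetric and more economical: by comparing $d_{uw}$ with $d_{wv}$ instead of with $d_{uv}$ you avoid the triangle inequality altogether, eliminate the middle shell and its constant $4$, and land on $8\zeta(3)$. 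The paper's route has the minor advantage of making the dependence on $d_{uv}$ explicit in the intermediate steps, but that information is not used downstream, so your version is simply cleaner.
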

\begin{proof}
	The distance $d_{uv}\le n$ because $T$ is a two-dimensional torus.
	Thus, the sum can be split into three sums:
	(i) $\sum\limits_{i=1}^{d_{uv}-1}\sum\limits_{\substack{w\in V\setminus\{u,v\}\\d_{uw}=i}}d_{uw}^{-2}d_{wv}^{-2}$, (ii) $\sum\limits_{\substack{w\in V\setminus\{u,v\}\\d_{uw}=d_{uv}}}d_{uw}^{-2}d_{wv}^{-2}$ and (iii) $\sum\limits_{i=d_{uv}+1}^n\sum\limits_{\substack{w\in V\setminus\{u,v\}\\d_{uw}=i}}d_{uw}^{-2}d_{wv}^{-2}$.

	For (i), $d_{uw}<d_{uv}$ for all terms.
	This fact allows the using of the triangle inequality such that $d_{uv}-d_{uw}$ is positive for all $w\in V\setminus\{u,v\}$.
	So, this term is at most $\sum\limits_{i=1}^{d_{uv}-1}\sum\limits_{\substack{w\in V\setminus\{u,v\}\\d_{uw}=i}}d_{uw}^{-2}(d_{uv}-d_{uw})^{-2}$.
	Since $d_{uw}=i$ and by Lemma~\ref{lem:numberOfVerticesAtDistanceI}, \begin{eqnarray*}
		\sum\limits_{i=1}^{d_{uv}-1}\sum\limits_{\substack{w\in V\setminus\{u,v\}\\d_{uw}=i}}d_{uw}^{-2}d_{wv}^{-2}
		    &\le& 4\sum\limits_{i=1}^{d_{uv}-1}i^{-1}(d_{uv}-i)^{-2}\\
		    &=& 4\left(\sum\limits_{i=1}^{\left\lfloor\frac{d_{uv}-1}{2}\right\rfloor}i^{-1}(d_{uv}-i)^{-2}+\sum\limits_{i=1}^{\left\lceil\frac{d_{uv}-1}{2}\right\rceil}(d_{uv}-i)^{-1} i^{-2}\right)\\
		    &\le& 4\left(\sum\limits_{i=1}^{\left\lfloor\frac{d_{uv}-1}{2}\right\rfloor}i^{-1} i^{-2}+\sum\limits_{i=1}^{\left\lceil\frac{d_{uv}-1}{2}\right\rceil}i^{-1} i^{-2}\right),
	\end{eqnarray*}
	where the last inequality holds because $d_{uv}-i\ge i$ for all terms in both sums.
	As all sums terms are positive for $i\ge1$, then $\sum\limits_{i=1}^{d_{uv}-1}\sum\limits_{\substack{w\in V\setminus\{u,v\}\\d_{uw}=i}}d_{uw}^{-2} d_{wv}^{-2}<8\sum\limits_{i=1}^{\infty}i^{-3}=8\zeta(3)$, by Fact~\ref{fct:RiemannZetaFunctionOf3}.

	For (ii), as $d_{wv}\ge1$, $d_{uw}=d_{uv}$, $d_{uv}\ge1$ and Lemma~\ref{lem:numberOfVerticesAtDistanceI},
	\[\sum\limits_{\substack{w\in V\setminus\{u,v\}\\d_{uw}=d_{uv}}}d_{uw}^{-2} d_{wv}^{-2}\le\sum\limits_{\substack{w\in V\setminus\{u,v\}\\d_{uw}=d_{uv}}}d_{uw}^{-2}\le4d_{uv}\cdot d_{uv}^{-2}\le4.\]

	For (iii), $\sum\limits_{i=d_{uv}+1}^n\sum\limits_{\substack{w\in V\setminus\{u,v\}\\d_{uw}=i}}d_{uw}^{-2} d_{wv}^{-2}\le4\sum\limits_{i=d_{uv}+1}^ni^{-1}(i-d_{uv})^{-2}$ holds by a similar way of (i).
	The latter is equal to $4\sum\limits_{i=1}^{n-d_{uv}}(d_{uv}+i)^{-1} i^{-2} < 4\sum\limits_{i=1}^{n-d_{uv}}i^{-1} i^{-2}<4\sum\limits_{i=1}^{\infty}i^{-3}=4\zeta(3)$, by Fact~\ref{fct:RiemannZetaFunctionOf3}.
	Given the upper bounds on (i), (ii) and (iii), therefore $\sum\limits_{w\in V\setminus\{u,v\}}d_{uw}^{-2} d_{wv}^{-2}<12\zeta(3)+4$.
\end{proof}

Given $u\in V$, let $E_u$ be the event of the existence of at least one four-cycle, rooted in $u$, that is in $G$ but not in $T$.
Note that this event is equal to the event of \UTSW creating long-range edges so that exists at least one four-cycle rooted in $u$ that belongs to $G$ composed by \emph{at least one long-range edge}.
Let $s$ be a local edge and $w$ be a long-range edge in a sequence of edges of a four-cycle rooted in $u$.
Note that, there are $2^4$ possible combinations of local and long-range edges and, one of them is certainly not in $E_u$, which is $(s,s,s,s)$, \textit{i.e.} the four-cycle with only local edges.
The others $15$ combinations can be grouped in nine sub-events of $E_u$, where each group leads to a different analysis, but the events belonging to a given group have the same analysis (which motivates the grouping).
Then, $E_u=\bigcup_{i=1}^9E_{iu}$ and, each $E_{iu}$ is defined by the following list.
\begin{itemize}
	\item $E_{1u}$: existence of at least one four-cycle, rooted in $u$, that belongs to $G$ with edge sequence $(s,s,s,w)$ or $(w,s,s,s)$;
	\item $E_{2u}$: existence of at least one four-cycle, rooted in $u$, that belongs to $G$ with edge sequence $(s,s,w,s)$ or $(s,w,s,s)$;
	\item $E_{3u}$: existence of at least one four-cycle, rooted in $u$, that belongs to $G$ with edge sequence $(s,s,w,w)$ or $(w,w,s,s)$;
	\item $E_{4u}$: existence of at least one four-cycle, rooted in $u$, that belongs to $G$ with edge sequence $(s,w,s,w)$ or $(w,s,w,s)$;
	\item $E_{5u}$: existence of at least one four-cycle, rooted in $u$, that belongs to $G$ with edge sequence $(s,w,w,s)$;
	\item $E_{6u}$: existence of at least one four-cycle, rooted in $u$, that belongs to $G$ with edge sequence $(s,w,w,w)$ or $(w,w,w,s)$;
	\item $E_{7u}$: existence of at least one four-cycle, rooted in $u$, that belongs to $G$ with edge sequence $(w,s,s,w)$;
	\item $E_{8u}$: existence of at least one four-cycle, rooted in $u$, that belongs to $G$ with edge sequence $(w,s,w,w)$ or $(w,w,s,w)$;
	\item $E_{9u}$: existence of at least one four-cycle, rooted in $u$, that belongs to $G$ with edge sequence $(w,w,w,w)$.
\end{itemize}

Lemmas~\ref{lem:E1uProbability} to \ref{lem:E9uProbability} show upper bounds on the probabilities of each of these events.
Finally, Theorem~\ref{thm:upperBoundOfEu} bounds the probability of $E_u$.
Recall that $C_{uv}$ is the event of vertex $u\in V$ choosing the vertex $v\in V\setminus\{u\}$ to create a long-range edge.

\begin{lemma}
	$\Pr(E_{1u})<2/3\ln^{-1}(n/2)$.
	\label{lem:E1uProbability}
\end{lemma}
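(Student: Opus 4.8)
The plan is to unpack the definition of $E_{1u}$ and then apply a small union bound fed into Theorem~\ref{thm:upperBoundOfZ}. A four-cycle rooted in $u$ with edge sequence $(s,s,s,w)$ is a cycle $u,a,b,c,u$ in which $\{u,a\},\{a,b\},\{b,c\}$ are local and $\{c,u\}$ is a long-range edge; the sequence $(w,s,s,s)$ is the same family of cycles traversed in the opposite direction, so $E_{1u}$ is exactly the event that some vertex $c$ is joined to $u$ by a path of three local edges through two further vertices (all four distinct) and $\{u,c\}$ is a long-range edge. The three local edges form a self-avoiding walk of length three on $T$ from $u$, so $d_{uc}\le 3$; and because the vertices at distance $1$ from $u$ are precisely its local neighbours (Lemma~\ref{lem:numberOfVerticesAtDistanceI} gives $|P_{u1}|=4$), a vertex at distance $1$ cannot be joined to $u$ by a long-range edge. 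Hence every $c$ that matters for $E_{1u}$ has $d_{uc}\in\{2,3\}$.

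The geometric point is that, for $n\ge 6$, a self-avoiding walk of length three on $T$ cannot terminate at distance $2$ from its start: the net displacement of three unit steps has $\ell_1$-length $1$ or $3$, and on a torus of size at least $6$ no wrap-around shortcut turns such a displacement into a distance-$2$ vertex. So for $n\ge 6$ every contributing $c$ lies in $P_{u3}$, and $|P_{u3}|\le 12$ by Lemma~\ref{lem:numberOfVerticesAtDistanceI}. Since the undirected long-range edge $\{u,c\}$ is present only if $u$ chooses $c$ or $c$ chooses $u$, a union bound over these at most twelve vertices gives
\[
\Pr(E_{1u})\le\sum_{c\in P_{u3}}\big(\Pr(C_{uc})+\Pr(C_{cu})\big)=\sum_{c\in P_{u3}}2Z\,d_{uc}^{-2}\le 12\cdot\frac{2Z}{9}=\frac{8Z}{3},
\]
and $Z<(4\ln(n/2))^{-1}$ from Theorem~\ref{thm:upperBoundOfZ} yields $\Pr(E_{1u})<\frac{8}{3}\cdot\frac{1}{4\ln(n/2)}=\frac{2}{3\ln(n/2)}$. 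Note the constants are essentially tight: the twelve distance-$3$ vertices and the factor $2$ for the two possible creators of the edge together consume exactly the $8/3$ that the bound on $Z$ permits.

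What remains is the small cases $n\in\{3,4,5\}$, and this is where I expect the only friction. For $n=3$ the target $\frac{2}{3\ln(n/2)}$ exceeds $1$, so the claim is vacuous; for $n=4$ the displacement analysis still forces $d_{uc}=3$, there are only four such vertices, and the same union bound lands far below the required value. For $n=5$ a distance-$2$ endpoint is possible, but counting the at most four distance-$2$ and eight distance-$3$ candidates and using the sharper estimate $Z<(4H_{\lceil n/2\rceil-1})^{-1}$ proved inside Theorem~\ref{thm:upperBoundOfZ} (rather than its logarithmic relaxation) still closes the gap. Thus the main obstacle is not a single deep step but making the statement about length-three walks precise and handling the wrap-around corner cases uniformly; after that the bound is a one-line union bound over $P_{u3}$ combined with Theorem~\ref{thm:upperBoundOfZ}.
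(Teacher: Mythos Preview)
Your proof is correct and follows essentially the same approach as the paper: identify the endpoint of the three-local-edge path as a vertex at distance $3$ from $u$, union bound over the at most $12$ such vertices and over the two creation events $C_{uc},C_{cu}$ to obtain $\Pr(E_{1u})\le 8Z/3$, and then apply Theorem~\ref{thm:upperBoundOfZ}. The paper is terser---it simply sets $A=\{a:d_{ua}=3\}$ and writes $\Pr(E_{1u})=\Pr\bigl(\bigcup_{a\in A}(C_{ua}\cup C_{au})\bigr)$ without the parity/wrap-around justification or the small-$n$ case analysis you supply.
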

\begin{proof}
	Let $A=\{a\in V|d_{ua}=3\}$.
	Then, $\Pr(E_{1u})=\Pr\left(\bigcup\limits_{a\in A}(C_{ua}\cup C_{au})\right)$.
	Using union bound, Definition~\ref{dfn:UTSWModel}, $d_{ua}=d_{au}=3$ and $|A|\le4 d_{ua}=12$, by Lemma~\ref{lem:numberOfVerticesAtDistanceI}, then $\Pr(E_{1u})\le8/3Z$. By Theorem~\ref{thm:upperBoundOfZ}, $\Pr(E_{1u})<2/3\ln^{-1}(n/2)$.
\end{proof}

\begin{lemma}
	$\Pr(E_{2u})<64/9\ln^{-1}(n/2)$.
	\label{lem:E2uProbability}
\end{lemma}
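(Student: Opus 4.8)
The plan is to follow the template of the proof of Lemma~\ref{lem:E1uProbability}: identify the unique long-range edge that the edge pattern forces, describe the small set of vertex pairs it can connect, and close with a union bound and Theorem~\ref{thm:upperBoundOfZ}. It suffices to treat the pattern $(s,s,w,s)$, since $(s,w,s,s)$ describes the same four-cycles traversed in the opposite direction.

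Write a four-cycle witnessing $E_{2u}$ as $u,a,b,c,u$ with edges $\{u,a\},\{a,b\},\{b,c\},\{c,u\}$; the pattern $(s,s,w,s)$ makes $\{u,a\},\{a,b\},\{c,u\}$ local and $\{b,c\}$ the long-range edge. Then $d_{ua}=d_{uc}=1$, so $a,c\in P_{u1}$, and $d_{ab}=1$, so $d_{ub}\le 2$; since $T$ contains no triangle (for $n\ge 4$), in fact $d_{ub}=2$, so $b\in P_{u2}$. Hence, if $E_{2u}$ occurs there exist $c\in P_{u1}$ and $b\in P_{u2}$ such that the long-range edge $\{b,c\}$ is present, i.e. $C_{bc}$ or $C_{cb}$ occurs, giving
\[
 E_{2u}\subseteq\bigcup_{c\in P_{u1}}\ \bigcup_{b\in P_{u2}}\bigl(C_{bc}\cup C_{cb}\bigr).
\]

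Applying the union bound, Definition~\ref{dfn:UTSWModel}, and $d_{bc}=d_{cb}$ yields $\Pr(E_{2u})\le 2Z\sum_{c\in P_{u1}}\sum_{b\in P_{u2}}d_{bc}^{-2}$, and the remaining step is to bound this double sum. By Lemma~\ref{lem:numberOfVerticesAtDistanceI}, $|P_{u1}|=4$ and $|P_{u2}|=8$. For a fixed $c\in P_{u1}$ the triangle inequality gives $1\le d_{bc}\le d_{bu}+d_{uc}=3$ for every $b\in P_{u2}$, and a direct check on the torus shows that exactly three vertices of $P_{u2}$ — the neighbours of $c$ other than $u$ — lie at distance $1$ from $c$ while the other five lie at distance $3$ (for $n\ge 6$). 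Thus $\sum_{b\in P_{u2}}d_{bc}^{-2}=3+\tfrac{5}{9}=\tfrac{32}{9}$, the double sum equals $\tfrac{128}{9}$, and $\Pr(E_{2u})\le \tfrac{256}{9}Z<\tfrac{64}{9}\ln^{-1}(n/2)$ by Theorem~\ref{thm:upperBoundOfZ}. The small values of $n$ where this distance pattern degenerates ($|P_{u2}|<8$, or a distant pair collapsing to distance $2$) are handled by the trivial bound $\Pr(E_{2u})\le 1$.

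The only genuine work is the geometric bookkeeping in that last step — tabulating the $P_{u1}$-to-$P_{u2}$ distances on the torus and tracking the small-$n$ degeneracies. Everything else is the same union-bound skeleton as in Lemma~\ref{lem:E1uProbability}. I would also remark that a distance-$1$ pair $\{b,c\}$ can never carry a long-range edge, so such pairs may in principle be dropped from the union; retaining them merely loosens the constant and is exactly what produces the stated $\tfrac{64}{9}$.
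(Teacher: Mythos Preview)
Your proof is correct and follows essentially the same route as the paper: identify that the single long-range edge in the pattern $(s,s,w,s)$ connects a vertex at torus distance $1$ from $u$ to one at distance $2$, union-bound over $P_{u1}\times P_{u2}$, tabulate that each distance-$1$ vertex sees three of the eight $P_{u2}$ vertices at distance $1$ and five at distance $3$, arrive at $\tfrac{256}{9}Z$, and apply Theorem~\ref{thm:upperBoundOfZ}. Your explicit handling of the small-$n$ degeneracies via the trivial bound is a nice touch; the paper simply asserts the $3$-and-$5$ split without commenting on when it holds.
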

\begin{proof}
	Let $A=\{a\in V|d_{ua}=1\}$ and $B=\{b\in V|d_{ub}=2\}$.
	So $\Pr(E_{2u})=\Pr\left(\bigcup\limits_{a\in A}\bigcup\limits_{b\in B}(C_{ab}\cup C_{ba})\right)$.
	Using union bound, Definition~\ref{dfn:UTSWModel} and  $d_{ab}=d_{ba}$, then $\Pr(E_{2u}) \le 2Z\sum\limits_{a\in A}\sum\limits_{b\in B}d_{ab}^{-2}$.
	By Lemma~\ref{lem:numberOfVerticesAtDistanceI}, $|B|\le8$.
	When $|B|=8$, $d_{ab}=3$ occurs five times and $d_{ab}=1$ occurs three times in the last sum.
	So $\Pr(E_{2u})\le256/9Z$ because $|A|=4$.
	By Theorem~\ref{thm:upperBoundOfZ}, $\Pr(E_{2u})<64/9\ln^{-1}(n/2)$.
\end{proof}

\begin{lemma}
	$\Pr(E_{3u})<(6\zeta(3)+3/4)\ln^{-2}(n/2)$.
	\label{lem:E3uProbability}
\end{lemma}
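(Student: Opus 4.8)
The plan is to reduce $E_{3u}$ to a statement about the long-range edges incident to a single vertex, bound that by a union bound exploiting the mutual independence of the choices made by distinct vertices, and then control the resulting double sum with the counting Lemma~\ref{lem:numberOfVerticesAtDistanceI} and the zeta estimate of Fact~\ref{fct:RiemannZetaFunctionOf3}, finishing with the bound on $Z$ from Theorem~\ref{thm:upperBoundOfZ}.

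First I would unpack the combinatorics. A four-cycle rooted in $u$ with edge sequence $(s,s,w,w)$ is a cycle $u,a,b,c,u$ whose first two edges $\{u,a\},\{a,b\}$ are local and whose last two edges $\{b,c\},\{c,u\}$ are long-range; the two local steps force $d_{ub}=2$, i.e.\ $b\in P_{u2}$, and the intermediate vertex $a$ is irrelevant to whether the long-range edges exist. Reading the same cycle backwards turns $(s,s,w,w)$ into $(w,w,s,s)$, so $E_{3u}$ is exactly the event that there exist $b\in P_{u2}$ and $c\in V\setminus\{u,b\}$ with both $\{u,c\}$ and $\{b,c\}$ long-range. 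Since a long-range edge joins vertices at distance at least $2$ (a distance-$1$ pair is already a torus edge), we may moreover restrict to $c$ with $d_{uc}\ge 2$ and $d_{bc}\ge 2$. As $\{x,y\}$ is long-range only if $C_{xy}$ or $C_{yx}$ occurs, and $|P_{u2}|\le 8$ by Lemma~\ref{lem:numberOfVerticesAtDistanceI}, this gives
\[E_{3u}\ \subseteq\ \bigcup_{b\in P_{u2}}\ \bigcup_{\substack{c\in V\setminus\{u,b\}\\ d_{uc}\ge 2,\ d_{bc}\ge 2}}\bigl((C_{uc}\cup C_{cu})\cap(C_{bc}\cup C_{cb})\bigr).\]
Next comes the union bound. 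For fixed $b,c$, expanding the product of the two unions produces four intersections; the term $C_{cu}\cap C_{cb}$ is impossible, since $c$ makes a single choice and $u\ne b$, while each of the other three pairs involves two distinct vertices as ``choosers'' and is therefore an intersection of independent events, each of probability $Z^2 d_{uc}^{-2}d_{bc}^{-2}$ by Definition~\ref{dfn:UTSWModel} (using $d_{uc}=d_{cu}$, $d_{bc}=d_{cb}$). Hence $\Pr\bigl((C_{uc}\cup C_{cu})\cap(C_{bc}\cup C_{cb})\bigr)\le 3Z^2 d_{uc}^{-2}d_{bc}^{-2}$, and summing over $b$ and $c$,
\[\Pr(E_{3u})\ \le\ 3Z^2\sum_{b\in P_{u2}}\ \sum_{\substack{c\in V\setminus\{u,b\}\\ d_{uc}\ge2,\ d_{bc}\ge2}}d_{uc}^{-2}d_{bc}^{-2};\]
write $D$ for the double sum on the right-hand side.

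The core of the argument is then to show $D<32\zeta(3)+4$; together with $Z<(4\ln(n/2))^{-1}$ (Theorem~\ref{thm:upperBoundOfZ}) this yields $\Pr(E_{3u})<3(32\zeta(3)+4)/(16\ln^2(n/2))=(6\zeta(3)+3/4)\ln^{-2}(n/2)$, as claimed. The naive route of applying Lemma~\ref{lem:sumOfDistances2} separately for each of the (at most) eight vertices $b$ is far too lossy — it would only give something of order $8(12\zeta(3)+4)$, hence essentially $18\zeta(3)$ after the division, rather than $6\zeta(3)$. Instead I would rewrite $D=\sum_{c:\,d_{uc}\ge2}d_{uc}^{-2}\bigl(\sum_{b\in P_{u2},\,b\ne c,\,d_{bc}\ge2}d_{bc}^{-2}\bigr)$ and split the outer sum by $d_{uc}$: for $d_{uc}\in\{2,3\}$ the inner sum has at most $|P_{u2}|\le 8$ terms, each at most $1/4$, so these two shells contribute only a fixed constant (at most $7/2+8/3$); for $d_{uc}=i\ge 4$ the triangle inequality gives $d_{bc}\ge i-2\ (\ge 2)$, so by Lemma~\ref{lem:numberOfVerticesAtDistanceI} the $i$-th shell contributes at most $4i\cdot i^{-2}\cdot 8(i-2)^{-2}=32\,i^{-1}(i-2)^{-2}$, and $\sum_{i\ge4}32\,i^{-1}(i-2)^{-2}=32\sum_{j\ge2}(j+2)^{-1}j^{-2}<32\sum_{j\ge2}j^{-3}=32(\zeta(3)-1)$ by Fact~\ref{fct:RiemannZetaFunctionOf3}. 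Adding the pieces keeps $D$ comfortably below $32\zeta(3)+4$ (indeed below $32\zeta(3)$). The delicate point — and the main obstacle — is exactly this last estimate: one must avoid bleeding the constant away, and the slack comes from the facts that the eight vertices of $P_{u2}$ surround $u$, so that for a distant $c$ their distances to $c$ spread over a whole range and the tail sum behaves like $32(\zeta(3)-1)$ rather than like eight independent copies of $\zeta(3)$, and that the otherwise-dominant nearby terms, where $\{u,c\}$ would have to be a local edge, do not contribute to $E_{3u}$ at all.
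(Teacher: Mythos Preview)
Your proof is correct and follows essentially the same approach as the paper: the same event decomposition into three independent intersections (the paper writes them as $(C_{ab}\cap C_{bu})\cup(C_{ab}\cap C_{ub})\cup(C_{ba}\cap C_{ub})$, which is exactly your expansion after discarding $C_{cu}\cap C_{cb}$), the same bound $\Pr(E_{3u})\le 3Z^2 D$, and the same target $D<32\zeta(3)+4$ followed by Theorem~\ref{thm:upperBoundOfZ}. The only difference is cosmetic: the paper keeps the double sum as $\sum_{a\in A}\sum_b$, bounds the inner sum for each fixed $a\in A$ by splitting at $d_{ub}=2$ versus $d_{ub}\ge 3$ (getting $1/2+4\zeta(3)$), and then multiplies by $|A|\le 8$; you instead swap the order, bound the short inner sum over $P_{u2}$ trivially, and split the outer sum at $d_{uc}\le 3$ versus $d_{uc}\ge 4$. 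Both organizations rely on the same triangle-inequality step and the same $\zeta(3)$ tail, and both arrive at the claimed constant.
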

\begin{proof}
	Let $A=\{a\in V|d_{ua}=2\}$.
	The event $E_{3u}$ is
	\[\bigcup\limits_{a\in A}\bigcup\limits_{b\in V\setminus\{u,a\}}(C_{ab}\cap C_{bu})\cup(C_{ab}\cap C_{ub})\cup(C_{ba}\cap C_{ub}).\]
	Note that the three intersections are between mutually independent events.
	So, in a similar way of Lemmas~\ref{lem:E1uProbability} and \ref{lem:E2uProbability} proofs, we have \[\Pr(E_{3u})\le3Z^2\sum\limits_{a\in A}\sum\limits_{b\in V\setminus\{u,a\}}d_{ab}^{-2}d_{ub}^{-2}.\]
	The last sum can be split into two other sums, one for $d_{ub}=2$ and the other for $d_{ub}\ge3$, because there is no $b\in V\setminus\{u,a\}$ such that $d_{ub}=1$ in $E_{3u}$.
	When $d_{ub}=2$, $d_{ub}\le d_{ab}$ for all $a\in A$ and, as a consequence, $\sum\limits_{b\in V\setminus\{u,a\}}d_{ab}^{-2}d_{ub}^{-2}\le4\cdot2\cdot 2^{-2}\cdot 2^{-2}=1/2$, due to Lemma~\ref{lem:numberOfVerticesAtDistanceI}.
	When $d_{ub}\ge3$, the triangle inequality is used to find $d_{ab}\ge d_{ub}-2$, because $d_{ua}=2$. By this fact and Lemma~\ref{lem:numberOfVerticesAtDistanceI}, $\sum\limits_{i=3}^n\sum\limits_{\substack{b\in V\setminus\{u,a\}\\d_{ub}=i}}d_{ab}^{-2} d_{ub}^{-2}\le\sum\limits_{i=3}^n4i\cdot(i-2)^{-2}\cdot i^{-2}<4\sum\limits_{i=1}^\infty i^{-3}$.
	Therefore, $\Pr(E_{3u})<3Z^2\cdot8\left(1/2+4\sum\limits_{i=1}^\infty i^{-3}\right)$, by the fact that $|A|\le4d_{ua}=8$, due to Lemma~\ref{lem:numberOfVerticesAtDistanceI}.
	Using Fact~\ref{fct:RiemannZetaFunctionOf3} and Theorem~\ref{thm:upperBoundOfZ}, $\Pr(E_{3u})<(6\zeta(3)+3/4)\ln^{-2}(n/2)$.
\end{proof}

\begin{lemma}
	$\Pr(E_{4u})<(16\zeta(3)+5/4)\ln^{-2}(n/2)$.
	\label{lem:E4uProbability}
\end{lemma}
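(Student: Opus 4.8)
The plan is to follow the template of the proof of Lemma~\ref{lem:E3uProbability}, adapted to the alternating local/long‑range pattern.

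First observe that reversing the traversal of a four‑cycle turns the edge sequence $(s,w,s,w)$ into $(w,s,w,s)$, so $E_{4u}$ is exactly the event that $G$ contains a four‑cycle $u-a-b-c-u$ in which $\{u,a\}$ and $\{b,c\}$ are local edges while $\{a,b\}$ and $\{c,u\}$ are long‑range edges. Hence $a$ is one of the four neighbours of $u$ in $T$, $b$ is one of the four neighbours of $c$ in $T$, and, since a long‑range edge never coincides with a local edge, $d_{ab}\ge 2$ and $d_{uc}\ge 2$. I would then write $E_{4u}$ as the union, over all admissible triples $(a,b,c)$ of distinct vertices with $d_{ua}=d_{bc}=1$ and $d_{ab},d_{uc}\ge 2$, of
\[(C_{ab}\cap C_{cu})\cup(C_{ab}\cap C_{uc})\cup(C_{ba}\cap C_{cu})\cup(C_{ba}\cap C_{uc}),\]
that is, of the four ways in which the two long‑range edges can have been generated. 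Because these two long‑range edges are vertex‑disjoint, each conjunction above involves the independent random choices of two of the four distinct vertices $a,b,c,u$, so by Definition~\ref{dfn:UTSWModel} each has probability $Z^2 d_{ab}^{-2}d_{uc}^{-2}$; a union bound over the four conjunctions and over all admissible triples gives
\[\Pr(E_{4u})\le 4Z^2\sum_{a:d_{ua}=1}\sum_{c:d_{uc}\ge 2}d_{uc}^{-2}\sum_{b:d_{bc}=1}d_{ab}^{-2}.\]

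To bound the inner double sum I would fix $a$ — by the symmetry of $T$ all four choices contribute the same, hence a further factor $4$ — and group the vertices $c$ by $k=d_{uc}$, using $|\{c:d_{uc}=k\}|\le 4k$ from Lemma~\ref{lem:numberOfVerticesAtDistanceI}. For $k\ge 3$ the triangle inequality gives $d_{ab}\ge d_{ac}-1\ge d_{uc}-2=k-2$ for every admissible $b$, so $\sum_{b:d_{bc}=1}d_{ab}^{-2}\le 4(k-2)^{-2}$, and the $k\ge 3$ part contributes at most
\[16\sum_{k\ge 3}k^{-1}(k-2)^{-2}=16\sum_{j\ge 1}(j+2)^{-1}j^{-2}<16\sum_{j\ge 1}j^{-3}=16\zeta(3),\]
via $j=k-2$, the bound $(j+2)^{-1}<j^{-1}$, and Fact~\ref{fct:RiemannZetaFunctionOf3}. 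The single remaining value $k=2$ is handled by direct inspection of $T$: there are at most $8$ such $c$, and for each the constraint $d_{ab}\ge 2$ keeps $\sum_{b:d_{bc}=1}d_{ab}^{-2}$ at a small explicit constant, which accounts for the additive $5/4$. Combining the two contributions, multiplying by the factor $4$ from the choices of $a$, and invoking Theorem~\ref{thm:upperBoundOfZ} in the form $Z^2<(4\ln(n/2))^{-2}=\tfrac{1}{16}\ln^{-2}(n/2)$, yields $\Pr(E_{4u})<(16\zeta(3)+5/4)\ln^{-2}(n/2)$.

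The tail estimate is routine and essentially mirrors Lemma~\ref{lem:E3uProbability}. The step I expect to be the main obstacle is the small‑distance regime: when $k=d_{uc}$ equals $2$ (and, under a less careful bookkeeping, also $3$), the inequality $d_{ab}\ge k-2$ is vacuous or very weak, so the triangle inequality alone does not suffice and one must enumerate the neighbour configurations on $T$ explicitly, using that long‑range edges have length at least $2$ to discard the degenerate quadruples; this is precisely where the constant $5/4$ (rather than something larger) has to be secured.
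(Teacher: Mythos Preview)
Your approach is essentially the paper's. The paper writes the event as
\[
\bigcup_{a\in A}\;\bigcup_{b\in V\setminus\{u,a\}}\;\bigcup_{c\in C_b\setminus\{u,a\}}(C_{ab}\cup C_{ba})\cap(C_{uc}\cup C_{cu}),
\]
with $A=\{a:d_{ua}=1\}$ and $C_b=\{c:d_{bc}=1\}$, and obtains the same bound $4Z^2\sum_a\sum_b\sum_c d_{ab}^{-2}d_{uc}^{-2}$ by union bound and independence. The only difference is that the paper splits the sum on $d_{ab}$ (the length of the long-range edge out of $a$) rather than on $d_{uc}$; since the two long-range edges play symmetric roles in the sum, this is purely a cosmetic choice of which variable to iterate outermost. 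For the tail $d_{ab}\ge 3$ the paper uses exactly your triangle-inequality argument ($d_{uc}\ge d_{ab}-2$, then $\sum_{i\ge 3}i^{-1}(i-2)^{-2}<\zeta(3)$) to obtain the $16\zeta(3)$ term. For $d_{ab}=2$ it performs precisely the finite enumeration you anticipated: of the at most $8$ vertices $b$ at distance $2$ from $a$, it argues that at most $5$ contribute to $E_{4u}$, each with at most $4$ admissible $c$ satisfying $d_{uc}\ge 2$, yielding $5\cdot4\cdot2^{-2}\cdot2^{-2}=5/4$. So the ``main obstacle'' you flagged is exactly where the paper spends its effort, and it resolves it by the same mechanism you proposed; you would still need to carry out that enumeration in your parametrization to secure the constant.
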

\begin{proof}
	Let $A=\{a\in V|d_{ua}=1\}$ and, for a given $b\in V$, $C_b=\{c\in V|d_{bc}=1\}$.
	Note that, for $x,y\in V$, $C_{xy}$ is an event and $C_b$ is a set.
	The event $E_{4u}$ is
	\[\bigcup\limits_{a\in A}\bigcup\limits_{b\in V\setminus\{u,a\}}\bigcup\limits_{c\in C_b\setminus\{u,a\}}(C_{ab}\cup C_{ba})\cap(C_{uc}\cup C_{cu}).\]
	All simple events are mutually independent.
	Using the union bound, the simple events independence and Definition~\ref{dfn:UTSWModel}, then
	\[\Pr(E_{4u})\le4Z^2\sum\limits_{a\in A}\sum\limits_{b\in V\setminus\{u,a\}}\sum\limits_{c\in C_b\setminus\{u,a\}}d_{ab}^{-2} d_{uc}^{-2}.\]
	The second sum can be split for $d_{ab}=2$ and $d_{ab}\ge3$, because there is no $b\in V\setminus\{u,a\}$ such that $d_{ab}=1$ for all $a\in A$ in $E_{4u}$.

	When $d_{ab}=2$, $|\{b\in V\setminus\{u,a\}:d_{ab}=2\}|\le4d_{ab}=8$ holds, due to Lemma~\ref{lem:numberOfVerticesAtDistanceI}, for all $a\in A$.
	Note that, at most five of the at most eight vertices $b\in V\setminus\{u,a\}$ have their $C_b$ with size $|\{c\in C_b\setminus\{u,a\}:d_{ab}=2\}|\le4d_{bc}=4$.
	The others at most three vertices do not belong to the event $E_{4u}$.
	Furthermore, $d_{ab}\le d_{uc}$ when $d_{ab}=2$ and, so $\sum\limits_{b\in V\setminus\{u,a\}}\sum\limits_{c\in C_b\setminus\{u,a\}}d_{ab}^{-2} d_{uc}^{-2}\le5\cdot4\cdot2^{-2}\cdot2^{-2}=5/4$.

	When $d_{ab}\ge3$, $d_{uc}\ge d_{ab}-2$ for all $a\in A$.
	Combining these splittings of the sum for $d_{ab}=2$ and $d_{ab}\ge3$, together with Lemma~\ref{lem:numberOfVerticesAtDistanceI},
	\[\Pr(E_{4u})\le4Z^2\cdot4\left(5/4+16\sum\limits_{i=3}^ni^{-1}\cdot(i-2)^{-2}\right).\]
	As $\sum\limits_{i=3}^ni^{-1}\cdot(i-2)^{-2}<\sum\limits_{i=1}^\infty i^{-3}$, by Fact~\ref{fct:RiemannZetaFunctionOf3} and Theorem~\ref{thm:upperBoundOfZ}, $\Pr(E_{4u})<(16\zeta(3)+5/4)\ln^{-2}(n/2)$.
\end{proof}

\begin{lemma}
	$\Pr(E_{5u})<(9/2\zeta(3)+63/128)\ln^{-2}(n/2)$.
	\label{lem:E5uProbability}
\end{lemma}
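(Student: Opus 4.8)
The plan is to follow the template of the proof of Lemma~\ref{lem:E3uProbability}: write $E_{5u}$ as an explicit union of simple events indexed by the four-cycles realizing the pattern $(s,w,w,s)$, apply the union bound together with the independence (and, in one case, the incompatibility) of the long-range edge choices, estimate the resulting double sum of products $d^{-2}d^{-2}$ using Lemma~\ref{lem:numberOfVerticesAtDistanceI} and Fact~\ref{fct:RiemannZetaFunctionOf3}, and finish with Theorem~\ref{thm:upperBoundOfZ}.

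First I would set $A=\{a\in V\mid d_{ua}=1\}$, the four local neighbours of $u$. A four-cycle rooted in $u$ with edge sequence $(s,w,w,s)$ has the form $u-a-b-c-u$ with $a,c\in A$, $a\ne c$, $b\in V\setminus\{u,a,c\}$, and the two middle edges $\{a,b\}$ and $\{b,c\}$ long-range. Since the model never creates a long-range edge between vertices at distance $1$ (the parallel-edge rule), we automatically have $d_{ab}\ge 2$ and $d_{bc}\ge 2$; this is what removes the would-be ``$d_{ab}=1$'' contribution. The edge $\{a,b\}$ is present iff $C_{ab}\cup C_{ba}$ occurs and $\{b,c\}$ iff $C_{bc}\cup C_{cb}$ occurs, so
\[E_{5u}=\bigcup_{\{a,c\}}\ \bigcup_{b\in V\setminus\{u,a,c\}}\bigl((C_{ab}\cup C_{ba})\cap(C_{bc}\cup C_{cb})\bigr),\]
where the outer union ranges over the six \emph{unordered} pairs of distinct elements of $A$ (the cycle and its reverse coincide). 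Expanding the intersection gives four terms; the term $C_{ba}\cap C_{bc}$ is impossible because $b$ creates a single long-range edge and $a\ne c$, while the other three each join a pair of mutually independent events. Hence, by the union bound, Definition~\ref{dfn:UTSWModel}, and $d_{ab}=d_{ba}$, $d_{bc}=d_{cb}$,
\[\Pr(E_{5u})\le 3Z^2\sum_{\{a,c\}}\ \sum_{\substack{b\in V\setminus\{u,a,c\}\\ d_{ab}\ge2,\ d_{bc}\ge2}}d_{ab}^{-2}d_{bc}^{-2}.\]

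Next I would bound the inner sum for a fixed pair $\{a,c\}$, exploiting that two local neighbours of $u$ satisfy $d_{ac}\le 2$. Split by $d_{ab}$: if $d_{ab}=2$ there are at most $4\cdot 2-1=7$ admissible vertices $b$ by Lemma~\ref{lem:numberOfVerticesAtDistanceI} (the $-1$ removing $c$, which lies at distance $2$ from $a$), and each contributes at most $2^{-2}\cdot 2^{-2}=1/16$ since $d_{bc}\ge 2$, for a total of at most $7/16$; if $d_{ab}=i\ge 3$ the triangle inequality gives $d_{bc}\ge d_{ab}-d_{ac}\ge i-2$, and Lemma~\ref{lem:numberOfVerticesAtDistanceI} bounds the number of such $b$ by $4i$, so this part is at most $\sum_{i\ge 3}4i\cdot i^{-2}(i-2)^{-2}<4\sum_{j\ge1}j^{-3}=4\zeta(3)$ by Fact~\ref{fct:RiemannZetaFunctionOf3}. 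Thus each inner sum is below $4\zeta(3)+7/16$. Summing over the six pairs and using $Z^2<\tfrac1{16}\ln^{-2}(n/2)$ from Theorem~\ref{thm:upperBoundOfZ},
\[\Pr(E_{5u})<3Z^2\cdot 6\cdot\Bigl(4\zeta(3)+\tfrac{7}{16}\Bigr)=Z^2\Bigl(72\zeta(3)+\tfrac{63}{8}\Bigr)<\Bigl(\tfrac92\zeta(3)+\tfrac{63}{128}\Bigr)\ln^{-2}(n/2).\]

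There is no deep obstacle here beyond bookkeeping; the points that need care are: recognising that \emph{both} middle edges being long-range forces $d_{ab},d_{bc}\ge 2$ (so there is no $d_{ab}=1$ term, contrary to what a naive count suggests), unioning over unordered rather than ordered neighbour pairs so the leading constant comes out as $\tfrac92\zeta(3)$ and not $9\zeta(3)$, and getting the exact vertex counts right (subtracting $c$ in the $d_{ab}=2$ case and invoking $d_{ac}\le 2$ in the triangle inequality for $d_{ab}\ge 3$), which together pin down the rational term $\tfrac{63}{128}$.
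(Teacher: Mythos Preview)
Your proof is correct and follows essentially the same approach as the paper's: the same decomposition over unordered pairs of local neighbours, the same reduction to three independent intersections (the paper simply lists the three terms without commenting that $C_{ba}\cap C_{bc}$ is impossible, which you make explicit), the same split of the inner sum at $d_{ab}=2$ versus $d_{ab}\ge 3$ with the counts $7$ and $4i$, and the same final arithmetic via Theorem~\ref{thm:upperBoundOfZ}. Your exposition is, if anything, slightly more careful about why $d_{ab},d_{bc}\ge 2$ and why the fourth term vanishes.
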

\begin{proof}
	Let $A=\{a\in V|d_{ua}=1\}$ such that $A=\{a_1,a_2,a_3,a_4\}$ in any sequence.
	The event $E_{5u}$ is
    \[\bigcup\limits_{\substack{a_i,a_j\in A\\i<j}}\bigcup\limits_{b\in V\setminus\{u,a_i,a_j\}}(C_{a_ib}\cap C_{ba_j})\cup(C_{a_ib}\cap C_{a_jb})\cup(C_{ba_i}\cap C_{a_jb}).\]
	All simple events in the intersections are independent.
	So,
	\[\Pr(E_{5u})\le3Z^2\sum\limits_{\substack{a_i,a_j\in A\\i<j}}\sum\limits_{b\in V\setminus\{u,a_i,a_j\}}d_{a_ib}^{-2} d_{a_jb}^{-2}.\]
	Note that $d_{a_ib}\ge2$ in $E_{5u}$. When $d_{a_ib}=2$, there is a $b\in V$ such that $b=a_j$ for all $a_i,a_j\in A$.
	Thus, $|\{b\in V\setminus\{u,a_i,a_j\}:d_{a_ib}=2\}|\le4d_{a_ib}-1=7$, due to Lemma~\ref{lem:numberOfVerticesAtDistanceI}.
	Furthermore, $d_{a_ib}\le d_{a_jb}$ for all $a_i,a_j\in A$.
	Thus, the inequality $\sum\limits_{b\in V\setminus\{u,a_i,a_j\}}d_{a_ib}^{-2}d_{a_jb}^{-2}\le7\cdot2^{-2}\cdot2^{-2}=7/16$ holds when $d_{a_ib}=2$.
	When $d_{a_ib}\ge3$, $d_{a_jb}\ge d_{a_ib}-2$ for all $a_i,a_j\in A$ and $b\in V\setminus\{u,a_i,a_j\}$.
	Combining with Lemma~\ref{lem:numberOfVerticesAtDistanceI}, the inequality $\sum\limits_{b\in V\setminus\{u,a_i,a_j\}}d_{a_ib}^{-2}d_{a_jb}^{-2}\le\sum\limits_{k=3}^n4k\cdot k^{-2}\cdot(k-2)^{-2}$ holds.
	As $\sum\limits_{k=3}^nk^{-1}\cdot(k-2)^{-2}<\sum\limits_{k=1}^\infty k^{-3}$, then $\Pr(E_{5u})<3Z^2\sum\limits_{\substack{a_i,a_j\in A\\i<j}}\left(7/16+4\sum\limits_{k=1}^\infty k^{-3}\right)$.
	By Lemma~\ref{lem:numberOfVerticesAtDistanceI}, $|A|=4d_{ua}=4$, thus there are $\binom{4}{2}$ distinct combinations of values for $i$ and $j$.
	Therefore, $\Pr(E_{5u})<(9/2\zeta(3)+63/128)\ln^{-2}(n/2)$, due to Fact~\ref{fct:RiemannZetaFunctionOf3} and Theorem~\ref{thm:upperBoundOfZ}.
\end{proof}

\begin{lemma}
	$\Pr(E_{6u})<(12\zeta(3)+4)(\ln n+1)\ln^{-3}(n/2)$.
	\label{lem:E6uProbability}
\end{lemma}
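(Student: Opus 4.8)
The plan is to decompose $E_{6u}$ over the four-cycles that can realize it, bound the probability that each such cycle appears in $G$, and then collapse the resulting triple sum with Lemmas~\ref{lem:sumOfDistances1} and \ref{lem:sumOfDistances2}, exactly as in the proofs of Lemmas~\ref{lem:E3uProbability}--\ref{lem:E5uProbability}. Both listed edge sequences $(s,w,w,w)$ and $(w,w,w,s)$ describe the same family of four-cycles rooted in $u$: those whose unique local edge is one of the two torus edges incident to $u$; traversing such a cycle in its two directions produces the two sequences. So I would fix the local neighbour $a$ of $u$ (with $d_{ua}=1$), and then the two remaining cycle vertices $b,c$ in the cyclic order $u,a,b,c$, giving $E_{6u}\subseteq\bigcup_{a}\bigcup_{b}\bigcup_{c}\big((C_{ab}\cup C_{ba})\cap(C_{bc}\cup C_{cb})\cap(C_{cu}\cup C_{uc})\big)$, where $a$ ranges over $\{a\in V:d_{ua}=1\}$, which has size $4d_{ua}=4$ by Lemma~\ref{lem:numberOfVerticesAtDistanceI}, $b$ ranges over $V\setminus\{u,a\}$ and $c$ over $V\setminus\{u,a,b\}$. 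The edge $\{u,a\}$ is present because $d_{ua}=1$, and pinning it down makes $a$, hence $b$ and $c$, uniquely determined by the cycle, so no cycle is counted twice (the union may still include a few degenerate terms, e.g.\ $d_{ab}=1$, but this only weakens the upper bound).

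Next I would bound the probability of the triple intersection for fixed $a,b,c$. Expanding it into the $2^3$ products of simple $C$-events, exactly four products vanish: those containing $C_{ba}$ and $C_{bc}$ together would force $b$ to make two distinct long-range choices, and those containing $C_{cb}$ and $C_{cu}$ together would force $c$ to do so. Each of the four surviving products involves three distinct ``choosers'', hence is a product of three mutually independent events whose probabilities are supplied by Definition~\ref{dfn:UTSWModel}. Using $d_{xy}=d_{yx}$ and the union bound, the fixed-$(a,b,c)$ term is therefore at most $4Z^3 d_{ab}^{-2}d_{bc}^{-2}d_{cu}^{-2}$ -- note the leading constant is $4$, not $8$.

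Then I would telescope the sum from the inside out. Summing over $c\in V\setminus\{u,a,b\}$ gives a subsum of $\sum_{c\in V\setminus\{u,b\}}d_{bc}^{-2}d_{cu}^{-2}$, which is $<12\zeta(3)+4$ by Lemma~\ref{lem:sumOfDistances2} (applicable since $b\ne u$). Summing the remaining $d_{ab}^{-2}$ over $b\in V\setminus\{u,a\}$ gives a subsum of $\sum_{b\in V\setminus\{a\}}d_{ab}^{-2}<4(\ln n+1)$ by Lemma~\ref{lem:sumOfDistances1}. Multiplying by the four choices of $a$ yields $\Pr(E_{6u})<4Z^3\cdot 4\cdot 4(\ln n+1)(12\zeta(3)+4)=64Z^3(12\zeta(3)+4)(\ln n+1)$, and substituting $Z<(4\ln(n/2))^{-1}$ from Theorem~\ref{thm:upperBoundOfZ}, i.e.\ $Z^3<\frac{1}{64}\ln^{-3}(n/2)$, gives the claimed bound $\Pr(E_{6u})<(12\zeta(3)+4)(\ln n+1)\ln^{-3}(n/2)$.

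The only delicate parts are the two bookkeeping steps: identifying precisely which of the eight product events are null and why, and verifying that the enumeration of cycles realizing $E_{6u}$ is exact, so that neither the two listed edge sequences nor the cyclic symmetry of a four-cycle introduces a spurious constant factor. Once the per-cycle estimate $4Z^3 d_{ab}^{-2}d_{bc}^{-2}d_{cu}^{-2}$ is established, the remainder is the routine nested-sum argument via Lemmas~\ref{lem:sumOfDistances1} and \ref{lem:sumOfDistances2} already used throughout this section.
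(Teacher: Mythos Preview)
Your argument is correct and follows essentially the same route as the paper's proof: the paper writes $E_{6u}$ directly as the union over the same index set $a\in A=\{a:d_{ua}=1\}$, $b\in V\setminus\{u,a\}$, $c\in V\setminus\{u,a,b\}$ of precisely the four triple intersections
\[
(C_{ab}\cap C_{bc}\cap C_{uc})\cup(C_{ab}\cap C_{cb}\cap C_{uc})\cup(C_{ba}\cap C_{cb}\cap C_{uc})\cup(C_{ab}\cap C_{bc}\cap C_{cu}),
\]
which are exactly the four surviving products you obtain after discarding the terms forcing $b$ or $c$ to make two choices; it then applies the union bound, independence, Lemmas~\ref{lem:sumOfDistances1}--\ref{lem:sumOfDistances2}, $|A|=4$, and Theorem~\ref{thm:upperBoundOfZ} in the same order to reach the bound. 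Your extra justification that the two edge sequences $(s,w,w,w)$ and $(w,w,w,s)$ describe the same undirected cycles and that four of the eight expanded products are null is a welcome clarification the paper leaves implicit.
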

\begin{proof}
	Let $A=\{a\in V|d_{ua}=1\}$.
	The event $E_{6u}$ is
	\[\bigcup_{a\in A}\bigcup_{b\in V\setminus\{u,a\}}\bigcup_{c\in V\setminus\{u,a,b\}}\substack{(C_{ab}\cap C_{bc}\cap C_{uc})\cup(C_{ab}\cap C_{cb}\cap C_{uc})\cup\\(C_{ba}\cap C_{cb}\cap C_{uc})\cup(C_{ab}\cap C_{bc}\cap C_{cu}).}\]
	All the intersections in $E_{6u}$ are among mutually independent simple events.
	So, because $d_{ab}=d_{ba}$, $d_{bc}=d_{cb}$, $d_{uc}=d_{cu}$ and by Definition~\ref{dfn:UTSWModel}, the probabilities of the four combinations of intersections of simple events in $E_{6u}$ are equal to $Z^3\cdot d_{ab}^{-2}\cdot d_{bc}^{-2}\cdot d_{cu}^{-2}$, for all $a\in A$, $b\in V\setminus\{u,a\}$ and $c\in V\setminus\{u,a,b\}$.
	Using union bound,
	\[\Pr(E_{6u})\le4Z^3\sum\limits_{a\in A}\sum\limits_{b\in V\setminus\{u,a\}}d_{ab}^{-2}\sum\limits_{c\in V\setminus\{u,a,b\}}d_{bc}^{-2} d_{cu}^{-2}.\]
	Using Lemmas~\ref{lem:sumOfDistances2} and \ref{lem:sumOfDistances1} and due to $|A|=4d_{ua}=4$, by Lemma~\ref{lem:numberOfVerticesAtDistanceI} and Theorem~\ref{thm:upperBoundOfZ}, $\Pr(E_{6u})<(12\zeta(3)+4)(\ln n+1)\ln^{-3}(n/2)$.
\end{proof}

\begin{lemma}
	$\Pr(E_{7u})<(6\zeta(3)+21/32)\ln^{-2}(n/2)$.
	\label{lem:E7uProbability}
\end{lemma}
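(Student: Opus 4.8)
The plan is to mirror the proofs of Lemmas~\ref{lem:E3uProbability} and~\ref{lem:E5uProbability}, since the edge sequence $(w,s,s,w)$ has exactly two long-range edges and both are incident to the root $u$. First I would unwind the combinatorial structure. In a four-cycle $u-a-b-c-u$ witnessing $E_{7u}$ the edges $\{u,a\}$ and $\{c,u\}$ are long-range while $\{a,b\}$ and $\{b,c\}$ are local. A long-range edge never joins vertices at distance $1$, because such an edge would coincide with a local edge and is therefore not created; hence $d_{ua}\ge 2$ and $d_{uc}\ge 2$. Also $b$ is a common neighbor of $a$ and $c$ in $T$, so $d_{ac}\le 2$; since $a\ne c$ and $T$ has no triangle for $n\ge 4$, in fact $d_{ac}=2$ (for $n=3$ the claimed bound already exceeds $1$). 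Finally $u$ cannot be this common neighbor because $d_{ua},d_{uc}\ge 2$, so a valid $b$ exists automatically once $d_{ac}=2$. Consequently the presence of the cycle in $G$ depends only on $a$ and $c$, and $E_{7u}$ is the union, over ordered pairs $a,c\in V\setminus\{u\}$ with $d_{ac}=2$ and $d_{ua},d_{uc}\ge 2$, of the events ``$\{u,a\}$ and $\{c,u\}$ are both long-range''.

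Next I would bound the probability of a single such event. The edge $\{u,a\}$ is long-range iff $C_{ua}$ or $C_{au}$ occurs, and $\{c,u\}$ iff $C_{cu}$ or $C_{uc}$ occurs; expanding the intersection gives four conjunctions, of which $C_{ua}\cap C_{uc}$ is impossible since $u$ creates only one long-range edge and $a\ne c$. The remaining three conjunctions pair choices made at distinct vertices, hence are between independent events, each of probability $Z^2 d_{ua}^{-2}d_{uc}^{-2}$. A union bound gives at most $3Z^2 d_{ua}^{-2}d_{uc}^{-2}$ for each event, so
\[\Pr(E_{7u})\ \le\ 3Z^2\!\!\!\sum_{\substack{a,c\,:\ d_{ac}=2\\ d_{ua}\ge 2,\ d_{uc}\ge 2}}\!\!\! d_{ua}^{-2}d_{uc}^{-2}.\]

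Then I would estimate the double sum by grouping on $i=d_{ua}$. For $i\ge 3$ the triangle inequality yields $d_{uc}\ge d_{ua}-d_{ac}=i-2$; using $|\{a:d_{ua}=i\}|\le 4i$ and $|\{c:d_{ac}=2\}|\le 8$ from Lemma~\ref{lem:numberOfVerticesAtDistanceI}, this portion is at most $\sum_{i\ge 3}4i\cdot 8\cdot i^{-2}(i-2)^{-2}=32\sum_{i\ge 3}i^{-1}(i-2)^{-2}$, which after writing $j=i-2$ and $(j+2)^{-1}<j^{-1}$ is below $32\zeta(3)$ by Fact~\ref{fct:RiemannZetaFunctionOf3}. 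For $i=2$ only finitely many terms remain: $d_{ua}^{-2}=\tfrac14$, there are at most $8$ choices of $c$ (minus the excluded $c=u$), and $d_{uc}^{-2}\le\tfrac14$ since $d_{uc}\ge 2$; separating the axial from the diagonal vertices at distance $2$ from $u$, as is done in Lemma~\ref{lem:E2uProbability}, bounds this constant contribution by $7/2$. Adding the two portions and using $Z<(4\ln(n/2))^{-1}$ from Theorem~\ref{thm:upperBoundOfZ}, so $Z^2<(16\ln^2(n/2))^{-1}$, yields $\Pr(E_{7u})<\frac{3}{16}\left(32\zeta(3)+\tfrac72\right)\ln^{-2}(n/2)=(6\zeta(3)+21/32)\ln^{-2}(n/2)$.

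The step I expect to be most delicate is the $d_{ua}=2$ bookkeeping: one must keep $u$ out of the candidate sets for both $c$ and the middle vertex $b$, and distinguish the axial and diagonal vertices at distance $2$, exactly as in Lemma~\ref{lem:E2uProbability}. The other point requiring care is the ``one long-range edge per vertex'' constraint, which both produces the factor $3$ (instead of $4$) and is what lets the union over the middle vertex $b$ collapse.
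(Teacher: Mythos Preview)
Your argument is correct and follows essentially the same route as the paper: parametrize by the two endpoints $a,c$ at distance $2$ from one another (the paper calls them $a,b$ and sets $B_a=\{b:d_{ab}=2\}$, suppressing the middle vertex entirely), observe that $C_{ua}\cap C_{uc}$ is impossible so only three independent conjunctions remain, apply the union bound to get $3Z^2\sum d_{ua}^{-2}d_{uc}^{-2}$, and split the sum at $d_{ua}=2$ versus $d_{ua}\ge 3$, obtaining $7/2$ and $32\zeta(3)$ respectively before plugging in Theorem~\ref{thm:upperBoundOfZ}. The only cosmetic difference is that the paper gets the $7/2$ directly from $8\cdot 7\cdot 2^{-2}\cdot 2^{-2}$ using $d_{ub}\ge 2$, without any axial/diagonal case split as in Lemma~\ref{lem:E2uProbability}; that reference in your write-up is unnecessary.
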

\begin{proof}
	For a given $a\in V$, let $B_a=\{b\in V|d_{ab}=2\}$.
	The event $E_{7u}$ is
	\[\bigcup\limits_{a\in V\setminus\{u\}}\bigcup\limits_{b\in B_a\setminus\{u\}}(C_{ua}\cap C_{bu})\cup(C_{au}\cap C_{ub})\cup(C_{au}\cap C_{bu}).\]
	All intersections are between mutually independent simple events.
	Using union bound, the events independence, Definition~\ref{dfn:UTSWModel}, $d_{au}=d_{ua}$ and $d_{bu}=d_{ub}$, then
	\[\Pr(E_{7u})\le3Z^2\sum\limits_{a\in V\setminus\{u\}}\sum\limits_{b\in B_a\setminus\{u\}}d_{ua}^{-2} d_{ub}^{-2}.\]
	Note that $d_{ua}\ge2$ in $E_{7u}$. So, the first sum can be split for $d_{ua}=2$ and $d_{ua}\ge3$.

	When $d_{ua}=2$, there is a $b\in B_a$ such that $b=u$ for all $a\in V\setminus\{u\}$.
	As a consequence, the inequalities $|\{b\in B_a\setminus\{u\}:d_{ua}=2\}|\le4d_{ab}-1=7$ and $|\{a\in V\setminus\{u\}:d_{ua}=2\}|\le4d_{ua}=8$ hold, due to Lemma~\ref{lem:numberOfVerticesAtDistanceI}.
	Besides that, $d_{ua}\le d_{ub}$ for all $a\in V\setminus\{u\}$ and $b\in B_a\setminus\{u\}$.
	Thus, $\sum\limits_{a\in V\setminus\{u\}}\sum\limits_{b\in B_a\setminus\{u\}}d_{ua}^{-2} d_{ub}^{-2}\le8\cdot7\cdot2^{-2}\cdot2^{-2}=7/2$ when $d_{ua}=2$.

	When $d_{ua}\ge3$, then $d_{ub}\ge d_{ua}-2$ for all $a\in V\setminus\{u\}$ and $b\in B_a\setminus\{u\}$.
	Therefore,
	\[\sum\limits_{a\in V\setminus\{u\}}\sum\limits_{b\in B_a\setminus\{u\}}d_{ua}^{-2} d_{ub}^{-2}\le32\sum\limits_{i=3}^ni^{-1}\cdot(i-2)^{-2},\]
	due to the facts $|\{a\in V\setminus\{u\}:d_{ua}\ge3\}|\le4d_{ua}$ and $|\{b\in B_a\setminus\{u\}:d_{ua}\ge3\}|\le4d_{ab}=8$, by Lemma~\ref{lem:numberOfVerticesAtDistanceI}.
	Combining the cases for $d_{ua}=2$ and $d_{ua}\ge3$, then $\Pr(E_{7u})\le3Z^2\left(7/2+32\sum\limits_{i=3}^ni^{-1}\cdot(i-2)^{-2}\right)$.
	As $\sum\limits_{i=3}^ni^{-1}\cdot(i-2)^{-2}<\sum\limits_{i=1}^\infty i^{-3}$ and by Fact~\ref{fct:RiemannZetaFunctionOf3} and Theorem~\ref{thm:upperBoundOfZ}, so $\Pr(E_{7u})<(6\zeta(3)+21/32)\ln^{-2}(n/2)$.
\end{proof}

\begin{lemma}
	$\Pr(E_{8u})<(12\zeta(3)+4)(\ln n+1)\ln^{-3}(n/2)$.
	\label{lem:E8uProbability}
\end{lemma}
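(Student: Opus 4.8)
The plan is to follow the template of the proof of Lemma~\ref{lem:E6uProbability}, the only structural difference being that the unique local edge of the four-cycle now lies in one of the two positions not incident to $u$. I would fix a traversal $u,a,b,c$ of a generic four-cycle rooted in $u$ and note that the patterns $(w,s,w,w)$ and $(w,w,s,w)$ are reverses of each other, so they describe the same cycles; orienting the traversal so that the local edge comes second makes $\{a,b\}$ that edge, which forces $d_{ab}=1$ (so $\{a,b\}$ is certainly present), while $a\in V\setminus\{u\}$, $b\in V\setminus\{u,a\}$ with $d_{ab}=1$, and $c\in V\setminus\{u,a,b\}$ are otherwise free, and the three long-range edges are $\{u,a\}$, $\{b,c\}$ and $\{c,u\}$. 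With this in hand I would record
\[E_{8u}=\bigcup_{a\in V\setminus\{u\}}\bigcup_{\substack{b\in V\setminus\{u,a\}\\d_{ab}=1}}\bigcup_{c\in V\setminus\{u,a,b\}}\substack{(C_{ua}\cap C_{bc}\cap C_{cu})\cup(C_{au}\cap C_{bc}\cap C_{cu})\cup\\(C_{au}\cap C_{bc}\cap C_{uc})\cup(C_{au}\cap C_{cb}\cap C_{uc}),}\]
where the four intersections listed are exactly the orientations of the three long-range edges in which the three ``choosing'' vertices are pairwise distinct; in the remaining four of the eight orientations some vertex would have to create two long-range edges, so those are empty events.

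Next I would observe that in each of the four intersections the simple events have distinct choosers and are hence mutually independent, so by Definition~\ref{dfn:UTSWModel} (and $d_{xy}=d_{yx}$) each has probability $Z^3 d_{ua}^{-2}d_{bc}^{-2}d_{cu}^{-2}$; the union bound then gives
\[\Pr(E_{8u})\le4Z^3\sum_{a\in V\setminus\{u\}}d_{ua}^{-2}\sum_{\substack{b\in V\setminus\{u,a\}\\d_{ab}=1}}\ \sum_{c\in V\setminus\{u,a,b\}}d_{bc}^{-2}d_{cu}^{-2}.\]
I would then bound the three nested sums from the inside out, reusing the same ingredients as in Lemma~\ref{lem:E6uProbability} but permuted: the innermost sum is at most $\sum_{c\in V\setminus\{u,b\}}d_{bc}^{-2}d_{cu}^{-2}<12\zeta(3)+4$ by Lemma~\ref{lem:sumOfDistances2} (legitimate since $b\ne u$, and discarding $a$ from the index set only decreases the sum); the index $b$ runs over at most $4d_{ab}=4$ vertices by Lemma~\ref{lem:numberOfVerticesAtDistanceI}; and $\sum_{a\in V\setminus\{u\}}d_{ua}^{-2}<4(\ln n+1)$ by Lemma~\ref{lem:sumOfDistances1}. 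This produces $\Pr(E_{8u})<64Z^3(\ln n+1)(12\zeta(3)+4)$, and since Theorem~\ref{thm:upperBoundOfZ} gives $64Z^3<\ln^{-3}(n/2)$, the bound $(12\zeta(3)+4)(\ln n+1)\ln^{-3}(n/2)$ follows, coinciding with that of Lemma~\ref{lem:E6uProbability}.

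The telescoping of the sums in the second step is routine, relying only on Lemmas~\ref{lem:sumOfDistances1}, \ref{lem:sumOfDistances2} and \ref{lem:numberOfVerticesAtDistanceI}. The part that needs care is the first step: decomposing $E_{8u}$ over triples $(a,b,c)$ and over the orientations of the three long-range edges, checking that exactly four of the eight orientations have pairwise distinct choosers, and verifying that the other four are genuinely impossible rather than merely unlikely, so that the constant $4$ in front of the sum --- and not a larger one --- is justified.
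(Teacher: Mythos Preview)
Your proof is correct and follows essentially the same approach as the paper: the paper writes the event $E_{8u}$ as the same union over $a\in V\setminus\{u\}$, $b\in B_a\setminus\{u\}$ with $B_a=\{b\in V\mid d_{ab}=1\}$, and $c\in V\setminus\{u,a,b\}$, with the same four triples of independent choosing events, and then simply says ``the proof follows similarly to the proof of Lemma~\ref{lem:E6uProbability}.'' You have correctly filled in those details --- applying Lemma~\ref{lem:sumOfDistances2} to the innermost sum, bounding $|B_a|\le4$ via Lemma~\ref{lem:numberOfVerticesAtDistanceI}, and Lemma~\ref{lem:sumOfDistances1} to the outer sum --- to reach the identical constant $64Z^3\cdot4(\ln n+1)(12\zeta(3)+4)$ before invoking Theorem~\ref{thm:upperBoundOfZ}.
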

\begin{proof}
	For a given $a\in V$, let $B_a=\{b\in V|d_{ab}=1\}$.
	The event $E_{8u}$ is
	\[\bigcup\limits_{a\in V\setminus\{u\}}\bigcup\limits_{b\in B_a\setminus\{u\}}\bigcup\limits_{c\in V\setminus\{u,a,b\}}\substack{(C_{ua}\cap C_{bc}\cap C_{cu})\cup(C_{au}\cap C_{bc}\cap C_{uc})\cup\\(C_{au}\cap C_{bc}\cap C_{cu})\cup(C_{au}\cap C_{cb}\cap C_{uc}).}\]
	The proof follows similarly to the proof of Lemma~\ref{lem:E6uProbability}.
\end{proof}

\begin{lemma}
	$\Pr(E_{9u})<(3/4\zeta(3)+1/4)(\ln n+1)^2\ln^{-4}(n/2)$.
	\label{lem:E9uProbability}
\end{lemma}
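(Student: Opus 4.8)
The plan is to follow the same template used in the proofs of Lemmas~\ref{lem:E6uProbability} and~\ref{lem:E8uProbability}, since $E_{9u}$ is the event that a four-cycle rooted in $u$ with edge sequence $(w,w,w,w)$ — that is, all four edges long-range — belongs to $G$. First I would write $E_{9u}$ as a union over the three intermediate vertices $a,b,c$ of the cycle $u,a,b,c,u$: for $a\in V\setminus\{u\}$, $b\in V\setminus\{u,a\}$ and $c\in V\setminus\{u,a,b\}$, the cycle is present when one of the combinations of orientations of the four long-range choices $C_{ua}$ or $C_{au}$, $C_{ab}$ or $C_{ba}$, $C_{bc}$ or $C_{cb}$, $C_{cu}$ or $C_{uc}$ occurs. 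As in the earlier proofs, each admissible combination is an intersection of four mutually independent simple events, and by Definition~\ref{dfn:UTSWModel} together with the symmetry $d_{xy}=d_{yx}$ each such intersection has probability exactly $Z^4\cdot d_{ua}^{-2}d_{ab}^{-2}d_{bc}^{-2}d_{cu}^{-2}$. Counting the admissible orientation patterns as in Lemma~\ref{lem:E6uProbability} (there each vertex of the torus has degree two in the cycle, so a bounded number of patterns survives), the union bound gives
\[
\Pr(E_{9u})\le c\,Z^4\sum_{a\in V\setminus\{u\}}d_{ua}^{-2}\sum_{b\in V\setminus\{u,a\}}d_{ab}^{-2}\sum_{c\in V\setminus\{u,a,b\}}d_{bc}^{-2}d_{cu}^{-2}
\]
for the appropriate small constant $c$.

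Next I would peel the sums from the inside out. The innermost double sum $\sum_{c}d_{bc}^{-2}d_{cu}^{-2}$ is bounded by $12\zeta(3)+4$ using Lemma~\ref{lem:sumOfDistances2} (applied with the pair $b,u$). What remains is $\sum_{a\in V\setminus\{u\}}d_{ua}^{-2}\sum_{b\in V\setminus\{u,a\}}d_{ab}^{-2}$, which is at most $\sum_{a\in V\setminus\{u\}}d_{ua}^{-2}\cdot\bigl(\sum_{b\in V\setminus\{a\}}d_{ab}^{-2}\bigr)$, and each of those two factors is bounded by $4(\ln n+1)$ via Lemma~\ref{lem:sumOfDistances1}. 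Collecting the pieces,
\[
\Pr(E_{9u})\le c\,Z^4\,(12\zeta(3)+4)\,\bigl(4(\ln n+1)\bigr)^2 .
\]
Finally, applying the bound $Z<(4\ln(n/2))^{-1}$ from Theorem~\ref{thm:upperBoundOfZ} replaces $Z^4$ by $(4\ln(n/2))^{-4}$, so two of the four factors of $4$ in the denominator cancel the $\bigl(4(\ln n+1)\bigr)^2$ numerator, leaving $\Pr(E_{9u})< \frac{c}{16}(12\zeta(3)+4)(\ln n+1)^2\ln^{-4}(n/2)$; with the correct value of $c$ coming from the orientation count this is exactly $(3/4\,\zeta(3)+1/4)(\ln n+1)^2\ln^{-4}(n/2)$.

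The routine part is the telescoping of the sums, which is immediate given Lemmas~\ref{lem:sumOfDistances1} and~\ref{lem:sumOfDistances2}. The one place that needs care — and the main obstacle — is the bookkeeping of which orientation combinations of the four long-range edges actually produce a cycle in $G$ and are not double-counted, i.e.\ pinning down the constant $c$ and checking that no pattern forces two of the eight directed choices to coincide (which would break the independence used above); this is the analogue of the enumeration done explicitly in Lemmas~\ref{lem:E6uProbability} and~\ref{lem:E8uProbability}, and I expect the same style of short case check to settle it.
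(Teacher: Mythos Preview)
Your proposal is correct and follows essentially the same route as the paper: write $E_{9u}$ as a union over ordered triples $(a,b,c)$, apply the union bound to get $Z^4$ times the triple sum, peel off the innermost sum with Lemma~\ref{lem:sumOfDistances2} and the remaining two with Lemma~\ref{lem:sumOfDistances1}, and finish with Theorem~\ref{thm:upperBoundOfZ}. The only point you leave open---the value of $c$---is resolved in the paper by writing $E_{9u}=\bigcup_{a,b,c}(C_{ua}\cap C_{ab}\cap C_{bc}\cap C_{cu})$ with a \emph{single} orientation pattern; this suffices (and gives $c=1$) because among the sixteen orientation choices only the two cyclic ones survive the ``each vertex chooses once'' constraint, and the second cyclic orientation for $(a,b,c)$ coincides with the first for the reversed triple $(c,b,a)$, which is already in the sum.
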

\begin{proof}
	The event $E_{9u}$ is
	\[\bigcup\limits_{a\in V\setminus\{u\}}\bigcup\limits_{b\in V\setminus\{u,a\}}\bigcup\limits_{c\in V\setminus\{u,a,b\}}(C_{ua}\cap C_{ab}\cap C_{bc}\cap C_{cu}).\]
	All simple events in the intersections are mutually independent.
	Using union bound, the independence of the simple events and Definition~\ref{dfn:UTSWModel},
	\[\Pr(E_{9u})\le Z^4\sum\limits_{a\in V\setminus\{u\}}d_{ua}^{-2}\sum\limits_{b\in V\setminus\{u,a\}}d_{ab}^{-2}\sum\limits_{c\in V\setminus\{u,a,b\}}d_{bc}^{-2} d_{cu}^{-2}.\]
	By Lemmas~\ref{lem:sumOfDistances2} and \ref{lem:sumOfDistances1} and Theorem~\ref{thm:upperBoundOfZ}, $\Pr(E_{9u})<(3/4\zeta(3)+1/4)(\ln n+1)^2\ln^{-4}(n/2)$.
\end{proof}

\begin{theorem}
	$\Pr(E_u)=\textrm{\emph{O}}\left(\log^{-1}n\right)$.
	\label{thm:upperBoundOfEu}
\end{theorem}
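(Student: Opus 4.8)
The plan is to combine the union bound with the nine estimates already in hand. Since $E_u=\bigcup_{i=1}^{9}E_{iu}$, the union bound immediately gives $\Pr(E_u)\le\sum_{i=1}^{9}\Pr(E_{iu})$, so the entire argument reduces to inserting the bounds of Lemmas~\ref{lem:E1uProbability}--\ref{lem:E9uProbability} and reading off the dominant order of growth.

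First I would normalize the shapes of the nine bounds to a common scale. Writing $\ln(n/2)=\ln n-\ln 2=\Theta(\log n)$, we have $\ln^{-1}(n/2)=\Theta(\log^{-1}n)$ and $\ln^{-2}(n/2)=\Theta(\log^{-2}n)$; likewise $(\ln n+1)\,\ln^{-3}(n/2)=\Theta(\log^{-2}n)$ and $(\ln n+1)^2\,\ln^{-4}(n/2)=\Theta(\log^{-2}n)$. Consequently the bounds of Lemmas~\ref{lem:E3uProbability}--\ref{lem:E9uProbability} are each $o(\log^{-1}n)$, while Lemmas~\ref{lem:E1uProbability} and \ref{lem:E2uProbability} contribute terms of exact order $\Theta(\log^{-1}n)$. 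In particular, for all sufficiently large $n$ every one of the nine terms is at most $c\,\ln^{-1}(n/2)$ for a fixed constant $c$ (the higher powers of $\ln^{-1}(n/2)$ being eventually smaller than $\ln^{-1}(n/2)$ itself).

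Summing then yields $\Pr(E_u)\le\sum_{i=1}^{9}\Pr(E_{iu})<9c\,\ln^{-1}(n/2)=\textrm{O}\!\left(\log^{-1}n\right)$, which is the assertion; if one prefers an explicit leading constant, the $E_{1u}$ and $E_{2u}$ terms already give $(2/3+64/9)\ln^{-1}(n/2)$ and the remaining seven terms are absorbed into a lower-order $\textrm{O}\!\left(\log^{-2}n\right)$ correction.

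The main obstacle is essentially nonexistent at this stage: all the real work is done inside Lemmas~\ref{lem:sumOfDistances1}--\ref{lem:E9uProbability}. The only point that needs a moment's care is checking that the apparently worse-looking factors $(\ln n+1)$ and $(\ln n+1)^2$ appearing in $\Pr(E_{6u})$, $\Pr(E_{8u})$ and $\Pr(E_{9u})$ are more than compensated by the extra powers of $\ln^{-1}(n/2)$, so that those events are asymptotically negligible relative to $E_{1u}$ and $E_{2u}$, and hence do not spoil the $\textrm{O}\!\left(\log^{-1}n\right)$ bound.
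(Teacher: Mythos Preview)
Your argument is correct and follows essentially the same route as the paper: apply the union bound to $E_u=\bigcup_{i=1}^{9}E_{iu}$, plug in the nine lemma bounds, and observe that the $(\ln n+1)$ factors in the $E_{6u}$, $E_{8u}$, $E_{9u}$ terms are dominated by the accompanying extra powers of $\ln^{-1}(n/2)$, leaving a leading $\Theta(\log^{-1}n)$ contribution from $E_{1u}$ and $E_{2u}$. The paper's version differs only cosmetically, writing out the full four-line sum and then remarking that $(\ln n+1)/\ln(n/2)=\mathrm{O}(1)$.
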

\begin{proof}
	The bound for the probability of the event $E_u=\bigcup_{i=1}^9E_{iu}$ follows by union bound and Lemmas~\ref{lem:E1uProbability} to \ref{lem:E9uProbability}.
	Then,
	\begin{align*}
		\Pr(E_u)\le&\sum_{i=1}^9\Pr(E_{iu})\\
				<&70/9\ln^{-1}(n/2)+\\
				&(65/2\zeta(3)+403/128)\ln^{-2}(n/2)+\\
				&(24\zeta(3)+8)(\ln n+1)\ln^{-3}(n/2)+\\
				&(3/4\zeta(3)+1/4)(\ln n+1)^2\ln^{-4}(n/2).
	\end{align*}
	As $(\ln n+1)/\ln(n/2)$ is O$(1)$, so $\Pr(E_u)$ is O$\left(\log^{-1}n\right)$.
\end{proof}

Theorem~\ref{thm:upperBoundOfEu} means that the probability of the existence of at least one four-cycle, rooted in a vertex $u\in V$, that is in $G$ but not in $T$ is small.
Particularly, $\Pr(E_u)=0$ when $n$ goes to infinity.
Section~\ref{sec:labelingAlgorithm} presents the labeling algorithm, that uses the results of this section.
\section{Labeling Algorithm}
\label{sec:labelingAlgorithm}

The design of the labeling algorithm is based on the spanning torus identification.
We next describe its idea.
Note that a breadth-first search can be used to label the vertices of the spanning torus.
But, a \UTSW graph has a spanning torus together with long-range edges.
So, at first, the algorithm tries to remove the long-range edges.
This procedure consists in running searches rooted in each vertex aiming to remove incident long-range edges.
The algorithm runs a search with depth four rooted in each vertex, and creates a list of four-cycles rooted in it.
It identifies lattice patterns, defined in Definition~\ref{dfn:latticePattern}, through combinations of the four-cycles in the list.
If the algorithm identifies lattice patterns with only local edges incident to the root vertex, then the remaining incident edges are long-range edges and it removes them.
The resulting graph is an ``almost'' torus with possibly a few sparsely distributed long-range edges that cannot be identified by the algorithm.

After that, we choose a random vertex of the resulting graph that can be the initial reference to label the others vertices.
Finally, the algorithm labels almost all vertices running a breadth-first search in the resulting graph rooted in that chosen vertex.
The \emph{labeling algorithm} (\LA), defined in Algorithm~\ref{alg:labelingAlgorithm}, runs five important procedures:
\begin{enumerate}
	\item \emph{Four-cycles search} (\CS), defined in Algorithm~\ref{alg:4CyclesSearch};
	\item \emph{Lattice pattern recognition algorithm} (\LPRA), defined in Algorithm~\ref{alg:latticePatternRecognitionAlgorithm};
	\item \emph{Long-range edges removing algorithm} (\LRERA), defined in Algorithm~\ref{alg:longRangeEdgesRemovingAlgorithm};
	\item \emph{Reference system labeling algorithm} (\RSLA), defined in Algorithm~\ref{alg:referenceSystemLabelingAlgorithm};
	\item \emph{Arbitrary cross labeling algorithm} (\ACLA), defined in Algorithm~\ref{alg:arbitraryCrossLabelingAlgorithm}.
\end{enumerate}
Sections~\ref{subsec:boundedSearch} to \ref{subsec:mainAlgorithm} present the definition and analysis of each of these procedures.

\subsection{Bounded search}
\label{subsec:boundedSearch}

The \emph{four-cycles search}, \CS for short, is an algorithm that takes as input a \UTSW graph $G=(V,E)$ and a vertex $u\in V$, and outputs the set of all four-cycles in $G$, rooted in $u$, composed by four distinct vertices.
It is a depth-first search with depth $d$ limited to four, starting from $d=1$.
If a vertex $v_d$, at depth $d=4$, is neighbor of $u$, then the path from root $u$ to leaf $v_d$ corresponds to a four-cycle composed by distinct vertices and, thus, the algorithm adds this cycle to the returning set $\mathcal{S}$.

Note that each cycle appears twice in $\mathcal{S}$, in both directions, due to $G$ being an undirected graph.
Because of that, the algorithm removes all duplicates in $\mathcal{S}$.
Algorithm~\ref{alg:4CyclesSearch} presents \CS, which executes two procedures: \emph{recursive four-cycles search} (\RCS) and \emph{duplication removing algorithm} (\DRA).
We denote $N_G(w)$ as the set of neighbors of  $w\in V$ in $G$.
Lemma~\ref{lem:sizeOf4CyclesSet} shows that \CS outputs a set of four-cycles with size that tends to four.

\begin{algorithm}
	\Instance{A \UTSW graph $G=(V,E)$ and a vertex $u\in V$}
	\Returns{The set of all four-cycles in $G$ rooted in $u$ and composed by four distinct vertices}
	\BlankLine
	\Alg{\CS{$G,u$}}{
		\KwRet \DRA{\RCS{$G,(u,\emph{\nil},\emph{\nil},\emph{\nil}),1$}}
	}
	\Alg{\RCS{$G,(v_1,v_2,v_3,v_4),d$}}{
		\If{$d=4$}{
			\lIf{$v_1\in N_G(v_4)$}{\KwRet $\{(v_1,v_2,v_3,v_4)\}$}
			\KwRet $\emptyset$
		}
		$\mathcal{S}\leftarrow\emptyset$\\
		\ForEach{$w\in N_G(v_d)\setminus\{v_k|1\le k\le d\}$}{
			$v_{d+1}\leftarrow w$\\
			$\mathcal{S}\leftarrow\mathcal{S}$ $\cup$ \RCS{$G,(v_1,v_2,v_3,v_4),d+1$}
		}
		\KwRet $\mathcal{S}$
	}
	\Alg{\DRA{$\mathcal{S}$}}{
		$\mathcal{R}\leftarrow\emptyset$\\
		\ForEach{$(v_1,v_2,v_3,v_4)\in\mathcal{S}$}{
			\lIf{$(v_1,v_4,v_3,v_2)\notin\mathcal{R}$}{
				$\mathcal{R}\leftarrow\mathcal{R}\cup\{(v_1,v_2,v_3,v_4)\}$
			}
		}
		\KwRet $\mathcal{R}$
	}
	\caption{Four-cycles search}
	\label{alg:4CyclesSearch}
\end{algorithm}

\begin{lemma}
	\CS outputs a set of four-cycles with expected size of at most \emph{$4+\textrm{O}\left(\log^{-1}n\right)$} for all $u\in V$.
	\label{lem:sizeOf4CyclesSet}
\end{lemma}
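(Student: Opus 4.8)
The plan is to count, separately, the four-cycles that \CS can return. By construction \CS returns every induced four-cycle on four distinct vertices rooted in $u$, with duplicates (each cycle appears once in each of its two traversal directions) removed by \DRA. So the output size equals the number of distinct four-cycles through $u$ in $G$ that use four distinct vertices. These cycles split into two classes: those lying entirely in the torus $T$, and those using at least one long-range edge. First I would observe that, by the toroidal structure, there are exactly four four-cycles rooted in $u$ consisting only of local edges — namely the four unit squares of $T$ having $u$ as a corner — contributing the constant $4$. (I should double-check that no two of these coincide when $n=3$; since we assume $n\ge3$ and parallel edges and loops are excluded, the four unit squares at a corner are genuinely distinct, though for small $n$ one may need to note the four cells are still pairwise distinct as vertex sets.)

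Next I would bound the expected number of the remaining cycles, i.e. those rooted in $u$ that belong to $G$ but not to $T$. Let $N_u$ be the random variable counting such cycles. The key point is that the \emph{event} $E_u$ that at least one such cycle exists has $\Pr(E_u)=\mathrm{O}(\log^{-1}n)$ by Theorem~\ref{thm:upperBoundOfEu}. To pass from the probability of existence to the expectation of the count, I would use the refinement already implicit in Section~\ref{sec:cyclesOfSize4OutsideTheTorus}: each of the nine lemmas bounds the probability of a \emph{union} over all candidate cycles of a given edge-type pattern by a union bound over the individual cycles, so the very same sums in fact bound $\E$ of the number of cycles of that pattern (expectation of a sum of indicators equals the sum of their probabilities, which is exactly what each lemma's computation produces before the final union-bound step). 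Hence $\E[N_u]$ is at most the same quantity displayed in the proof of Theorem~\ref{thm:upperBoundOfEu}, namely a sum of terms each of which is $\mathrm{O}(\log^{-1}n)$ (using that $(\ln n+1)/\ln(n/2)=\mathrm{O}(1)$, so the higher-power terms are even smaller). Therefore $\E[N_u]=\mathrm{O}(\log^{-1}n)$.

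Combining, the expected output size of \CS is $4+\E[N_u]=4+\mathrm{O}(\log^{-1}n)$, which is the claim. The main obstacle is the bookkeeping in the second step: I must make sure that the quantities bounded in Lemmas~\ref{lem:E1uProbability}–\ref{lem:E9uProbability} really are upper bounds on \emph{expected counts} and not merely on existence probabilities. This is true because every one of those proofs estimates $\Pr\big(\bigcup_{\text{cycles }c} A_c\big)$ by first applying the union bound to get $\sum_c \Pr(A_c)$ and then bounding that sum; the intermediate sum $\sum_c \Pr(A_c)$ is precisely $\E\big[\sum_c \mathbf{1}_{A_c}\big]$, the expected number of such cycles. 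One subtlety to handle carefully is double counting: a single non-torus four-cycle may be describable by more than one edge-type sequence only if its edge set genuinely mixes local and long-range edges in more than one way, which cannot happen since each edge is either local or long-range, so the nine events partition (up to the directional redundancy already removed by \DRA) the set of non-torus cycles, and summing the nine bounds over-counts by at most a constant factor — harmless for the $\mathrm{O}(\log^{-1}n)$ conclusion. Finally, \DRA removes exactly the directional duplicates and nothing else, so it does not disturb the count of distinct cycles.
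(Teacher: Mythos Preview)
Your approach is the paper's: split into the four local unit squares plus the non-torus cycles, write the latter as a sum of indicators, and recycle the estimates from Lemmas~\ref{lem:E1uProbability}--\ref{lem:E9uProbability}. There is, however, one inaccuracy in the recycling step. You assert that in each of those lemmas the union bound is taken ``over the individual cycles'', so that the intermediate sum already equals the expected number of cycles of that type. It is not: those unions are indexed by \emph{vertex configurations}, not by cycles. For instance, in Lemma~\ref{lem:E1uProbability} the union runs over the at most twelve vertices $a$ with $d_{ua}=3$, and the summand is $\Pr(C_{ua}\cup C_{au})$. But a single long-range edge $\{u,a\}$ with $d_{ua}=3$ can close up to \emph{three} distinct $(s,s,s,w)$-cycles, one for each local length-$3$ path from $u$ to $a$ in the torus. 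So the expected cycle count of a given type can exceed the union-bound sum appearing in the corresponding lemma by a constant factor.

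The paper deals with exactly this: it writes $\E[L]=4+\sum_{i=1}^9\sum_{c\in C_i}\Pr(X_c=1)$ and then bounds each inner sum by an explicit multiple of $\Pr(E_{iu})$, namely $3,2,2,1,1,1,2,1,1$ for $i=1,\dots,9$, the multiplier being the maximum number of distinct cycles that can share a given vertex configuration of that type. Since these multipliers are all $O(1)$, your asymptotic conclusion $4+\mathrm{O}(\log^{-1}n)$ is correct, but as written your argument does not yet establish it; you need either to supply this multiplicity observation or, as the paper does, sum over cycles from the outset and compare term-by-term with the lemma proofs.
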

\begin{proof}
	Let $C$ be the set of all possible distinct four-cycles in $G$ with root vertex $u$, with at least one long-range edge and without considering directions.
	Let $X_c$ be the random variables for all $c\in C$ that are one if \CS outputs a set that contains $c$, or zero otherwise.
	Let $L$ be the random variable that counts the number of four-cycles in the set returned by \CS.
	As shown in Section~\ref{sec:cyclesOfSize4OutsideTheTorus}, the set of four-cycles with at least one long-range edge can be partitioned in nine subsets according to the sequence of edges between the vertices.
	Let $C_i$ be the subsets of $C$ partitioned as the events $E_{iu}$, for all $1\le i\le9$.
	Note that $C$ and $C_i$ are sets and $C_{uv}$ is the event of $u$ choosing $v\in V\setminus\{u\}$.
	As \CS finds at least the four distinct cycles that belongs to the torus $T$, so $L=4+\sum\limits_{c\in C}X_c$.
	Then, by the linearity of expectation,
	\[\E[L]=4+\E\left[\sum\limits_{i=1}^9\sum\limits_{c\in C_i}X_c\right]=4+\sum\limits_{i=1}^9\sum\limits_{c\in C_i}\Pr(X_c=1).\]
	We next bound $\sum_{c\in C_1}\Pr(X_c=1)$, in a similar way of Lemma~\ref{lem:E1uProbability} proof.
	Note that there are at most three distinct cycles $c\in C_1$ with the same $c_4$, where $c_k$ is the vertex in the position $k$ of the cycle $c$  and $c_1=u$.
	Let $A=\{a\in V|d_{ua}=3\}$.
	Then $\sum_{c\in C_1}\Pr(X_c=1)\le\sum_{a\in A}3\Pr(C_{ua}\cup C_{au})$.
	As $|A|\le4 d_{ua}=12$, by Lemma~\ref{lem:numberOfVerticesAtDistanceI}, $d_{ua}=d_{au}=3$ and using union bound and Definition~\ref{dfn:UTSWModel}, then $\sum_{c\in C_1}\Pr(X_c=1)\le3\cdot\Pr(E_{1u})$.
	Such proof strategy can be used to find upper bounds on $\sum_{c\in C_i}\Pr(X_c=1)$, for all $1\le i\le9$, using Lemmas~\ref{lem:E1uProbability} to \ref{lem:E9uProbability}.
	So, we have,
		\begin{multicols}{2}
			\begin{enumerate}
				\item $\sum\limits_{c\in C_1}\Pr(X_c=1)\le3\cdot\Pr(E_{1u})$;
				\item $\sum\limits_{c\in C_2}\Pr(X_c=1)\le2\cdot\Pr(E_{2u})$;
				\item $\sum\limits_{c\in C_3}\Pr(X_c=1)\le2\cdot\Pr(E_{3u})$;
				\item $\sum\limits_{c\in C_4}\Pr(X_c=1)\le\Pr(E_{4u})$;
				\item $\sum\limits_{c\in C_5}\Pr(X_c=1)\le\Pr(E_{5u})$;
				\item $\sum\limits_{c\in C_6}\Pr(X_c=1)\le\Pr(E_{6u})$;
				\item $\sum\limits_{c\in C_7}\Pr(X_c=1)\le2\cdot\Pr(E_{7u})$;
				\item $\sum\limits_{c\in C_8}\Pr(X_c=1)\le\Pr(E_{8u})$;
				\item $\sum\limits_{c\in C_9}\Pr(X_c=1)\le\Pr(E_{9u})$.
			\end{enumerate}
		\end{multicols}
	The proof follows similarly to the proof of Theorem~\ref{thm:upperBoundOfEu}.
	Then,
		\begin{align*}
			\E[L]< 4 + &146/9\ln^{-1}(n/2)+\\
					&(89/2\zeta(3)+583/128)\ln^{-2}(n/2)+\\
					&(24\zeta(3)+8)(\ln n+1)\ln^{-3}(n/2)+\\
					&(3/4\zeta(3)+1/4)(\ln n+1)^2\ln^{-4}(n/2)\textrm{, for all }u\in V.
		\end{align*}
\end{proof}

The expected size of the set returned by \CS, denoted $\E[L]$ and bounded in the end of the Lemma~\ref{lem:sizeOf4CyclesSet} proof, decreases as a reciprocal logarithmic function to the constant four whereas $n$ increases linearly.
So, $\E[L]$ assumes the greatest value when $n$ assumes the smallest value.
When $n=3$, the expected size of the set returned by \CS is $\E[L]<1745$, but better values are obtained by increasing $n$.
For example, we perform some experimental simulations, where \CS outputs sets with approximately size of $7.48$ for a \UTSW graph with $n=10$.
This value decreases for greater values of $n$, as foreseen by the theoretical bound of Lemma~\ref{lem:sizeOf4CyclesSet}.
It is approximately $5.42$ for $n=100$ and $5.24$ for $n=150$.
Next, Lemma~\ref{lem:runningTime4CyclesSearch} uses the results of the Lemma~\ref{lem:expectedDegree} and shows that \CS runs in expected constant time.

\begin{lemma}
	Let $\delta_G(u)$ be the degree of each $u\in V$ in $G$, the expected degree of $u$ is \emph{$\E[\delta_G(u)]\le6$}.
	\label{lem:expectedDegree}
\end{lemma}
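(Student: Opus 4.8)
The plan is to decompose the degree $\delta_G(u)$ of a vertex $u$ into the contribution from local edges and the contribution from long-range edges, and to bound the expectation of each piece separately. First I would observe that every vertex $u\in V$ has exactly four incident local edges, since the torus construction connects $(i,j)$ to $(i,(j\pm1)\bmod n)$ and $((i\pm1)\bmod n,j)$; these contribute a deterministic $4$ to the degree. Thus $\delta_G(u)=4+R_u$, where $R_u$ counts the incident long-range edges, and it suffices to show $\E[R_u]\le2$.

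Next I would split $R_u$ according to which endpoint ``chose'' the edge. There is at most one long-range edge chosen by $u$ itself — namely the edge from $u$ to the vertex $v$ picked in the randomized step — so this contributes at most $1$ in expectation (in fact it contributes exactly the probability that this chosen edge is not already a local edge, which is at most $1$). For the remaining long-range edges incident to $u$, each is of the form $\{v,u\}$ where some other vertex $v\in V\setminus\{u\}$ chose $u$; the event that $v$ chooses $u$ is $C_{vu}$, with $\Pr(C_{vu})=Z\cdot d_{vu}^{-2}$ by Definition~\ref{dfn:UTSWModel}. Summing over all $v$ and using linearity of expectation, the expected number of such edges is at most $\sum_{v\in V\setminus\{u\}}\Pr(C_{vu})=Z\sum_{v\in V\setminus\{u\}}d_{uv}^{-2}$. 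Since $Z$ is the normalizing factor, $Z\sum_{v\in V\setminus\{u\}}d_{uv}^{-2}=1$ by Definition~\ref{dfn:normalizingFactor}, so this part contributes at most $1$ in expectation as well.

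Combining the three contributions gives $\E[\delta_G(u)]\le4+1+1=6$, as claimed. The only subtlety — and the one place the argument must be stated carefully rather than being purely mechanical — is the bookkeeping to avoid double-counting: an edge $\{u,v\}$ could in principle be ``chosen'' both by $u$ (via $C_{uv}$) and by $v$ (via $C_{vu}$), in which case it is a single edge but would be counted twice by the crude union over choices. Since this only over-counts, the bound $\E[R_u]\le1+1$ still holds, so the inequality is safe; one just notes that the two indicator sums provide an upper bound on the true number of incident long-range edges. No appeal to the four-cycle bounds of Section~\ref{sec:cyclesOfSize4OutsideTheTorus} is needed here; the computation rests entirely on the definition of the normalizing factor.
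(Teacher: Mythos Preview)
Your proposal is correct and matches the paper's own proof essentially line for line: decompose $\delta_G(u)$ into the four local edges, the at most one long-range edge chosen by $u$, and the incoming long-range choices, then use linearity and the definition of $Z$ to see the last sum equals~$1$. Your explicit remark that double-counting only helps the inequality is a nice clarification the paper leaves implicit.
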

\begin{proof}
	Let $N_T(u)$ be the set of neighbors of $u$ in the torus $T$.
	In \UTSW, the vertex $u$ chooses one vertex $v\in V\setminus\{u\}$ and any vertex $w\in V\setminus\{u\}$ can choose $u$ to create a long-range edge.
	Let $X_u$ be the random variable that counts the number of vertices that chose $u$ to create a long-range edge.
	Let $X_{wu}$ be the random variable that is one if the vertex $w\in V\setminus\{u\}$ choose $u$ to create a long-range edge, or zero otherwise.
	As $X_u=\sum\limits_{w\in V\setminus\{u\}}X_{wu}$, by the linearity of expectation, $\E[X_u]=\sum\limits_{w\in V\setminus\{u\}}\Pr(X_{wu}=1)$.
	Using the Definitions~\ref{dfn:UTSWModel} and \ref{dfn:normalizingFactor} and $d_{uw}=d_{wu}$, then $\E[X_u]=1$.
	Thus, using the linearity of expectation, $\E[\delta_G(u)]\le|N_T(u)|+1+\E[X_u]=6$.
\end{proof}

\begin{lemma}
	\CS runs in expected constant time in any vertex $u\in V$.
	\label{lem:runningTime4CyclesSearch}
\end{lemma}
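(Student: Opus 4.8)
The plan is to show that the expected running time of \CS is constant by bounding the expected number of vertices explored by the depth-four search \RCS, and noting that the post-processing \DRA does work proportional to the square of the number of four-cycles found, which by Lemma~\ref{lem:sizeOf4CyclesSet} has expected size $4+\textrm{O}(\log^{-1}n)$, hence $\textrm{O}(1)$.

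First I would analyze \RCS. It performs a depth-first traversal that, at each recursive level $d\le3$, branches over the neighbors of the current vertex $v_d$. The number of root-to-depth-$d$ partial paths is therefore at most $\prod_{k=1}^{d}\delta_G(v_k)$, and the total work of \RCS is bounded (up to constants) by $\sum$ over all such partial paths of length at most four. The key obstacle is that the degrees $\delta_G(v_k)$ along a path are not independent of which vertices $v_k$ are reached, nor of each other: long-range edges both increase degrees and determine reachability. I would handle this by a worst-case-per-step argument combined with linearity of expectation: bound the number of depth-four paths by $\delta_G(u)\cdot\max_{v}\delta_G(v)^{3}$ is too lossy, so instead I would sum over ordered 4-tuples $(u=v_1,v_2,v_3,v_4)$ the probability that $v_1v_2v_3v_4$ is a path in $G$, which equals the probability that each consecutive pair is an edge. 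For each pair $\{v_i,v_{i+1}\}$ that is not a local edge of $T$, the probability it is present is at most $2Z\,d_{v_iv_{i+1}}^{-2}$ by Definition~\ref{dfn:UTSWModel} and a union bound over the two possible choosers; consecutive local edges contribute a factor $1$ but there are only $4$ of them at the root and $\textrm{O}(1)$ configurations of all-local partial paths. Summing over the remaining tuples, the long-range factors produce convergent sums of the form $\sum_{v}d_{uv}^{-2}=\textrm{O}(\log n)$ (Lemma~\ref{lem:sumOfDistances1}) or the iterated sums of Lemmas~\ref{lem:sumOfDistances1} and~\ref{lem:sumOfDistances2}, each multiplied by a factor $Z=\textrm{O}(\log^{-1}n)$ (Theorem~\ref{thm:upperBoundOfZ}); the powers of $Z$ cancel the logarithmic growth exactly as in the proof of Theorem~\ref{thm:upperBoundOfEu}, leaving $\textrm{O}(1)$. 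A cleaner alternative, which I would present if it goes through, is to observe that the expected number of edges incident to the search is governed by $\E[\delta_G(u)]\le6$ (Lemma~\ref{lem:expectedDegree}) and that, conditioned on the local torus structure, each extra long-range edge encountered at depth $\le3$ appears with the small probability controlled in Section~\ref{sec:cyclesOfSize4OutsideTheTorus}, so the expected branching factor beyond the fixed torus skeleton is $1+\textrm{O}(\log^{-1}n)$ at each of the three levels, giving an expected number of explored paths of $4\cdot(1+\textrm{O}(\log^{-1}n))^3=\textrm{O}(1)$.

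Next I would bound \DRA: it iterates once over the returned multiset $\mathcal{S}$ and, for each four-cycle, performs a membership test in $\mathcal{R}$; with a hash-based set this is expected $\textrm{O}(|\mathcal{S}|)$ total, and $|\mathcal{S}|\le2L$ where $L$ is as in Lemma~\ref{lem:sizeOf4CyclesSet}, so $\E[|\mathcal{S}|]=\textrm{O}(1)$. Finally, combining the expected $\textrm{O}(1)$ cost of \RCS with the expected $\textrm{O}(1)$ cost of \DRA (and using that expectations of sums are sums of expectations, together with the fact that the two phases are run sequentially), the total expected running time of \CS on any $u\in V$ is $\textrm{O}(1)$.

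The hard part will be making the dependence-handling in \RCS rigorous without an excessively lossy bound: a naive $\max$-degree bound fails because the maximum degree in $G$ can be as large as $\Theta(\sqrt{n})$ (a vertex can be chosen by many others), so the argument must be a direct summation over tuples with the long-range-edge probabilities $\le 2Z d_{uv}^{-2}$ inserted and the convergent sums of Section~\ref{sec:cyclesOfSize4OutsideTheTorus} invoked, so that each factor of a long-range edge buys a factor $Z=\Theta(\log^{-1}n)$ that is paid for by the corresponding $\textrm{O}(\log n)$ distance-sum — exactly the bookkeeping already carried out in Lemmas~\ref{lem:E1uProbability} through~\ref{lem:E9uProbability}.
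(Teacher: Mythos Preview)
Your proposal is correct in outline but takes a substantially different route from the paper. The paper's proof is much shorter: it sets $X_d$ to be the number of vertices at depth $d$ of \RCS, argues via a tower of conditional expectations that $\E[X_d\mid X_{d-1}=x_{d-1}]\le 6x_{d-1}$ (invoking Lemma~\ref{lem:expectedDegree}, $\E[\delta_G(u)]\le6$), hence $\E[X_d]\le6\E[X_{d-1}]$ and $\E\bigl[\sum_{d=0}^4X_d\bigr]\le\sum_{d=0}^46^d=1555$; the \DRA\ cost is then dispatched exactly as you do, via Lemma~\ref{lem:sizeOf4CyclesSet}. The paper justifies dropping the conditioning by asserting ``independence between the variables $D_k$ and $X_{d-1}$'', which is precisely the dependence issue you single out as the hard part --- a vertex reached by a long-range edge has that edge contributing to its degree, so the assertion is not literally true, though the conclusion survives with a slightly larger constant.

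Your primary approach --- summing $\Pr[(u,v_1,v_2,v_3)\text{ is a path}]$ over ordered tuples, classifying each step as local or long-range, and bounding each long-range factor by $2Z\,d^{-2}$ with the convergent sums of Lemmas~\ref{lem:sumOfDistances1}--\ref{lem:sumOfDistances2} --- is more laborious but genuinely sidesteps the dependence, since it never conditions on the search history; it reuses exactly the bookkeeping of Lemmas~\ref{lem:E1uProbability}--\ref{lem:E9uProbability}. What the paper's route buys is brevity and an explicit constant; what yours buys is rigor on the point the paper glosses over. Your ``cleaner alternative'' (branching factor $1+\textrm{O}(\log^{-1}n)$ beyond the torus skeleton) is closer in spirit to the paper but, as you sense, would need the same care to be made precise.
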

\begin{proof}
	Let $X_d$ be the random variable that counts the number of vertices at levels $1\le d\le4$ of \RCS.
	Let $x_{d-1}$ be the already known number of vertices at level $d-1$.
	Let $D_k$ be the random variable of the degree of each vertex $1\le k\le x_{d-1}$ at level $d-1$.
	Note that $X_0=x_0=1$ and $X_d=\sum_{k=1}^{x_{d-1}}D_k$ for all $1\le d\le4$.
	Then
	\begin{eqnarray*}
		\E[X_d|X_{d-1}=x_{d-1}] &=& \E\left[\sum\limits_{k=1}^{x_{d-1}}D_k|X_{d-1}=x_{d-1}\right]\\
								&=& \sum\limits_{k=1}^{x_{d-1}}\E[D_k|X_{d-1}=x_{d-1}]\\
								&=& \sum\limits_{k=1}^{x_{d-1}}\E[D_k]\\
								&\le& 6 x_{d-1},
	\end{eqnarray*}
	due to the linearity of expectation, to the independence between the variables $D_k$ and $X_{d-1}$ for all $1\le k\le x_{d-1}$ and $1\le d\le4$, and because $\E[D_k]=\E[\delta_G(u)]\le6$ for all $u\in V$, by Lemma~\ref{lem:expectedDegree}.
	So,
	\begin{eqnarray*}
		\E[X_d] &=& \sum\limits_{x_{d-1}\ge0}\E[X_d|X_{d-1}=x_{d-1}]\Pr(X_{d-1}=x_{d-1})\\
				&\le& 6\sum\limits_{x_{d-1}\ge0}x_{d-1}\Pr(X_{d-1}=x_{d-1})\\
				&=& 6\E[X_{d-1}].
	\end{eqnarray*}
	Solving the recurrence with base $\E[X_0]=1$, we have $\E[X_d]\le6^d$ for $d\ge0$.
	Let $X$ be the random variable that counts the total number of vertices at levels $0\le d\le4$ of the \RCS.
	As $X=\sum_{d=0}^4X_d$, then $\E[X]\le\sum_{d=0}^46^d=1555$.
	The result follows because \RCS visits at most an expected constant number of vertices, as shown in the bound of $\E[X]$, and because \DRA takes as input a list of cycles with at most constant expected size, by Lemma~\ref{lem:sizeOf4CyclesSet}.
\end{proof}

\subsection{Lattice pattern and detection}

\begin{definition}
	Given a set $\mathcal{C}$ of four cycles of size four, the \emph{lattice pattern} property consists in the existence of a sequence $\left(C^0,C^1,C^2,C^3\right)$ of the four cycles $C^k\in\mathcal{C}$ where for all $0\le k\le3$:
	\begin{enumerate}[(i)]
		\item the cycle $C^k$ has a same vertex $u$;\label{dfn:latticePattern1}
		\item the pair of consecutive cycles $C^k$ and $C^{((k+1)\mod4)}$ have a distinct intersecting edge $\{u,a_k\}$;\label{dfn:latticePattern2}
		\item the cycle $C^k$ has a distinct vertex $b_k\ne u$ neighbor of $a_{((k-1)\mod4)}$ and $a_k$;\label{dfn:latticePattern3}
		\item all nine vertices $u$, $a_k$ and $b_k$ are distinct each other.\label{dfn:latticePattern4}
	\end{enumerate}
	\label{dfn:latticePattern}
\end{definition}

A sequence of cycles with the properties of the Definition~\ref{dfn:latticePattern} may start with any cycle in $\mathcal{C}$.
In fact, if $\mathcal{C}$ is a lattice pattern, then there are four distinct sequences with the lattice pattern properties.
Note that if the sequence $\left(C^0,C^1,C^2,C^3\right)$ is a lattice pattern, then are the sequences $\left(C^1,C^2,C^3,C^0\right)$, $\left(C^2,C^3,C^0,C^1\right)$ and $\left(C^3,C^0,C^1,C^2\right)$ also.
Figure~\ref{fig:LatticePattern} illustrates a lattice pattern with a sequence of cycles in the clockwise.

The \emph{lattice pattern recognition algorithm}, \LPRA for short, takes as input a set $\mathcal{C}$ of four cycles of size four, with the same root vertex $u\in V$ and composed by distinct vertices, and outputs a boolean value that informs if the cycles define a lattice pattern.
It recognizes a lattice pattern finding a sequence of the cycles in $\mathcal{C}$ that has the four properties of the Definition~\ref{dfn:latticePattern}.
The property (\ref{dfn:latticePattern1}) of the lattice pattern is assumed to be true by the input constraints.

\begin{figure}
	\centering
	\input{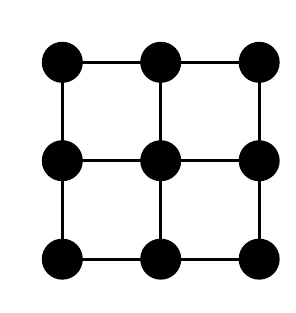_t}
	\caption{Four four-cycles that constitutes a lattice pattern.}
	\label{fig:LatticePattern}
\end{figure}

\LPRA, defined in Algorithm~\ref{alg:latticePatternRecognitionAlgorithm}, starts selecting an arbitrary cycle $C^0\in\mathcal{C}$, which is the first cycle of the sequence $\left(C^0,C^1,C^2,C^3\right)$ without loss of generality.
Note that $C^k=\left(c^k_1,c^k_2,c^k_3,c^k_4\right)$ is the cycle $0\le k\le3$ of the sequence and $c^k_i$ is the vertex $1\le i\le4$ of the cycle $k$ of the sequence.
Moreover, $c^k_1=u$ for all $k$, by the input constraints, and either $a_0=c^0_2$ or $a_0=c^0_4$.
Without loss of generality, \LPRA sets $a_0$ to $c^0_4$, removes $C^0$ from $\mathcal{C}$ and initializes the set $S$ on lines \ref{alg:latticePatternRecognitionAlgorithm1} to \ref{alg:latticePatternRecognitionAlgorithm2}.
In this case, $a_0=c^0_4$, $b_0=c^0_3$ and $a_3=c^0_2$ in the lattice pattern shown in Figure~\ref{fig:LatticePattern}.
\LPRA uses the set $S$ to check the property (\ref{dfn:latticePattern4}) in each iteration of the line \ref{alg:latticePatternRecognitionAlgorithm3} loop.

After that, \LPRA tries to find iteratively the next cycle of sequence $C^k$ through the edge $\{u,a_{k-1}\}$, keeping the property (\ref{dfn:latticePattern2}) true.
Then, either $a_{k-1}=c^k_2$ or $a_{k-1}=c^k_4$, because $c^k_1=u$ and considering both directions of the cycle.
If there is no $C^k\in\mathcal{C}$ such that, then there is no sequence of the cycles in $\mathcal{C}$ with the properties of the Definition~\ref{dfn:latticePattern} and $\mathcal{C}$ does not define a lattice pattern.

\begin{algorithm}
	\Instance{A set $\mathcal{C}$ of four cycles of size four with the same root vertex and composed by distinct vertices}
	\Returns{A boolean value that informs if $\mathcal{C}$ defines a lattice pattern}
	\BlankLine
	\Alg{\LPRA{$\mathcal{C}$}}{
		Choose an arbitrary cycle $C^0\in\mathcal{C}$, where $C^0=\left(c^0_1,c^0_2,c^0_3,c^0_4\right)$\\
		$\mathcal{C}\leftarrow\mathcal{C}\setminus\left\{C^0\right\}$\\\label{alg:latticePatternRecognitionAlgorithm1}
		$S\leftarrow\left\{c^0_1,c^0_3,c^0_4\right\}$\\\label{alg:latticePatternRecognitionAlgorithm5}
		$a_0\leftarrow c^0_4$\\\label{alg:latticePatternRecognitionAlgorithm2}
		\For{$k\leftarrow1$  \KwTo $3$}{\label{alg:latticePatternRecognitionAlgorithm3}
			Find a cycle $C^k\in\mathcal{C}$, where $C^k=\left(c^k_1,c^k_2,c^k_3,c^k_4\right)$, such that $c^k_2=a_{k-1}$ \textbf{or} $c^k_4=a_{k-1}$\\
			\lIf{\emph{such cycle} $C^k$ \emph{does not exist}}{\KwRet \false}
			\lIf{$\left\{c^k_2,c^k_3,c^k_4\right\}\setminus\{a_{k-1}\}\cap S\ne\emptyset$}{\KwRet \false}\label{alg:latticePatternRecognitionAlgorithm4}
			$\mathcal{C}\leftarrow\mathcal{C}\setminus\left\{C^k\right\}$\\\label{alg:latticePatternRecognitionAlgorithm7}
			$S\leftarrow S\cup\left\{c^k_2,c^k_3,c^k_4\right\}$\\
			$a_k\leftarrow\left\{c^k_2,c^k_4\right\}\setminus\{a_{k-1}\}$\label{alg:latticePatternRecognitionAlgorithm8}
		}
		\lIf{$a_3=c^0_2$}{\KwRet \true}\label{alg:latticePatternRecognitionAlgorithm6}
		\KwRet \false
	}
	\caption{Lattice pattern recognition algorithm}
	\label{alg:latticePatternRecognitionAlgorithm}
\end{algorithm}

Line \ref{alg:latticePatternRecognitionAlgorithm4} checks the property (\ref{dfn:latticePattern4}) in the cycles $C^0$ to $C^k$ in each iteration $1\le k\le3$.
When $k=3$ and $\mathcal{C}$ defines a lattice pattern, either $c^3_2=c^0_2$ or $c^3_4=c^0_2$.
This is the reason that \LPRA does not add $c^0_2$ in $S$ on line \ref{alg:latticePatternRecognitionAlgorithm5}.
However, this equality is checked on line \ref{alg:latticePatternRecognitionAlgorithm6}, keeping the property (\ref{dfn:latticePattern2}) true.
Lines from \ref{alg:latticePatternRecognitionAlgorithm7} to \ref{alg:latticePatternRecognitionAlgorithm8} removes $C^k$ from $\mathcal{C}$ for \LPRA selecting $C^k$ only once; adds the vertices of $C^k$ in $S$ for checking the property (\ref{dfn:latticePattern4}) and sets $a_k$ for finding the $C^k$ of the next iteration.
In the end, \LPRA computes a sequence $\left(C^0,C^1,C^2,C^3\right)$ and checks on line \ref{alg:latticePatternRecognitionAlgorithm6} the property (\ref{dfn:latticePattern2}) for the consecutive cycles $C^3$ and $C^0$.

\LPRA does not consider the property (\ref{dfn:latticePattern3}).
As all vertices of $C^k$ are distinct each other, for all $C^k\in\mathcal{C}$, by the input constraints, so the property (\ref{dfn:latticePattern3}) is true for all $0\le k\le3$.
Therefore, as the properties (\ref{dfn:latticePattern1}) and (\ref{dfn:latticePattern3}) are assumed to be true, the cycles $C^k$ are found based on property (\ref{dfn:latticePattern2}), and the property (\ref{dfn:latticePattern4}) is checked in each iteration, then \LPRA recognizes lattice patterns.
Moreover, the running time of \LPRA is constant, due to the set $\mathcal{C}$ having size four.

\subsection{Removing long-range edges}

Note that \CS outputs a set of four-cycles with the same root vertex and composed by distinct vertices.
Also, this set has at least four cycles that define a lattice pattern, because a \UTSW graph $G$ has a two-dimensional spanning torus $T$.
Then, it is possible to check the property of lattice pattern in all combinations of four cycles in the set that \CS outputs.
The design of the next algorithm is based on this and its aim is to remove a large fraction of the long-range edges in $G$.

Let $\mathcal{L}$ be the set of four-cycles returned by \CS with input $G$ and the root vertex $u\in V$.
Let $\mathcal{C}_1,\mathcal{C}_2,\ldots,\mathcal{C}_k\subseteq\mathcal{L}$ be all the $k\ge1$ distinct combinations of four distinct cycles in $\mathcal{L}$ with lattice pattern.
Let $U_i=(V_i,E_i)$ be the graph induced by the edges of the cycles in $\mathcal{C}_i$, for all $1\le i\le k$.
Let $U=\left(\bigcup_{i=1}^kV_i\;,\;\bigcup_{i=1}^kE_i\right)$.

All vertices in the torus $T$ have degree four, because $T$ is a two-dimensional torus.
So, if $\delta_U(u)=4$, then all edges incident to $u$ in $U$ are the local edges of $u$ in $T$, recalling that $\delta_H(v)$ is the degree of $v$ in the graph $H$.
As a consequence, the set of all long-range edges of $u$ in $G$ is $\partial_G(u)\setminus\partial_U(u)$, where $\partial_{H}(v)$ is the set of all edges incident to $v$ in the graph $H$.
We refer a vertex $u\in V$ \emph{detected} if $\delta_U(u)=4$ and this is a boolean attribute of each vertex of $G$.
Note that some vertices of $G$ may not be detected.

The \emph{long-range edges removing algorithm}, \LRERA for short, takes as input a \UTSW graph $G$ and outputs a graph $(V,E')$ without the long-range edges of all detected vertices.
As explained, if a vertex $u\in V$ is detected, then all its incident edges in $G$ can be identified as local or long-range edges.
If the \LRERA detects $u\in V$, then it does not add in $E'$ all edges in $\partial_G(u)\setminus\partial_U(u)$, adding only the local edges of $u$ in $T$.
The algorithm does this through a boolean attribute \emph{add} of the edges of $G$.

\begin{algorithm}
	\Instance{A \UTSW graph $G=(V,E)$}
	\Returns{A graph $(V,E')$ without the detected long-range edges, and the \emph{detected} attribute defined for each $u\in V$}
	\BlankLine
	\Alg{\LRERA{$G$}}{
		\lForEach{$e\in E$}{$e$.add $\leftarrow$ \true}
		\ForEach{$u\in V$}{
			$\mathcal{L}\leftarrow$ \CS{$G,u$}\\\label{alg:longRangeEdgesRemovingAlgorithm3}
			$E''\leftarrow\emptyset$\\
			\ForEach{$\mathcal{C}\in\binom{\mathcal{L}}{4}$}{\label{alg:longRangeEdgesRemovingAlgorithm5}
				\If{\LPRA{$\mathcal{C}$}}{
					$E''\leftarrow E''\cup\{$edges in $\mathcal{C}$ that are incident to $u\}$
				}
			}
			\If{$|E''|=4$}{\label{alg:longRangeEdgesRemovingAlgorithm2}
				$u$.detected $\leftarrow$ \true\\
				\lForEach{$e\in\partial_G(u)\setminus E''$}{$e$.add $\leftarrow$ \false}\label{alg:longRangeEdgesRemovingAlgorithm1}
			}
			\lElse{$u$.detected $\leftarrow$ \false}\label{alg:longRangeEdgesRemovingAlgorithm4}
		}
		$E'\leftarrow\emptyset$\\
		\ForEach{$e\in E$}{
			\lIf{$e$\emph{.add}}{$E'\leftarrow E'\cup\{e\}$}
		}
		\KwRet $(V,E')$
	}
	\caption{Long-range edges removing algorithm}
	\label{alg:longRangeEdgesRemovingAlgorithm}
\end{algorithm}

The \LRERA, defined in Algorithm~\ref{alg:longRangeEdgesRemovingAlgorithm}, initializes the \emph{add} attribute of each edge with the value \true.
After that, it tries to detect each $u\in V$.
If a vertex $u$ is detected, then it sets to {\false} the \emph{add} attribute of each long-range edge of $u$.
In the end, \LRERA adds in $E'$ all edges in $G$ where \emph{add} is \true.
Furthermore, it sets to {\true} the \emph{detected} attribute of all detected vertices, otherwise sets to {\false} for the remaining vertices.
Later, the procedures \emph{reference system labeling algorithm} (Algorithm~\ref{alg:referenceSystemLabelingAlgorithm}) and \emph{arbitrary cross labeling algorithm} (Algorithm~\ref{alg:arbitraryCrossLabelingAlgorithm}) use such attribute.

\LRERA runs \CS (Algorithm~\ref{alg:4CyclesSearch}), that outputs the set $\mathcal{L}$ of all four-cycles in $G$ with root vertex $u$.
After that, it adds in $E''$ the incident edges of $u$ for all combination $\mathcal{C}$ of four distinct cycles in $\mathcal{L}$ that \LPRA recognizes as lattice pattern.
That is, $E''$ is the set of incident edges of $u$ in the graph $U$.
So, $|E''|=\delta_U(u)$ and, as a consequence, if $|E''|=4$ then all edges in $E''$ belong to the spanning torus $T$.
Then, the algorithm sets to {\false} the \emph{add} attributes of each edge in $\partial_G(u)\setminus E''$.
Lemmas~\ref{lem:runningTimeLongRangeEdgesRemovingAlgorithm}, \ref{lem:DuProbability} and \ref{lem:numberOfDetectedVertices} show, respectively, that \LRERA runs in expected linear time, the probability of \LRERA detecting a vertex is high and \LRERA detects almost all vertices.

\begin{lemma}
	\LRERA runs in expected linear time on $|V|$.
	\label{lem:runningTimeLongRangeEdgesRemovingAlgorithm}
\end{lemma}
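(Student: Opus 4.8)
The plan is to bound the running time of \LRERA by summing the cost of its three phases over all vertices. The algorithm first initializes the \emph{add} attribute of every edge, which is $\Theta(|E|)$ work; by Lemma~\ref{lem:expectedDegree}, $\E[|E|]=\frac{1}{2}\sum_{u\in V}\E[\delta_G(u)]\le 3|V|$, so this phase takes expected $\textrm{O}(|V|)$ time. The final phase iterates over all edges of $G$ once more to build $E'$, which is again expected $\textrm{O}(|V|)$ time by the same bound. The bulk of the argument concerns the main loop over $u\in V$, so I would focus there.

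First I would fix a vertex $u$ and bound the expected cost of one iteration of the main loop. The call to \CS on line~\ref{alg:longRangeEdgesRemovingAlgorithm3} costs expected constant time by Lemma~\ref{lem:runningTime4CyclesSearch}, and it returns a set $\mathcal{L}$ of four-cycles whose expected size is $4+\textrm{O}(\log^{-1}n)$ by Lemma~\ref{lem:sizeOf4CyclesSet}. The loop on line~\ref{alg:longRangeEdgesRemovingAlgorithm5} ranges over $\binom{\mathcal{L}}{4}$, and for each such choice it calls \LPRA, which runs in constant time since its input always has size four. Thus the cost of the inner loop is $\textrm{O}\!\left(\binom{|\mathcal{L}|}{4}\right)$ up to constants, where $|\mathcal{L}|$ is the random size of the set returned by \CS. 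The remaining work in the iteration --- testing $|E''|=4$, setting the \emph{detected} flag, and clearing \emph{add} on $\partial_G(u)\setminus E''$ --- is $\textrm{O}(\delta_G(u))$, which has expected value at most $6$ by Lemma~\ref{lem:expectedDegree}.

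The hard part will be bounding $\E\!\left[\binom{|\mathcal{L}|}{4}\right]$, since $\binom{|\mathcal{L}|}{4}$ is a degree-four polynomial in $|\mathcal{L}|$ and Lemma~\ref{lem:sizeOf4CyclesSet} only controls the first moment $\E[|\mathcal{L}|]$. The approach is to write $|\mathcal{L}| = 4 + \sum_{c\in C} X_c$ as in the proof of Lemma~\ref{lem:sizeOf4CyclesSet}, where the $X_c$ are indicators for the ``extra'' four-cycles that use at least one long-range edge, and then expand $\binom{|\mathcal{L}|}{4}$; the dominant contribution comes from the fourth power, so one must bound $\E\!\left[\left(\sum_c X_c\right)^4\right]$. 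Expanding this multinomially gives sums of products $\Pr(X_{c_1}=1,\dots,X_{c_4}=1)$ over quadruples of extra cycles, and each such joint probability is bounded by a product of long-range-edge-choice probabilities $Z\cdot d^{-2}$ over the long-range edges appearing in $c_1\cup\cdots\cup c_4$. The crucial observation is that every such extra cycle contributes at least one factor of $Z=\Theta(\log^{-1}n)$ (by Theorems~\ref{thm:upperBoundOfZ} and~\ref{thm:lowerBoundOfZ}), and these factors, summed against the distance weights using the machinery of Lemmas~\ref{lem:sumOfDistances1} and~\ref{lem:sumOfDistances2}, yield sums that are $\textrm{O}(1)$ after absorbing the $\log$ factors. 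Hence $\E\!\left[\left(\sum_c X_c\right)^4\right] = \textrm{O}(1)$, so $\E\!\left[\binom{|\mathcal{L}|}{4}\right]=\textrm{O}(1)$, and each iteration of the main loop costs expected constant time.

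Finally I would combine the pieces: the main loop runs $|V|$ iterations, each of expected constant cost by the above, contributing expected $\textrm{O}(|V|)$ time; the initialization and the final edge scan each contribute expected $\textrm{O}(|V|)$ time. By linearity of expectation the total expected running time of \LRERA is $\textrm{O}(|V|)$, which proves the lemma. A reasonable shortcut, if one prefers to avoid the fourth-moment computation, is to invoke a pessimistic worst-case bound on $|\mathcal{L}|$ only in a vanishingly unlikely regime and the expectation bound otherwise; but the cleanest route is the moment estimate sketched above, since the same distance sums already appearing in Section~\ref{sec:cyclesOfSize4OutsideTheTorus} do all the heavy lifting.
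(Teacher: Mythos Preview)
Your decomposition matches the paper's proof almost exactly: initialization and final scan are linear because $|E|\le 3|V|$, and the per-vertex body of the main loop is expected constant time because \CS is expected $\textrm{O}(1)$ (Lemma~\ref{lem:runningTime4CyclesSearch}), $|\mathcal{L}|$ has constant expectation (Lemma~\ref{lem:sizeOf4CyclesSet}), \LPRA is $\textrm{O}(1)$, and $|\partial_G(u)\setminus E''|$ has constant expectation (Lemma~\ref{lem:expectedDegree}). Two small remarks. First, the paper uses the \emph{deterministic} bound $|E|\le 3|V|$ (torus edges $2|V|$ plus at most one long-range choice per vertex), so you do not need to pass through $\E[|E|]$. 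Second, you are being more careful than the paper: the paper simply cites $\E[|\mathcal{L}|]=\textrm{O}(1)$ and treats the loop over $\binom{\mathcal{L}}{4}$ as expected constant cost, without ever controlling $\E\!\left[\binom{|\mathcal{L}|}{4}\right]$. Your fourth-moment sketch via the indicators $X_c$ and the distance sums of Lemmas~\ref{lem:sumOfDistances1} and~\ref{lem:sumOfDistances2} is the right way to make that step rigorous, and it genuinely adds something the paper's proof omits; just be aware that the paper does not carry this out, so your argument is a strengthening rather than a divergence.
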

\begin{proof}
    \CS runs in expected constant time, by Lemma~\ref{lem:runningTime4CyclesSearch}.
    The expected size of $\mathcal{L}$ is constant, due to Lemma~\ref{lem:sizeOf4CyclesSet}.
    \LPRA running time is constant.
    For line \ref{alg:longRangeEdgesRemovingAlgorithm1}, note that, by Lemma~\ref{lem:expectedDegree}, $\E[\delta_G(u)]\le6$, and $|E''|=4$ by line \ref{alg:longRangeEdgesRemovingAlgorithm2}.
    So line \ref{alg:longRangeEdgesRemovingAlgorithm1} runs in expected constant time because $\E[|\partial_G(u)\setminus E''|]\le2$.
    Then, lines~\ref{alg:longRangeEdgesRemovingAlgorithm3} to \ref{alg:longRangeEdgesRemovingAlgorithm4} runs in expected constant time.
    The other loops run in linear time on $|V|$, because $|E|\le3|V|$ by the fact that \UTSW generates the spanning torus $T=\left(V,\hat{E}\right)$ such that $\left|\hat{E}\right|=2|V|$ and at most one long-range edge per vertex. Combining these, the result follows.
\end{proof}

\begin{lemma}
	Let $D_u$ be the event of \LRERA detecting $u\in V$ and $E_u$ be the event of the existence of at least one four-cycle, rooted in $u$, that is in $G$ but not in $T$. Then, $\Pr(D_u)\ge1-4\Pr(E_u)$.
	\label{lem:DuProbability}
\end{lemma}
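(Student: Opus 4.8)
The plan is to prove the stronger statement that $\neg D_u$ forces $E_u$; the claimed inequality then follows from $\Pr(D_u)=1-\Pr(\neg D_u)\ge 1-\Pr(E_u)\ge 1-4\,\Pr(E_u)$. Recall how \LRERA behaves on a root $u$: it builds $\mathcal{L}=\CS(G,u)$, collects into $E''$ every edge incident to $u$ that occurs in some four‑element subset $\mathcal{C}\in\binom{\mathcal{L}}{4}$ accepted by \LPRA, and sets $u$ detected exactly when $|E''|=4$. So the argument is really about pinning down $E''$, and it splits into two deterministic claims together with the complement rule.

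\emph{Claim 1: $E''$ always contains the four local edges of $u$, hence $|E''|\ge 4$ unconditionally.} The four ``unit squares'' of $T$ through $u$ are cycles on four distinct vertices, so \CS includes all of them in $\mathcal{L}$. I would then check that this particular four‑element subset satisfies Definition~\ref{dfn:latticePattern}: order the squares cyclically so that the shared spoke $\{u,a_k\}$ alternates between the two torus directions; then consecutive squares share exactly one spoke (property~(ii)), each square supplies its ``diagonal'' vertex $b_k$, adjacent to both $a_{k-1}$ and $a_k$ (property~(iii), which is in any case automatic once the cycles are on distinct vertices), property~(i) is the input hypothesis of \LPRA, and the nine vertices $u,a_0,\dots,a_3,b_0,\dots,b_3$ are precisely the vertices of the $3\times 3$ toroidal block centred at $u$, which are pairwise distinct for every $n\ge 3$ (property~(iv)). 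Consequently \LPRA accepts this subset, and its four edges incident to $u$ — exactly the local edges of $u$ — enter $E''$.

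\emph{Claim 2: if $E_u$ does not occur, then $E''$ consists only of local edges.} Indeed, $\neg E_u$ says every four‑cycle of $G$ rooted at $u$ already lies in $T$, and every edge of $T$ is local, so every cycle in $\mathcal{L}$ is made of local edges, and therefore every edge that enters $E''$ is a local edge incident to $u$. Since $u$ has exactly four local edges and, by Claim~1, all four already belong to $E''$, we get $E''=\{\text{the four local edges of }u\}$ and $|E''|=4$, so \LRERA declares $u$ detected. Contrapositively, $\neg D_u$ forces $E_u$; combining, $\Pr(D_u)\ge 1-\Pr(E_u)\ge 1-4\,\Pr(E_u)$.

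The one place that needs genuine care — and the main obstacle — is Claim~1 for the smallest tori ($n=3$ and $n=4$), where wrap‑around could a priori collapse two of the nine block vertices or create an unwanted adjacency among them; a short explicit check rules this out. One must also observe that \LPRA, started from an arbitrary unit square in either orientation, really does walk around the ring of four squares and return \true: this is exactly what the closing test on line~\ref{alg:latticePatternRecognitionAlgorithm6} guarantees, so the choice of starting square is immaterial. Everything else is routine bookkeeping about which edges are local versus long‑range, plus the trivial step $\Pr(E_u)\le 4\,\Pr(E_u)$.
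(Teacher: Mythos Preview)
Your argument is correct and in fact establishes the sharper inequality $\Pr(D_u)\ge 1-\Pr(E_u)$, from which the stated bound follows by the trivial $\Pr(E_u)\le 4\Pr(E_u)$. The underlying idea coincides with the paper's: non-detection of $u$ forces some accepted lattice pattern to contain a four-cycle with a long-range edge, and any such cycle witnesses $E_u$. The difference is in execution. The paper argues that the ``bad'' cycle sits in one of the four positions $C^0,\dots,C^3$ of the lattice-pattern sequence, introduces events $A_0,\dots,A_3$ (each a copy of $E_u$ indexed by position), and applies a union bound over these four events to obtain the factor~$4$. You bypass this decomposition entirely by observing directly that $\overline{D_u}\subseteq E_u$, which is why the factor~$4$ is slack in your version. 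You are also more explicit than the paper on two points it leaves implicit: that the four torus unit squares through $u$ are always accepted by \LPRA\ (your Claim~1, including the small-$n$ check), and hence that $|E''|\ge 4$ holds unconditionally. Both routes are short; yours is tighter and fills in the bookkeeping the paper skips.
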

\begin{proof}
	The event of \LRERA does not detect $u$, that is $\overline{D_u}$, corresponds to the existence of at least one lattice pattern in $\binom{\mathcal{L}}{4}$, in line \ref{alg:longRangeEdgesRemovingAlgorithm5}, composed by at least one four-cycle with at least one long-range edge incident to $u$.
	A lattice pattern is a well defined sequence of four cycles of size four.
	Note that the probability of the existence of at least one four-cycle in $u$ with at least one long-range edge in any position of the sequence $\left(C^0,C^1,C^2,C^3\right)$ bounds the probability of $\overline{D_u}$, because the last are particular cases of the first.
	Let $A_k$ be the event $E_u$, as defined in Section~\ref{sec:cyclesOfSize4OutsideTheTorus}, on the position $k$ of the lattice pattern sequence, for all $0\le k\le3$.
	Then, the probability of \LRERA does not detect $u$ is $\Pr\left(\overline{D_u}\right)\le\Pr\left(\bigcup_{k=0}^3A_k\right)$.
	Using union bound, $\Pr\left(\overline{D_u}\right)\le4\Pr(\E_u)$ and, then, $\Pr(D_u)\ge1-4\Pr(\E_u)$.
\end{proof}

\begin{lemma}
	The expected number of vertices that \LRERA detects is \emph{$|V|\left(1-\textrm{O}\left(\log^{-1}n\right)\right)$}.
	\label{lem:numberOfDetectedVertices}
\end{lemma}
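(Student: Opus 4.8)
The plan is to combine the per-vertex detection probability bound from Lemma~\ref{lem:DuProbability} with the probability bound from Theorem~\ref{thm:upperBoundOfEu} via linearity of expectation. Let $Y_u$ be the indicator random variable that is $1$ if \LRERA detects $u\in V$ and $0$ otherwise, and let $Y=\sum_{u\in V}Y_u$ count the number of detected vertices. By linearity of expectation, $\E[Y]=\sum_{u\in V}\Pr(D_u)$. The goal is to show $\E[Y]=|V|\left(1-\textrm{O}\left(\log^{-1}n\right)\right)$.

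First I would invoke Lemma~\ref{lem:DuProbability} to get $\Pr(D_u)\ge 1-4\Pr(E_u)$ for every $u\in V$, so that $\E[Y]\ge\sum_{u\in V}\bigl(1-4\Pr(E_u)\bigr)=|V|-4\sum_{u\in V}\Pr(E_u)$. Next I would apply Theorem~\ref{thm:upperBoundOfEu}, which gives $\Pr(E_u)=\textrm{O}\left(\log^{-1}n\right)$ uniformly over all $u\in V$ (the bound in that theorem does not depend on $u$), so $\sum_{u\in V}\Pr(E_u)=|V|\cdot\textrm{O}\left(\log^{-1}n\right)$. Substituting yields $\E[Y]\ge|V|-\textrm{O}\left(|V|\log^{-1}n\right)=|V|\left(1-\textrm{O}\left(\log^{-1}n\right)\right)$. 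For the matching upper bound I would simply note $Y_u\le 1$ deterministically, hence $\E[Y]\le|V|$, which is also of the form $|V|\left(1-\textrm{O}\left(\log^{-1}n\right)\right)$. Together these two bounds give the claimed exact asymptotic expression for $\E[Y]$.

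There is essentially no serious obstacle here: the heavy lifting has already been done in Section~\ref{sec:cyclesOfSize4OutsideTheTorus} (bounding $\Pr(E_u)$) and in Lemma~\ref{lem:DuProbability} (relating detection failure to the event $E_u$). The only point that deserves a line of care is that the $\textrm{O}\left(\log^{-1}n\right)$ bound on $\Pr(E_u)$ is uniform in $u$ — which it is, since Theorem~\ref{thm:upperBoundOfEu} bounds $\Pr(E_u)$ by an expression depending only on $n$ — so that summing over the $|V|=n^2$ vertices contributes only a factor $|V|$ and does not disturb the asymptotic order. Hence the expected number of detected vertices is $|V|\left(1-\textrm{O}\left(\log^{-1}n\right)\right)$, as claimed.
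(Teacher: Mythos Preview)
Your proposal is correct and follows essentially the same approach as the paper: define indicator variables for detection, apply linearity of expectation, then invoke Lemma~\ref{lem:DuProbability} and Theorem~\ref{thm:upperBoundOfEu}. The paper additionally writes out the explicit constant-level bound inherited from the proof of Theorem~\ref{thm:upperBoundOfEu}, while you add the (harmless) observation that the trivial upper bound $\E[Y]\le|V|$ matches the claimed form, but the argument is the same.
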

\begin{proof}
	For each $u\in V$, let $D_u$ be a random variable, such that $D_u=1$ if \LRERA detects $u$, $D_u=0$ otherwise.
	Let $D$ be the random variable that counts the number of vertices that \LRERA detects.
	So, $D=\sum_{u\in V}D_u$ and, by the linearity of expectation, $\E[D]=\sum_{u\in V}\E[D_u]=\sum_{u\in V}\Pr(D_u=1)$.
	The result follows by Lemma~\ref{lem:DuProbability} and by the proof of the Theorem~\ref{thm:upperBoundOfEu}.
	Then,
	\begin{align*}
		\E[D]\ge|V|(1-4(&70/9\ln^{-1}(n/2)+\\
						&(65/2\zeta(3)+403/128)\ln^{-2}(n/2)+\\
						&(24\zeta(3)+8)(\ln n+1)\ln^{-3}(n/2)+\\
						&(3/4\zeta(3)+1/4)(\ln n+1)^2\ln^{-4}(n/2))).
	\end{align*}
\end{proof}

Lemma~\ref{lem:numberOfDetectedVertices} means that the number of detected vertices increases as a reciprocal logarithmic function to $|V|$ whereas $n$ increases linearly.
We perform some experimental simulations that show that \LRERA detection rate is about $88\%$, $96.04\%$ and $96.35\%$ of the vertices, respectively, for $n=10, n=100$ and $n=150$.
Indeed, \LRERA detects all vertices when $n$ goes to infinity, as Corollary~\ref{cor:allVerticesAreDetectedForInfiniteN} shows.

\begin{corollary}
	The expected number of vertices that \LRERA detects tends to $|V|$ as $n$ goes to infinity.
	\label{cor:allVerticesAreDetectedForInfiniteN}
\end{corollary}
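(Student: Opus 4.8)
The plan is to read the claim as the assertion that $\E[D]/|V|\to1$ as $n\to\infty$, since $|V|=n^2$ itself diverges and so a literal limit of $\E[D]$ is not what is meant. First I would record the trivial upper bound $\E[D]\le|V|$, which holds because $D$ counts a subset of the $|V|$ vertices of $G$. Together with the explicit lower bound $\E[D]\ge|V|\bigl(1-4R(n)\bigr)$ derived at the end of the proof of Lemma~\ref{lem:numberOfDetectedVertices}, where $R(n)$ abbreviates the four-term sum appearing there, this sandwiches $\E[D]/|V|$ between $1-4R(n)$ and $1$.

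Next I would show $R(n)\to0$. The summands $\tfrac{70}{9}\ln^{-1}(n/2)$ and $(\tfrac{65}{2}\zeta(3)+\tfrac{403}{128})\ln^{-2}(n/2)$ plainly vanish as $n\to\infty$. For the remaining two, I would reuse the fact already noted in the proof of Theorem~\ref{thm:upperBoundOfEu} that $(\ln n+1)/\ln(n/2)$ is $\textrm{O}(1)$; hence $(\ln n+1)\ln^{-3}(n/2)=\textrm{O}(\ln^{-2}(n/2))\to0$ and $(\ln n+1)^2\ln^{-4}(n/2)=\textrm{O}(\ln^{-2}(n/2))\to0$. Consequently $R(n)=\textrm{O}(\log^{-1}n)=\textrm{o}(1)$, exactly the rate recorded in Lemma~\ref{lem:numberOfDetectedVertices}.

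Finally I would apply the squeeze theorem to the inequalities $1-4R(n)\le\E[D]/|V|\le1$ to conclude $\lim_{n\to\infty}\E[D]/|V|=1$, i.e.\ the expected number of detected vertices tends to $|V|$. I do not expect a genuine obstacle here: the only subtlety is notational, namely that both $\E[D]$ and $|V|$ diverge, so the content is solely that their ratio converges to $1$, and this is immediate from the $\textrm{O}(\log^{-1}n)$ bound of the preceding lemma combined with the trivial cap $\E[D]\le|V|$.
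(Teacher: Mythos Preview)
Your proposal is correct and follows essentially the same approach as the paper: both invoke the explicit lower bound on $\E[D]$ from Lemma~\ref{lem:numberOfDetectedVertices} and observe that the correction term vanishes as $n\to\infty$. Your reformulation in terms of the ratio $\E[D]/|V|$ together with the trivial cap $\E[D]\le|V|$ is in fact more precise than the paper's own write-up, which somewhat informally writes $\lim_{n\to\infty}\E[D]\ge|V|$ despite $|V|=n^2$ depending on $n$.
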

\begin{proof}
	Let $D$ be the random variable that counts the number of vertices that \LRERA detects.
	Then,
	\begin{alignat*}{2}
		\lim\limits_{n\rightarrow\infty}\textrm{E}[D]&\ge|V|(1-4\lim\limits_{n\rightarrow\infty}(&&70/9\ln^{-1}(n/2)+\\
			& &&(65/2\zeta(3)+403/128)\ln^{-2}(n/2)+\\
			& &&(24\zeta(3)+8)(\ln n+1)\ln^{-3}(n/2)+\\
			& &&(3/4\zeta(3)+1/4)(\ln n+1)^2\ln^{-4}(n/2)))\\
			&=|V|.&&
	\end{alignat*}
\end{proof}

\subsection{Labeling the reference system}

Recall that the main purpose of the \emph{labeling algorithm} (Algorithm~\ref{alg:labelingAlgorithm}) is to find a two-dimensional toroidal vertex labeling function for a spanning torus of the \UTSW graph $G$, as we explain in Section~\ref{sec:toroidalSmallWorldLabelingProblem}.
In order to label the vertices and obtain a labeling $\ell_T$, we need a \emph{reference system}.
A \emph{reference system} in $G$ is (i) an \emph{origin} that corresponds to a vertex $o\in V$ with label $(0,0)$ and (ii) the four neighbors $o_1,o_2,o_3,o_4\in V$ of $o$ in $T$ with the labels $(0,1)$, $(1,0)$, $(0,n-1)$ and $(n-1,0)$.
We refer a vertex $u\in V$ and its four neighbors in $T$ as the \emph{cross rooted in $u$}.
The labeling algorithm initializes the reference system by finding a possible origin and labeling the cross rooted in it.

\begin{algorithm}
	\Instance{A graph (``almost'' torus) $T'=(V,E')$}
	\Returns{A queue $Q$ and the labeling array $\ell$ with the reference system vertices labeled}
	\BlankLine
	\Alg{\RSLA{$T'$}}{
		\ForEach{$u\in V$}{\label{alg:referenceSystemLabelingAlgorithm4}
			$\ell[u]\leftarrow$ \nil\\
			$u$.enqueued $\leftarrow$ \false\label{alg:referenceSystemLabelingAlgorithm5}
		}
		\Repeat
		{\emph{$o$ \textbf{and} its neighbors in $T'$ are all detected}\label{alg:referenceSystemLabelingAlgorithm2}}
		{\label{alg:referenceSystemLabelingAlgorithm1}Choose $o\in V$ independently at random}
		$\ell[o]\leftarrow(0,0)$\\
		$(o_1,o_2,o_3,o_4)\leftarrow$ \RSVFA{$T',o$}\\
		$\ell[o_1]\leftarrow(0,1)$, $\ell[o_2]\leftarrow(1,0)$, $\ell[o_3]\leftarrow(0,n-1)$, $\ell[o_4]\leftarrow(n-1,0)$\\\label{alg:referenceSystemLabelingAlgorithm3}
		$Q\leftarrow\emptyset$\\
		$o$.enqueued $\leftarrow$ \true\\
		\For{$i\leftarrow1$ \KwTo $4$}{
			$Q\leftarrow Q\cup\{o_i\}$\\
			$o_i$.enqueued $\leftarrow$ \true
		}
		\KwRet $(Q,\ell)$
	}
	\Alg{\RSVFA{$T',o$}}{
		$\mathcal{C}\leftarrow$ \CS{$T',o$}\\\label{alg:referenceSystemLabelingAlgorithm6}
		Choose an arbitrary cycle $(c_1,c_2,c_3,c_4)\in\mathcal{C}$, where $c_1=o$\\
		$o_1\leftarrow c_2$, $o_2\leftarrow c_4$\\
		\For{$i\leftarrow3$ \KwTo $4$}{
			Find $(c_1,c_2,c_3,c_4)\in\mathcal{C}$, where $c_1=o$, such that $(c_2=o_{i-1}\wedge c_4\ne o_{i-2})\vee(c_4=o_{i-1}\wedge c_2\ne o_{i-2})$\\
			$o_i\leftarrow\{c_2,c_4\}\setminus\{o_{i-1}\}$
		}
		\KwRet $(o_1,o_2,o_3,o_4)$
	}
	\caption{Reference system labeling algorithm}
	\label{alg:referenceSystemLabelingAlgorithm}
\end{algorithm}

The \LRERA (Algorithm~\ref{alg:longRangeEdgesRemovingAlgorithm}) detects a large fraction of the vertices, thus, a large fraction of the long-range edges is removed from $G$.
So, the returned graph $T'=(V,E')$ is an ``almost'' torus.
Next, we select $o\in V$ to become the origin if $o$ and its neighbors in $T'$ have all their long-range edges removed.
\textit{I.e.}, the algorithm sets the vertex $o$ as the origin if \LRERA detected it and its neighbors.
Note that, such $o$ and its neighbors have degree equal to four in $T'$.
Thus, we can label the first cross and set it as the reference system of $G$.

The \emph{reference system labeling algorithm}, \RSLA for short, takes as input the graph $T'$ returned by \LRERA and outputs a queue of vertices $Q$ and a labeling array $\ell$, both already initialized.
Its main purpose is to initialize a breadth-first search that labels the remaining vertices (whenever possible), starting from the labeled vertices of the reference system.
It finds a vertex $o$ that can be the origin, labels the cross rooted in $o$ generating the reference system and enqueues in $Q$ only the neighbors of $o$.
The queue $Q$ is used later to perform the breadth-first search.

\RSLA, defined in Algorithm~\ref{alg:referenceSystemLabelingAlgorithm}, starts assigning the labels of all vertices to {\nil} value and their \emph{enqueue} attribute to \false.
After, it chooses $o\in V$ that can be a potential origin for the reference system, on lines \ref{alg:referenceSystemLabelingAlgorithm1} to \ref{alg:referenceSystemLabelingAlgorithm2}.
If $o$ and its neighbors in $T'$ are detected, then \RSLA sets $o$ as the origin.
It labels $o$ as $(0,0)$ and finds the neighbors of $o$ in $T'$ calling the \emph{reference system vertices finder algorithm}, \RSVFA for short.
\RSVFA runs \CS and adds in $\mathcal{C}$ all four-cycles in $T'$ with root vertex in $o$.
As $o$ is detected, so $\delta_{T'}(o)=4$, and each neighbor of $o$ in the cycles in $\mathcal{C}$ is one of the four vertices (ii) of the reference system.

When the size of the torus is $n=4$, $\mathcal{C}$ has two four-cycles rooted in $o$ composed only by local edges that wrap around the two-dimensional torus $T$.
This causes the labeling algorithm presented in this paper to fail.
Thus, we consider tori of sizes $n\ge5$ in the rest of the paper.
As $n\ge5$ and $o$ and its neighbors have degree four in $T'$, so there are only four distinct cycles in $\mathcal{C}$ and these define a lattice pattern.
So, \RSVFA chooses an arbitrary cycle $(c_1,c_2,c_3,c_4)\in\mathcal{C}$ and sets the first two vertices of the reference system $o_1$ and $o_2$ to $c_2$ and $c_4$, respectively.
This is possible because \CS outputs cycles so that $c_1=o$, $c_2\ne c_4$ and because $\mathcal{C}$ defines a lattice pattern.
\RSVFA finds $o_3$ and $o_4$ in the cycles in $\mathcal{C}$ in a similar way that \LPRA does.

After \RSVFA finding the vertices of the reference system $o_1$, \ldots, $o_4$, \RSLA labels them on line \ref{alg:referenceSystemLabelingAlgorithm3}.
Note that the algorithm computes $n$ as $|V|^{1/2}$.
Finally, it enqueues the neighbors of $o$ in $Q$ and tags the first cross as already enqueued, in order to not enqueue these vertices again in the breadth-first search, performed by the \emph{labeling algorithm} (\LA), described in Section~\ref{subsec:mainAlgorithm}.
Note that it is not necessary to enqueue $o$ because $o$ and its neighbors are already labeled.
Lemma~\ref{lem:numberOfVerticesChosenAsOrigin} bounds the expected running time of lines from \ref{alg:referenceSystemLabelingAlgorithm1} to \ref{alg:referenceSystemLabelingAlgorithm2}.

\begin{lemma}
	Lines \ref{alg:referenceSystemLabelingAlgorithm1} to \ref{alg:referenceSystemLabelingAlgorithm2} of \RSLA run in expected constant time, when receiving an output of \LRERA.
	\label{lem:numberOfVerticesChosenAsOrigin}
\end{lemma}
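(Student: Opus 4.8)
The plan is to treat lines~\ref{alg:referenceSystemLabelingAlgorithm1}--\ref{alg:referenceSystemLabelingAlgorithm2} as a repetition of cheap trials: each pass of the \emph{repeat} loop draws a fresh uniform $o\in V$ and tests the stopping condition, so it suffices to show that (a) one pass costs $O(1)$ and (b) one pass stops the loop with probability bounded below by a positive constant independent of $n$; then the expected number of passes, and hence the total running time, is $O(1)$.

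For (a): reading the \emph{detected} flag of $o$ is $O(1)$; if $o$ is not detected the condition is already false and the pass ends, while if $o$ is detected then $\delta_{T'}(o)=4$ (as noted in the text), so scanning the neighbours of $o$ in $T'$ and their flags is $O(1)$ too. Even with no short-circuiting, $\E[\delta_G(o)]\le6$ by Lemma~\ref{lem:expectedDegree} bounds the neighbour scan in expectation, since $T'$ is a subgraph of $G$; so the loop runs in $O(1)$ time per pass.

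For (b): the stopping condition holds exactly when $o$ and its four neighbours in the torus $T$ are all detected (when $o$ is detected, its $T'$- and $T$-neighbourhoods coincide; when it is not, the condition fails regardless). By the union bound over the five vertices of the cross rooted at $o$, together with the symmetry of the torus, a pass fails with probability at most $5\Pr(\overline{D_u})$ for any fixed $u$; Lemma~\ref{lem:DuProbability} gives $\Pr(\overline{D_u})\le4\Pr(E_u)$ and Theorem~\ref{thm:upperBoundOfEu} gives $\Pr(E_u)=\,$O$\left(\log^{-1}n\right)$, so for $n$ above a suitable constant a pass succeeds with probability at least $1/2$. For the finitely many sizes $5\le n$ below that constant the success probability is still a fixed positive number: with positive probability every vertex's long-range choice lands on one of its four torus neighbours, so that no long-range edge is added, $G=T$, and (since $n\ge5$) every vertex is detected. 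Taking the minimum over this finite set gives a universal constant $c>0$ lower-bounding the per-pass success probability, and (a)--(b) yield the lemma.

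The point needing care is that the graph $G$ --- hence the set of detected vertices --- is sampled once and reused across passes, so the passes are not literally independent: conditioned on $G$, the number of passes is geometric with parameter $q(G)=|\{w\in V:\text{the cross rooted at }w\text{ is detected}\}|/|V|$, and the honest quantity to bound is $\E_G[1/q(G)]$. I would control it by noting that a vertex's \emph{detected} status is determined by the long-range edges incident to vertices in a ball of constant radius around it (\CS explores only four-cycles, \LPRA only constant-size combinations of them), so the detection indicators are locally determined and their sum concentrates sharply about its mean $\bigl(1-\textrm{O}(\log^{-1}n)\bigr)|V|$; this gives $q(G)\ge1/2$ except on an event of probability super-polynomially small in $|V|$, which overwhelms the $1/q(G)\le|V|$ bound valid on that rare event, provided $q(G)$ is never zero. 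Making this concentration step --- and verifying that $q(G)>0$ always, so the loop terminates almost surely --- fully rigorous is the main obstacle; everything else follows from the already-established lemmas.
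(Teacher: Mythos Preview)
Your core argument is the paper's: model the loop as geometric trials, and bound the per-trial success probability from below via a union bound over the five vertices of the cross, invoking Lemma~\ref{lem:DuProbability} and Theorem~\ref{thm:upperBoundOfEu}. The paper is terser --- it simply declares the trial count $X$ to be a geometric random variable with parameter $p=\Pr\bigl(D_o\cap\bigcap_{v\in N_T(o)}D_v\bigr)$, bounds $p$ from below by De~Morgan and the union bound, and writes $\E[X]<1/p$ with an explicit expression. Your part~(a), the $O(1)$ cost per pass, and your separate treatment of small $n$ are additions the paper does not make.

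Where you depart from the paper is in your third paragraph: you correctly observe that $G$ is drawn once and reused across trials, so the honest quantity is $\E_G[1/q(G)]$ rather than $1/\E_G[q(G)]$, and Jensen goes the wrong way. The paper does not engage with this at all --- it treats $X$ as geometric with the marginal success probability and stops. Your sketched concentration fix (detection is locally determined, so $q(G)$ concentrates) is a reasonable route to close the gap, but be aware that the paper's own proof ends at the level of your second paragraph. The incompleteness you flag --- making the concentration rigorous and verifying $q(G)>0$ almost surely --- is a gap shared by the paper's argument, not one you have introduced.
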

\begin{proof}
	Let $X$ be the random variable that counts the number of vertices that \RSLA chooses until it finds an origin for the reference system. Note that $X$ is a geometric random variable.
	Let $D_w$ be the event of \LRERA detecting the vertex $w\in V$.
	The probability of a $o\in V$ being an origin is $p=\Pr\left(D_o\cap\bigcap\limits_{v\in N_T(o)}D_v\right)$, where $N_T(o)$ is the set of neighbors of $o$ in the original spanning torus $T$ of $G$ generated by the \UTSW model.
	By De Morgan's laws, $p=1-\Pr\left(\overline{D_o}\cup\bigcup\limits_{v\in N_T(o)}\overline{D_v}\right)$.
	By Lemma~\ref{lem:DuProbability}, $\Pr\left(\overline{D_w}\right)\le4\Pr(E_w)$ for all $w\in V$.
	The result follows by union bound, Theorem~\ref{thm:upperBoundOfEu} proof and by the fact that $X$ is a geometric random variable.
	Then,
	\begin{alignat*}{2}
		\E[X]&<(1-(&&1400/9\ln^{-1}(n/2)+\\
			& &&(650\zeta(3)+2015/32)\ln^{-2}(n/2)+\\
			& &&(480\zeta(3)+160)(\ln n+1)\ln^{-3}(n/2)+\\
			& &&(15\zeta(3)+5)(\ln n+1)^2\ln^{-4}(n/2)))^{-1}.
	\end{alignat*}
\end{proof}

The running time of Algorithm~\ref{alg:referenceSystemLabelingAlgorithm} depends on: (i) the running time of the initializations of labels and \emph{enqueued} attribute of the vertices on lines \ref{alg:referenceSystemLabelingAlgorithm4} to \ref{alg:referenceSystemLabelingAlgorithm5}; (ii) the running time of the lines \ref{alg:referenceSystemLabelingAlgorithm1} to \ref{alg:referenceSystemLabelingAlgorithm2} and (iii) the running time of \RSVFA, which depends on the running time of \CS on line \ref{alg:referenceSystemLabelingAlgorithm6}.
Lemmas~\ref{lem:numberOfVerticesChosenAsOrigin} and \ref{lem:runningTime4CyclesSearch} bound (ii) and (iii), respectively, and (i) is linear on $|V|$.
Combining these, we state Corollary~\ref{cor:runnigTimeOfReferenceSystemLabelingAlgorithm}.

\begin{corollary}
	\RSLA runs in expected linear time on $|V|$, when receiving an output of \LRERA.
	\label{cor:runnigTimeOfReferenceSystemLabelingAlgorithm}
\end{corollary}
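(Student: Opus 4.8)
The plan is to split \RSLA into four essentially independent pieces, bound the expected running time of each, and combine them by linearity of expectation. The pieces are: (a)~the initialization loop of lines~\ref{alg:referenceSystemLabelingAlgorithm4}--\ref{alg:referenceSystemLabelingAlgorithm5}, which sets every label to \nil{} and every \emph{enqueued} flag to \false; (b)~the \textbf{repeat} loop of lines~\ref{alg:referenceSystemLabelingAlgorithm1}--\ref{alg:referenceSystemLabelingAlgorithm2}, which redraws a candidate origin $o$ until $o$ and all its $T'$-neighbors are detected; (c)~the call to \RSVFA, which in turn invokes \CS; and (d)~the constant-size bookkeeping at the end (labeling $o$ and $o_1,\dots,o_4$, computing $n=|V|^{1/2}$, and enqueuing the four neighbors of $o$).

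First I would dispose of the easy pieces. Part~(a) touches each of the $|V|$ vertices once with $\mathrm{O}(1)$ work per vertex, so it costs $\Theta(|V|)$ deterministically; part~(d) is $\mathrm{O}(1)$ by inspection. Part~(b) is precisely what Lemma~\ref{lem:numberOfVerticesChosenAsOrigin} bounds: the number of candidate draws is a geometric random variable whose success probability is bounded away from zero (via Lemma~\ref{lem:DuProbability} and the estimates in the proof of Theorem~\ref{thm:upperBoundOfEu}), hence has $\mathrm{O}(1)$ expectation, and each draw together with its detection checks is $\mathrm{O}(1)$. So part~(b) runs in expected constant time.

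The substantive piece is~(c). \RSVFA starts by calling \CS on $T'$ rooted at $o$. Although \CS and Lemma~\ref{lem:runningTime4CyclesSearch} are phrased for a \UTSW graph $G$, the running-time analysis there uses only that every vertex has expected degree at most $6$; since $T'=(V,E')$ is obtained from $G$ by deleting edges, $\delta_{T'}(w)\le\delta_G(w)$ for all $w$, so the same recurrence $\E[X_d]\le 6^d$ holds and \CS runs in expected constant time on $T'$ as well. The returned set $\mathcal{C}$ has expected constant size by Lemma~\ref{lem:sizeOf4CyclesSet} (indeed exactly four cycles, since $o$ is detected and $n\ge5$), so choosing an arbitrary cycle and running the two iterations of the \textbf{for} loop---each of which scans $\mathcal{C}$ at cost $\mathrm{O}(|\mathcal{C}|)$---takes expected constant time. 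Hence \RSVFA runs in expected constant time, and summing (a)--(d) by linearity of expectation gives the claimed expected $\mathrm{O}(|V|)$ bound.

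The point that needs the most care, and which I expect to be the main obstacle, is the conditioning introduced by the \textbf{repeat} loop: the $o$ passed to \RSVFA is not a uniformly random vertex but one drawn conditioned on $o$ and its $T$-neighbors being detected, so the cost of the inner \CS call is being measured under that conditioning rather than the unconditional \UTSW distribution. I would handle this either by noting that detection forces $\delta_{T'}(o)=4$ (and likewise for each $o_i$), so the two-hop neighborhood of $o$ in $T'$ is in fact ``nicer'' than that of a generic vertex, or, more crudely, by bounding the conditional expectation of the \CS cost by its unconditional expectation divided by the probability that $o$ qualifies as an origin, which is bounded away from zero for large $n$ and so still leaves the bound at $\mathrm{O}(1)$.
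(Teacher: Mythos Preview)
Your decomposition into pieces (a)--(d) and the lemmas you invoke for each piece match the paper's argument exactly: the paper splits the cost into (i) the initialization loop, (ii) the \textbf{repeat} loop bounded by Lemma~\ref{lem:numberOfVerticesChosenAsOrigin}, and (iii) \RSVFA via Lemma~\ref{lem:runningTime4CyclesSearch}, and then combines them. Your proposal is correct and follows the same route; you are in fact more careful than the paper on two points it glosses over---that \CS is invoked on the subgraph $T'$ rather than $G$, and that the cost of \RSVFA is measured conditionally on $o$ and its neighbors being detected---and your handling of both (degrees only decrease in $T'$; divide by the success probability, which is bounded away from zero) is sound.
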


\subsection{Labeling arbitrary crosses}

The previous section shows how to label the vertices of the reference system.
After that procedure, we need to label the remaining vertices of the graph $T'=(V,E')$ that \LRERA outputs.
It is possible to label the cross rooted in a detected and labeled vertex $u\in V$ if there is a lattice pattern rooted in $u$ with some properties.
The algorithm of this section performs this labeling.

The \emph{arbitrary cross labeling algorithm}, \ACLA for short, takes as input the graph $T'$, a detected and labeled vertex $u$, a queue of vertices $Q$ and a labeling array $\ell$.
It outputs the queue $Q$ with the cross rooted in $u$ enqueued and the labeling array $\ell$ with the same cross labeled.
The algorithm starts running \CS to find the set $\mathcal{L}$ of all four-cycles rooted in $u$.
We claim that $\mathcal{L}$ has at least one lattice pattern $\mathcal{C}\subseteq\mathcal{L}$ composed by a cycle with detected and labeled vertices such that these vertices can be used to label the cross rooted in $u$.
Such claim holds because the \emph{labeling algorithm} (\LA), defined in Algorithm~\ref{alg:labelingAlgorithm}, runs a breadth-first search, that already labeled some vertices of at least one cycle in $\mathcal{C}$ that can be used in the labeling.
Section~\ref{subsec:mainAlgorithm} shows this with more details in Theorem~\ref{thm:labelingAlgorithmLabelsCrossOfEachEnqueuedVertex}.

Based on that claim, there is a cycle $(c_1,c_2,c_3,c_4)$ in a lattice pattern $\mathcal{C}\subseteq\mathcal{L}$ that provides a \emph{reference}.
A \emph{reference} (do not confuse with \emph{reference system}) is a pair of local edges in $(c_1,c_2,c_3,c_4)$ with labeled endpoints and that are \emph{perpendicular} in the lattice pattern $\mathcal{C}$.
Two local edges are \emph{perpendicular} in $\mathcal{C}$ if they are consecutive in $(c_1,c_2,c_3,c_4)$.
Figure~\ref{fig:PossibleReferencesToLabelTheCross} illustrates these definitions, where $c_1=u$.
So, the algorithm finds a cycle $(c_1,c_2,c_3,c_4)$ with a reference and places the neighbors of $u$ in the lattice pattern $\mathcal{C}$ on the four endpoints of the cross $u_1$, \ldots, $u_4$, in a similar way that \RSLA does.
After that, it labels $u_1$, \ldots, $u_4$ using the topological information in the reference of $(c_1,c_2,c_3,c_4)$.

\begin{figure}
	\centering
	\input{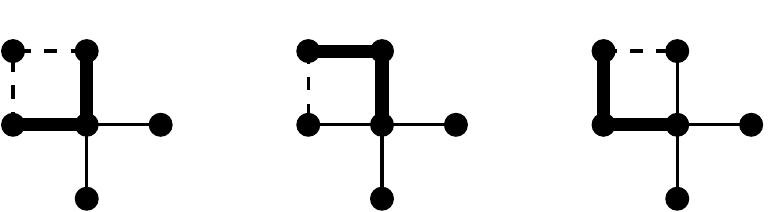_t}
	\caption{Possible references to label the cross. The bold edges represent perpendicular edges.}
	\label{fig:PossibleReferencesToLabelTheCross}
\end{figure}

Let $(c_1,c_2,c_3,c_4)\in\mathcal{C}$ be the four-cycle that has a reference for labeling the cross rooted in $u$.
Recall that $c_1=u$ because \CS outputs $\mathcal{L}$.
If $c_2$ and $c_4$ are labeled, then the edges $\{u,c_2\}$ and $\{u,c_4\}$ are a reference.
These edges are perpendicular in $\mathcal{C}$ and are local edges because $u$ is detected.
Figure~\ref{fig:PossibleReferencesToLabelTheCross} illustrates this case on the left side, where the bold edges are the reference.
Besides that, $(c_1,c_2,c_3,c_4)$ may have three more distinct possible references.
If $c_2$ or $c_3$ are detected, then the edges $\{u,c_2\}$ and $\{c_2,c_3\}$ are local edges because $u$ is detected.
Moreover, if $c_2$ and $c_3$ are labeled, then these edges are a reference, by the fact that $u$ is also labeled.
Figure~\ref{fig:PossibleReferencesToLabelTheCross} illustrates this case in the middle.
The same happens with vertices $c_3$ and $c_4$ such that the edges $\{u,c_4\}$ and $\{c_3,c_4\}$ are a reference.
Figure~\ref{fig:PossibleReferencesToLabelTheCross} illustrates this case in the right side.
It is not necessary to consider the case where the edges $\{c_2,c_3\}$ and $\{c_3,c_4\}$ are a reference.
The reason is that the latter two cases happen simultaneously when this happens, because $u$ is detected and labeled.

Algorithm~\ref{alg:arbitraryCrossLabelingAlgorithm} defines \ACLA.
Lines \ref{alg:arbitraryCrossLabelingAlgorithm1} to \ref{alg:arbitraryCrossLabelingAlgorithm2} find a lattice pattern $\mathcal{C}$ and a reference for labeling the cross rooted in $u$ as explained above.
It sets the first two vertices of the cross, $u_1$ and $u_2$, and labels one of them, if required.
Lines \ref{alg:arbitraryCrossLabelingAlgorithm3} to \ref{alg:arbitraryCrossLabelingAlgorithm4} set the other two vertices, $u_3$ and $u_4$, in a similar way that \RSVFA does.
\ACLA finishes labeling the vertices of the cross, enqueuing in $Q$ those that are detected and are not enqueue yet, and assigning their \emph{enqueued} attribute to {\true} in order to not enqueue those vertices again.

\begin{algorithm}
	\Instance{A graph (``almost'' torus) $T'=(V,E')$, a vertex $u\in V$, a queue $Q$ and a labeling array $\ell$}
	\Returns{A queue $Q$ and the labeling array $\ell$ with the cross rooted in $u$ labeled}
	\BlankLine
	\Alg{\ACLA{$T',u,Q,\ell$}}{
		$\mathcal{L}\leftarrow$ \CS{$T',u$}\\\label{alg:arbitraryCrossLabelingAlgorithm1}
		\ForEach{\emph{lattice pattern} $\mathcal{C}\subseteq\mathcal{L}$}{
			\ForEach{$(c_1,c_2,c_3,c_4)\in\mathcal{C}$\emph{, where} $c_1=u$}{
				\If{$\ell[c_2]\ne$ \emph{\nil} $\wedge$ $\ell[c_4]\ne$ \emph{\nil}}{
					$u_1\leftarrow c_2$, $u_2\leftarrow c_4$\\
					Break both \textbf{for}
				}\ElseIf{$(c_2$.\emph{detected} $\vee$ $c_3$.\emph{detected}$)$ $\wedge$ $\ell[c_2]\ne$ \emph{\nil} $\wedge$ $\ell[c_3]\ne$ \emph{\nil}}{
					$u_1\leftarrow c_2$, $u_2\leftarrow c_4$\\
					$\ell[u_2]\leftarrow\ell[u] + \ell[c_3] - \ell[c_2]$\\\label{alg:arbitraryCrossLabelingAlgorithm5}
					Break both \textbf{for}
				}\ElseIf{$(c_3$.\emph{detected} $\vee$ $c_4$.\emph{detected}$)$ $\wedge$ $\ell[c_3]\ne$ \emph{\nil} $\wedge$ $\ell[c_4]\ne$ \emph{\nil}}{
					$u_1\leftarrow c_2$, $u_2\leftarrow c_4$\\
					$\ell[u_1]\leftarrow\ell[u] + \ell[c_3] - \ell[c_4]$\\\label{alg:arbitraryCrossLabelingAlgorithm6}
					Break both \textbf{for}\label{alg:arbitraryCrossLabelingAlgorithm2}
				}
			}
		}
		\For{$i\leftarrow3$ \KwTo $4$}{\label{alg:arbitraryCrossLabelingAlgorithm3}
			Find $(c_1,c_2,c_3,c_4)\in\mathcal{C}$, where $c_1=u$, such that $(c_2=u_{i-1}\wedge c_4\ne u_{i-2})\vee(c_4=u_{i-1}\wedge c_2\ne u_{i-2})$\\
			$u_i\leftarrow\{c_2,c_4\}\setminus\{u_{i-1}\}$\label{alg:arbitraryCrossLabelingAlgorithm4}
		}
		$\ell[u_3]\leftarrow 2\ell[u] - \ell[u_1]$\\\label{alg:arbitraryCrossLabelingAlgorithm7}
		$\ell[u_4]\leftarrow 2\ell[u] - \ell[u_2]$\\\label{alg:arbitraryCrossLabelingAlgorithm8}
		\For{$i\leftarrow1$ \KwTo $4$}{
			\If{$($\emph{\textbf{not}} $u_i$.\emph{enqueued}$)$ $\wedge$ $u_i$.\emph{detected}}{
				$Q\leftarrow Q\cup\{u_i\}$\\
				$u_i$.enqueued $\leftarrow$ \true
			}
		}
		\KwRet $(Q,\ell)$
	}
	\caption{Arbitrary cross labeling algorithm}
	\label{alg:arbitraryCrossLabelingAlgorithm}
\end{algorithm}

The vertex labeling assignments on lines \ref{alg:arbitraryCrossLabelingAlgorithm5}, \ref{alg:arbitraryCrossLabelingAlgorithm6}, \ref{alg:arbitraryCrossLabelingAlgorithm7} and \ref{alg:arbitraryCrossLabelingAlgorithm8} perform operations of two-dimensional vector addition and subtraction among the labels of the vertices.
These operations may not result in elements of $[\![n]\!]^2$.
The algorithm solves this by replacing each coordinate of the label $\ell[v]_j$ by $\ell[v]_j\mod n$, where $j$ is the label dimension of the vertex $v$.
Combining Lemmas~\ref{lem:runningTime4CyclesSearch} and \ref{lem:sizeOf4CyclesSet}, together with the fact that \LPRA runs in constant time, we conclude the result of the Corollary~\ref{cor:runnigTimeOfArbitraryCrossLabelingAlgorithm}.

\begin{corollary}
	\ACLA runs in expected constant time, when receiving an output of \LRERA.
	\label{cor:runnigTimeOfArbitraryCrossLabelingAlgorithm}
\end{corollary}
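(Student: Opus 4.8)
The plan is to walk through \ACLA and reduce its cost to three facts established earlier: \CS runs in expected constant time (Lemma~\ref{lem:runningTime4CyclesSearch}), the four-cycle set it returns has expected constant size (Lemma~\ref{lem:sizeOf4CyclesSet}), and \LPRA runs in constant time.

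First I would note that these two lemmas transfer from $G$ to the ``almost torus'' $T'=(V,E')$ that \LRERA outputs. Since $E'\subseteq E$, every root-to-depth-$d$ path explored by \RCS on $T'$ is also one on $G$, so the level counts of \RCS on $T'$ are dominated pointwise by those on $G$; hence $\E[X_d]\le6^d$ and the bound on the number of visited vertices carry over, and the set $\mathcal L$ of four-cycles rooted in $u$ in $T'$ is a subset of the corresponding set in $G$, so $\E[|\mathcal L|]\le4+\textrm{O}(\log^{-1}n)$. This bounds line~\ref{alg:arbitraryCrossLabelingAlgorithm1}. For the rest, the loop over lattice patterns is realised by scanning the $\binom{|\mathcal L|}{4}$ four-element subsets of $\mathcal L$ and calling \LPRA on each at constant cost; the inner loop over $(c_1,c_2,c_3,c_4)\in\mathcal C$ runs at most four times with only constant arithmetic on two-dimensional labels; and the two short \textbf{for} loops that build $u_3,u_4$ and then enqueue $u_1,\dots,u_4$ each scan the four cycles of $\mathcal C$ a constant number of times. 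Thus the running time of \ACLA is $\textrm{O}(1)$ plus the cost of \CS plus $\textrm{O}(|\mathcal L|^4)$, and in expectation this is $\textrm{O}(1)+\textrm{O}(\E[|\mathcal L|^4])$.

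The delicate step — and the one I expect to be the main obstacle — is that Lemma~\ref{lem:sizeOf4CyclesSet} only controls $\E[|\mathcal L|]$, whereas the lattice-pattern loop costs a fourth-degree polynomial in $|\mathcal L|$, so I need $\E[|\mathcal L|^4]=\textrm{O}(1)$. I would obtain this from the same branching argument as in Lemma~\ref{lem:runningTime4CyclesSearch}: $|\mathcal L|$ is at most the number of depth-four search paths, a bounded-depth sum of per-vertex degrees, and by the decomposition in Lemma~\ref{lem:expectedDegree} each degree satisfies $\delta_{T'}(v)\le\delta_G(v)\le5+X_v$ with $X_v$ a sum of independent Bernoulli indicators of total mean one. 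The factorial moments of such a sum are at most $1/k!$, so all of its integer moments are bounded by absolute constants; iterating the conditioning step of Lemma~\ref{lem:runningTime4CyclesSearch} a constant number of times then gives $\E[|\mathcal L|^4]=\textrm{O}(1)$. (Equivalently, one can argue as in Lemma~\ref{lem:runningTimeLongRangeEdgesRemovingAlgorithm} that the expected number of cycles fed to the lattice-pattern loop is constant.) Together with the preceding paragraph this yields the expected constant running time of \ACLA on an output of \LRERA.
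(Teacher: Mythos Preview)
Your approach is the same as the paper's: the corollary is justified there in a single sentence by combining Lemmas~\ref{lem:runningTime4CyclesSearch} and~\ref{lem:sizeOf4CyclesSet} with the constant running time of \LPRA. You are in fact more careful than the paper on two points it leaves implicit --- that $E'\subseteq E$ transfers the \CS bounds from $G$ to $T'$, and that the lattice-pattern loop requires control of $\E[|\mathcal L|^4]$ rather than merely $\E[|\mathcal L|]$; the paper (here and already in Lemma~\ref{lem:runningTimeLongRangeEdgesRemovingAlgorithm}) simply treats the expected constant size of $\mathcal L$ as sufficient. One minor slip: the $k$th factorial moment of your $X_v$ is bounded by $1$, not $1/k!$, but your conclusion that all its raw moments are bounded by absolute constants is correct.
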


\subsection{Main algorithm}
\label{subsec:mainAlgorithm}

This section presents the main procedure of the labeling algorithm.
It removes most of the long-range edges and runs a breadth-first search.
The breadth-first search labels the crosses rooted in a large fraction of the vertices.
The algorithm labels almost all vertices, runs in expected linear time and can be used to define a compact routing scheme for \UTSW graphs.

The \emph{labeling algorithm}, \LA for short, takes as input a \UTSW graph $G=(V,E)$ and outputs an array $\ell$ that represents the labeling function $\ell_T:V\rightarrow[\![n]\!]^2$, where $T$ is the original spanning torus of $G$ generated by \UTSW.
It starts running \LRERA with $G$ as input.
The output is the torus $T'$ with a few remaining long-range edges and with the \emph{detected} attribute of all vertices already assigned.
After that, the algorithm initializes a breadth-first search, finding a vertex that can be the origin, and labels the vertices of the reference system calling \RSLA with $T'$ as input.
The algorithm runs the breadth-first search iteratively dequeing $u$ from the queue $Q$ and calling \ACLA to label the cross rooted in $u$.
\ACLA iteratively labels the vertices of the crosses and enqueues in $Q$ these vertices that were not enqueued yet.
Algorithm~\ref{alg:labelingAlgorithm} defines \LA.

\begin{algorithm}
	\Instance{A \UTSW graph $G=(V,E)$}
	\Returns{An array $\ell$ representing the labeling $\ell_T:V\rightarrow[\![n]\!]^2$}
	\BlankLine
	\Alg{\LA{$G$}}{
		$T'\leftarrow$ \LRERA{$G$}\\
		$(Q,\ell)\leftarrow$ \RSLA{$T'$}\\
		\While{$Q\ne\emptyset$}{\label{alg:labelingAlgorithm1}
			Dequeue $u$ from $Q$\\
			$(Q,\ell)\leftarrow$ \ACLA{$T',u,Q,\ell$}
		}
		\KwRet $\ell$
	}
	\caption{Labeling algorithm}
	\label{alg:labelingAlgorithm}
\end{algorithm}

\LA may not label some vertices of $G$, so $\ell$ defines a \emph{partial} two-dimensional toroidal vertex labeling function.
However, it labels most of them, because \LRERA detects most of the vertices, as Lemma~\ref{lem:numberOfDetectedVertices} shows.
Besides that, some of the non-detected vertices are labeled in the end of \LA running.
This happens because if one of the four neighbors $w\in N_T(v)$ of a non-detected vertex $v\in V$ in the original spanning torus $T$, was enqueued in $Q$, then \ACLA labels $v$, that are in the cross rooted in $w$.

Theorem~\ref{thm:labelingAlgorithmLabelsCrossOfEachEnqueuedVertex} shows that \LA labels the crosses rooted in all enqueued vertices.
The breadth-first search runs only over the local edges, in consequence of visiting only detected vertices.
Then, \LA does not label a vertex $v\in V$ only if $v$ is not reachable from the origin of the reference system $o\in V$ by a path composed only by local edges and detected vertices.
That is, $v$ is inside an area of $T$ in which the boundaries are composed by non-detected vertices only.

\begin{theorem}
	\LA labels all the vertices of crosses rooted in each vertex of $Q$.
	\label{thm:labelingAlgorithmLabelsCrossOfEachEnqueuedVertex}
\end{theorem}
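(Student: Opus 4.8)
The plan is to prove the statement by induction on the order in which vertices are dequeued from $Q$ during the breadth-first search of \LA, strengthening the claim so that the induction goes through. Specifically, I would show: \emph{whenever a vertex $u$ is dequeued, $u$ is detected, $\ell[u]\ne\emph{\nil}$, and moreover $u$ already has at least one neighbor $w$ in the original spanning torus $T$ with $w$ detected and $\ell[w]\ne\emph{\nil}$.} This stronger statement is what lets \ACLA find a cycle $(c_1,c_2,c_3,c_4)$ supplying a valid \emph{reference}: the cross rooted in $u$ lives in $T$, so the four local edges at $u$ together with the local edges at the neighbors form four four-cycles of $T$ that \CS must return; among the at most constantly many lattice patterns in $\mathcal{L}$, the one consisting of these four torus cycles is present, and the edge $\{u,w\}$ (for the labeled detected neighbor $w$) sits in two of those cycles, giving one of the three reference configurations enumerated in the text (the $\{u,c_2\},\{u,c_4\}$ case, the $\{u,c_2\},\{c_2,c_3\}$ case, or the $\{u,c_4\},\{c_3,c_4\}$ case).

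First I would set up the base case: by \RSLA the origin $o$ and the four vertices $o_1,\dots,o_4$ of the reference system are all detected and labeled, and only $o_1,\dots,o_4$ are enqueued. Each $o_i$ has $o$ as a torus-neighbor, and $o$ is detected and labeled, so the strengthened hypothesis holds for each of them. For the inductive step, suppose $u$ is dequeued and the hypothesis holds for $u$. Since $u$ is detected, $\delta_{T'}(u)=4$ and the four edges at $u$ in $T'$ are exactly the local edges of $u$ in $T$; hence the cross rooted in $u$ is precisely the torus cross, and \ACLA's search over lattice patterns will encounter the torus lattice pattern $\mathcal{C}^\star$ (the four unit squares of $T$ meeting at $u$). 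Using the labeled detected torus-neighbor $w$ guaranteed by the hypothesis, I would check case by case that one of the three \textbf{else if} branches fires on the cycle of $\mathcal{C}^\star$ containing $w$, so $u_1,u_2$ get set correctly and (in the second and third branches) the missing one of $\ell[u_1],\ell[u_2]$ is computed by the vector identity on line~\ref{alg:arbitraryCrossLabelingAlgorithm5} or~\ref{alg:arbitraryCrossLabelingAlgorithm6}; the correctness of that identity follows because in $T$ the four points $u,c_2,c_3,c_4$ of a unit square satisfy $\ell[c_3]=\ell[c_2]+\ell[c_4]-\ell[u]$ under $d'_n$, which in turn rests on Definition~\ref{dfn:2DimensionalToroidalVertexLabelingFunction} and the fact that the labels already assigned are consistent with $T$ (this last point is itself part of what the induction must carry, so I would fold ``all labels assigned so far agree with $\ell_T$'' into the inductive invariant as well).

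Next I would argue that lines~\ref{alg:arbitraryCrossLabelingAlgorithm3}--\ref{alg:arbitraryCrossLabelingAlgorithm4} correctly extract $u_3,u_4$ from the same lattice pattern $\mathcal{C}$: since $\mathcal{C}$ is a lattice pattern and $u$ is its common vertex with $\delta=4$, the four cycles of $\mathcal{C}$ pair up the four edges at $u$ into consecutive pairs, so the search condition $(c_2=u_{i-1}\wedge c_4\ne u_{i-2})\vee(c_4=u_{i-1}\wedge c_2\ne u_{i-2})$ uniquely picks the next cycle and hence the next cross vertex, exactly as in \RSVFA. Then lines~\ref{alg:arbitraryCrossLabelingAlgorithm7}--\ref{alg:arbitraryCrossLabelingAlgorithm8} label $u_3,u_4$ as the torus-antipodes $2\ell[u]-\ell[u_1]$ and $2\ell[u]-\ell[u_2]$ (mod $n$ in each coordinate), which is correct because $u_1,u,u_3$ and $u_2,u,u_4$ are colinear triples on $T$. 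Thus all four vertices of the cross rooted in $u$ end up labeled, proving the theorem's conclusion for this $u$; and each such $u_i$ that is detected is enqueued with $u$ as a labeled detected torus-neighbor, so the strengthened invariant is restored for the newly enqueued vertices, closing the induction.

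The main obstacle I anticipate is not the bookkeeping of which branch of \ACLA fires, but rather pinning down the invariant precisely enough that it is self-sustaining: one must simultaneously maintain (a) every dequeued vertex is detected and labeled, (b) it has a labeled detected torus-neighbor, and (c) all labels assigned so far are consistent with a single two-dimensional toroidal labeling of $T$. Part (c) is the delicate one, because \ACLA may relabel a vertex that was previously \emph{\nil} via several different reference cycles on different calls, and one has to verify these never conflict — which ultimately reduces to the observation that the breadth-first search only ever traverses local edges of $T$ (it enqueues only detected vertices, whose incident $T'$-edges are all local), so the propagated labels are forced, coordinate by coordinate modulo $n$, to coincide with the unique extension of the reference-system labeling along $T$. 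I would state and prove (c) as a separate lemma, or as an explicit clause of the induction, before concluding the theorem.
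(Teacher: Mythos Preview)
Your approach is essentially the paper's: both argue, for each dequeued $u$, that its enqueuer $s$ (the origin, or a previously dequeued vertex) is a detected torus-neighbor whose cross has already been fully labeled, and that this produces a reference in one of the two torus four-cycles through the edge $\{u,s\}$. Two remarks are worth making.

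First, your invariant (b) as stated is slightly too weak to close the induction. Knowing only that $w$ is detected and $\ell[w]\ne\nil$ does not by itself make any branch of \ACLA\ fire: the second and third branches also need $c_3$ to be labeled, and the first branch needs both $c_2$ and $c_4$ labeled. What you actually need (and what the paper makes explicit by naming the vertices $v_1,v_2$ in the cross at $s_i$) is that the \emph{entire cross} at $w$ has already been labeled; this is automatic if you take $w$ to be the vertex that enqueued $u$, since enqueueing happens only inside \RSLA\ or \ACLA\ after the cross at the current root has been labeled. Strengthen (b) accordingly and your case check goes through.

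Second, your part (c) is unnecessary for this theorem. The statement only claims that the cross vertices receive non-\nil\ labels, not that those labels agree with any fixed $\ell_T$, and the branch tests in \ACLA\ examine only detection status and whether labels are non-\nil, never the label \emph{values}. The paper's proof accordingly carries no consistency invariant here; correctness of the labeling is a separate concern (and indeed appears only in a commented-out theorem in the source).
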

\begin{proof}
	Let $u_i$ be the vertex that is in the front of $Q$ in the $i^{\textrm{th}}$ iteration of line \ref{alg:labelingAlgorithm1} in Algorithm~\ref{alg:labelingAlgorithm}.
	Let $s_i$ be the root of the cross processed by \RSLA or \ACLA when $u_i$ was enqueued in $Q$.
	Note that $s_i$ and $u_i$ are both detected, because \RSLA chooses a detected origin and enqueues detected vertices, and \ACLA enqueues only detected vertices.
	The edge $\{u_i,s_i\}$ belongs to the original spanning torus $T$, because both endpoints are detected.
	Let $c_{i1}=(u_i,s_i,v_1,w_1)$ and $c_{i2}=(u_i,s_i,v_2,w_2)$ be the two distinct four-cycles induced in $T$.
	The vertices $u_i$, $v_1$ and $v_2$ are all labeled, because either \RSLA already labeled them if $s_i$ is the \emph{origin}, or \ACLA already labeled them during the labeling of the cross in $s_i$.
	Also, the vertex $s_i$ is also labeled, because either \RSLA already labeled $s_i$ if it is the \emph{origin}, or \RSLA or \ACLA already labeled it before enqueuing it in $Q$.
	Without loss of generality, $\{u_i,s_i\}$ and $\{s_i,v_1\}$ have labeled endpoints and are local edges perpendicular in the lattice pattern rooted in $u_i$ and induced in $T$, \textit{i.e.}, they are a reference in $c_{i1}$.
	Let $\mathcal{L}_{u_i}$ be the list assigned on line \ref{alg:arbitraryCrossLabelingAlgorithm1} of \ACLA when $u=u_i$.
	Note that $c_{i1}$ are in $\mathcal{L}_{u_i}$.
	As $c_{i1}$ has the two edges $\{u_i,s_i\}$ and $\{s_i,v_1\}$ as a reference and $s_i$ is detected, so \ACLA labels the vertices of the cross in $u_i$.
	Therefore, \LA labels the cross of each vertex $u_i$ enqueued in $Q$.
\end{proof}

Besides some non-labeled vertices may exist, this number is small.
Corollary~\ref{cor:allVerticesAreDetectedForInfiniteN} shows that the number of detected vertices tends to $|V|$ whereas the size of the torus $n$ grows.
Equivalently, the number of non-detected vertices tends to zero.
The combination of this fact with Theorem~\ref{thm:labelingAlgorithmLabelsCrossOfEachEnqueuedVertex} results in Theorem~\ref{thm:numberVerticesLabelingAlgorithmLabelsTendsToN} statement.

\begin{theorem}
	As $n$ goes to infinity, the number of vertices that \LA labels tends to $|V|$.
	\label{thm:numberVerticesLabelingAlgorithmLabelsTendsToN}
\end{theorem}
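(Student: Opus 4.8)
Since both the input graph and the algorithm are random, I read the claim as: the expected number of vertices that \LA labels is $|V|(1-\textrm{o}(1))$ as $n\to\infty$ (the high-probability version, $\#\{\text{unlabeled}\}/|V|\to0$ in probability, then follows from Markov's inequality applied to the unlabeled count, whose mean is $\textrm{o}(|V|)$). The structural input is Theorem~\ref{thm:labelingAlgorithmLabelsCrossOfEachEnqueuedVertex}: the breadth-first search inside \LA only ever traverses local edges and only between detected vertices, so the set of vertices it enqueues is exactly $R\setminus\{o\}$, where $R$ is the connected component of the origin $o$ in the subgraph $T[D]$ of the torus $T$ induced by the set $D$ of vertices detected by \LRERA; and the algorithm labels every vertex of $R$ (indeed every torus-neighbor of a vertex of $R$). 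Hence $\#\{\text{unlabeled}\}\le|V\setminus D|+|D\setminus R|$, and it suffices to prove $\E[|V\setminus D|]=\textrm{o}(|V|)$ and $\E[|D\setminus R|]=\textrm{o}(|V|)$.

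The first estimate is immediate: by Lemma~\ref{lem:numberOfDetectedVertices} (equivalently Corollary~\ref{cor:allVerticesAreDetectedForInfiniteN}), $\E[|V\setminus D|]\le|V|\cdot\textrm{O}(\log^{-1}n)=\textrm{o}(|V|)$. For the second I would run a Peierls (contour) argument. If $v\in D$ but $v\notin R$, then the component of $v$ in $T[D]$ is distinct from that of $o$, so $v$ is separated from $o$ on the torus by a circuit $\gamma$ of non-detected vertices that is connected under king moves (a $*$-circuit), of some length $\ell\ge4$; non-contractible such circuits have length at least $n$ and contribute only a negligible $\mathrm{poly}(n)\,q(n)^n$ term. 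Bounding the number of $*$-circuits of length $\ell$ enclosing a fixed vertex by $\ell\,3^\ell$, one obtains, uniformly in $v$,
\[\Pr\!\left(v\in D\setminus R\right)\ \le\ \sum_{\ell\ge4}\ell\,3^\ell\,\max_{|\gamma|=\ell}\Pr\!\left(\textstyle\bigcap_{u\in\gamma}\overline{D_u}\right)\;+\;\mathrm{poly}(n)\,q(n)^n,\]
so the whole theorem reduces to a product bound $\Pr\!\left(\bigcap_{u\in\gamma}\overline{D_u}\right)\le q(n)^{|\gamma|}$ with $q(n)\to0$: given this, the right-hand side above is at most $\sum_{\ell\ge4}\ell\,(3q(n))^\ell+\textrm{o}(1)\to0$ once $q(n)<1/3$, whence $\E[|D\setminus R|]=\textrm{o}(|V|)$. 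A minor technical point is that one must also check that $o$ itself lies in the large component of $T[D]$ with probability $1-\textrm{o}(1)$; this follows from the same contour estimate applied to a uniformly random admissible origin, since the probability that a given vertex is admissible is $1-\textrm{O}(\log^{-1}n)$ (as in the proof of Lemma~\ref{lem:numberOfVerticesChosenAsOrigin}).

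The main obstacle is precisely this product bound, because the detection events are dependent with \emph{unbounded} range: by definition $\overline{D_u}$ asks for a lattice pattern rooted at $u$ that contains a long-range edge incident to $u$, and in cases such as $E_{9u}$ (Section~\ref{sec:cyclesOfSize4OutsideTheTorus}) the remaining vertices of that pattern may lie arbitrarily far from $u$, so there is no constant-radius decoupling. The mechanism I would exploit is that every witness for $\overline{D_u}$ is \emph{anchored} at $u$: it forces a long-range edge $\{u,a\}$, hence forces one of the two out-choices $u\to a$ or $a\to u$; and for distinct $u,u'\in\gamma$ these anchored constraints involve distinct source vertices, because each vertex makes exactly one long-range choice (so $x\to u$ and $x\to u'$ with $u\ne u'$ are incompatible, and the constraints $u\to a$ have pairwise distinct sources $u$). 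Expanding $\Pr\!\left(\bigcap_{u\in\gamma}\overline{D_u}\right)$ as a union bound over the finitely many combinatorial types of such a witness, over the disjuncts $u\to a$ versus $a\to u$, and over the free vertex positions within each type, and using that distinct vertices' long-range choices are independent with $\Pr(w\to z)=Z\,d_{wz}^{-2}\le Z=\textrm{O}(\log^{-1}n)$ (Definition~\ref{dfn:UTSWModel} and Theorem~\ref{thm:upperBoundOfZ}), one gets a factor $\textrm{O}(\log^{-1}n)$ from each of the $\Omega(|\gamma|)$ distinct anchored choices, while the remaining edges of each witness are summed out at cost $\textrm{O}(1)$ per vertex via the distance estimates of Lemmas~\ref{lem:sumOfDistances1} and \ref{lem:sumOfDistances2} (exactly as in the proofs of Lemmas~\ref{lem:E1uProbability}--\ref{lem:E9uProbability}). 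This yields $q(n)=(\log n)^{-\Omega(1)}\to0$, which is all that is needed. I expect this anchored-decoupling bookkeeping — not the geometry of the contour argument — to be where the real work lies. Incidentally, the shorter route suggested by the remarks preceding the statement, namely combining Corollary~\ref{cor:allVerticesAreDetectedForInfiniteN} with Theorem~\ref{thm:labelingAlgorithmLabelsCrossOfEachEnqueuedVertex} directly, tacitly assumes that the almost-all detected vertices lie in a single component reachable from $o$, which is exactly the claim the contour argument is there to supply.
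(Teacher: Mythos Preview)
The paper's own ``proof'' is a single sentence: it simply asserts that the statement follows by combining Corollary~\ref{cor:allVerticesAreDetectedForInfiniteN} with Theorem~\ref{thm:labelingAlgorithmLabelsCrossOfEachEnqueuedVertex}. Your final paragraph puts its finger precisely on the gap in that reasoning: those two results together only tell you that (i) $\E[|V\setminus D|]=|V|\cdot\textrm{O}(\log^{-1}n)$ and (ii) every vertex in the component $R$ of $o$ in $T[D]$, and its torus-neighbors, gets labeled. They do \emph{not} say that $|D\setminus R|$ is small. Since $\E[|V\setminus D|]=\Theta(n^{2}/\log n)\to\infty$, one cannot argue that $D=V$ with high probability; a priori those $\Theta(n^{2}/\log n)$ non-detected sites could form contours that isolate a macroscopic fraction of $D$ from the origin. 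So your plan is strictly more honest than the paper's, and the Peierls/contour route you outline is the natural way to close the gap.

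Where your sketch needs more care is exactly the step you flag. Two remarks. First, anchoring each $\overline{D_u}$ on a single incident long-range edge is not enough by itself: the bare event ``$u$ is incident to a long-range edge at torus-distance $\ge2$'' has probability $1-\textrm{O}(\log^{-1}n)$, not $\textrm{O}(\log^{-1}n)$; the decay in $\Pr(\overline{D_u})\le4\Pr(E_u)$ comes from the full lattice-pattern constraint, so the product bound must retain that constraint when you expand over witness types (as you indicate by invoking Lemmas~\ref{lem:sumOfDistances1}--\ref{lem:sumOfDistances2}). Second, the disjointness-of-sources argument is not airtight as stated: for king-adjacent $u,u'\in\gamma$ with $d_{uu'}=2$, a single choice $u'\!\to u$ (or $u\!\to u'$) can simultaneously supply the anchored edge for both $\overline{D_u}$ and $\overline{D_{u'}}$, so one must either show this still forces an additional independent choice via the pattern constraint, or accept that only $\Omega(|\gamma|)$ (say $|\gamma|/2$) of the anchors are source-disjoint. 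Either way the Peierls sum $\sum_{\ell\ge4}\ell\,3^{\ell}\,q(n)^{c\ell}$ still converges once $q(n)=\textrm{O}(\log^{-1}n)$ is small enough, so the plan survives; but this bookkeeping should be made explicit rather than asserted.
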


Algorithm~\ref{alg:labelingAlgorithm} runs \LRERA and \RSLA once, demanding expected linear time, by Lemma~\ref{lem:runningTimeLongRangeEdgesRemovingAlgorithm} and Corollary~\ref{cor:runnigTimeOfReferenceSystemLabelingAlgorithm}.
Each iteration of the breadth-first search runs \ACLA once, demanding expected constant time, by Corollary~\ref{cor:runnigTimeOfArbitraryCrossLabelingAlgorithm}.
As the breadth-first search runs at most $|V|-1$ iterations, then the \LA runs in expected linear time, as stated by Theorem~\ref{thm:runnigTimeOfLabelingAlgorithm}.

\begin{theorem}
	\LA runs in expected linear time on $|V|$.
	\label{thm:runnigTimeOfLabelingAlgorithm}
\end{theorem}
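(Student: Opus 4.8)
The plan is to decompose the running time of \LA along the structure of Algorithm~\ref{alg:labelingAlgorithm}: one call to \LRERA, one call to \RSLA, and then the breadth-first search loop on line~\ref{alg:labelingAlgorithm1}, each iteration of which makes exactly one call to \ACLA. First I would handle the two single calls. By Lemma~\ref{lem:runningTimeLongRangeEdgesRemovingAlgorithm}, \LRERA runs in expected time $\textrm{O}(|V|)$, and it returns a graph $T'=(V,E')$ with $E'\subseteq E$. Feeding this $T'$ to \RSLA, Corollary~\ref{cor:runnigTimeOfReferenceSystemLabelingAlgorithm} gives expected time $\textrm{O}(|V|)$ for that call as well — its hypothesis, that the input is an output of \LRERA, is met — and in particular the origin-search loop of \RSLA terminates in expected constant time by Lemma~\ref{lem:numberOfVerticesChosenAsOrigin}.

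Next I would analyse the loop. The crucial structural fact is that each $v\in V$ carries an \emph{enqueued} flag which is set to \true\ the unique time $v$ enters $Q$, while the chosen origin $o$ is itself never enqueued; hence $Q$ receives at most $|V|-1$ vertices over the whole execution and the loop on line~\ref{alg:labelingAlgorithm1} executes at most $|V|-1$ times, \emph{deterministically}. Let $N\le|V|-1$ denote the (random) number of iterations and let $Y_i$ be the running time of the $i$-th invocation of \ACLA, setting $Y_i=0$ for $i>N$. Each actual invocation receives the graph $T'$, once more an output of \LRERA, so Corollary~\ref{cor:runnigTimeOfArbitraryCrossLabelingAlgorithm} yields $\E[Y_i]=\textrm{O}(1)$. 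Because $N$ is bounded above by the fixed quantity $|V|-1$, the total time spent in \ACLA is $\sum_{i=1}^{|V|-1}Y_i$, and by linearity of expectation $\E\left[\textstyle\sum_{i=1}^{|V|-1}Y_i\right]=\textstyle\sum_{i=1}^{|V|-1}\E[Y_i]=(|V|-1)\cdot\textrm{O}(1)=\textrm{O}(|V|)$. Adding the three contributions, the expected running time of \LA is $\textrm{O}(|V|)+\textrm{O}(|V|)+\textrm{O}(|V|)=\textrm{O}(|V|)$.

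The step I expect to need the most care is the composition of expectations in the loop: the calls to \ACLA act on the random graph $T'$ and on the random reference system produced by \RSLA, and the number of calls $N$ is itself random and correlated with these objects, so a naive ``number of iterations times cost per iteration'' argument is not automatically valid. The clean resolution is exactly the \emph{deterministic} bound $N\le|V|-1$: it removes any need for a Wald-type identity, letting us replace the random-length sum by a fixed-length one and invoke linearity directly. A secondary point to verify is that running \CS inside \ACLA on $T'$ rather than on the original \UTSW graph $G$ is still covered by the constant-expected-time guarantee; this holds because $E'\subseteq E$ forces $\delta_{T'}(v)\le\delta_G(v)$ for every $v$, so the bounded search of \CS visits no more vertices on $T'$ than it would on $G$, and the bound of Lemma~\ref{lem:runningTime4CyclesSearch} — hence Corollary~\ref{cor:runnigTimeOfArbitraryCrossLabelingAlgorithm} — still applies.
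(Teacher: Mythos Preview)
Your proposal is correct and follows essentially the same approach as the paper: decompose \LA into the single calls to \LRERA and \RSLA (each expected $\textrm{O}(|V|)$ by Lemma~\ref{lem:runningTimeLongRangeEdgesRemovingAlgorithm} and Corollary~\ref{cor:runnigTimeOfReferenceSystemLabelingAlgorithm}), observe that the breadth-first loop executes at most $|V|-1$ times, and apply Corollary~\ref{cor:runnigTimeOfArbitraryCrossLabelingAlgorithm} to each \ACLA call. Your additional care about the random-length sum (resolved via the deterministic bound $N\le|V|-1$) and about \CS running on $T'$ rather than $G$ are points the paper glosses over, but they do not change the argument.
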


The main application of \LA is in routing of messages in \UTSW graphs.
Section~\ref{sec:toroidalSmallWorldLabelingProblem} claims that myopic search can route messages if a two-dimensional toroidal vertex labeling function $\ell$ (Definition~\ref{dfn:2DimensionalToroidalVertexLabelingFunction}) is known.
In this sense, each vertex $u\in V$ requires $2\lceil\log n\rceil$ bits for its label $\ell[u]$, $\lceil\log n\rceil$ bits for each dimension, where $n$ is the size of the torus.
Also, $u$ requires expected O$(\log n)$ bits for its routing table, because each neighbor $v\in N_G(u)$ of $u$ in $G$ has a row in $u$'s routing table, encoded by the pair $(\ell[v],p_u(v))$, where $p_u(v)$ is the logical port id in $u$ that directs the message to the edge in $u$ incident to $v$ in $G$.
As $\ell[v]$ requires $2\lceil\log n\rceil$ bits, $p_u(v)$ requires expected O$(1)$ bits and $u$'s routing table has expected O$(1)$ rows, both by Lemma~\ref{lem:expectedDegree}, so $u$'s routing table has expected O$(\log n)$ bits.
Then, each $u\in V$ requires expected O$(\log n)$ bits for running the myopic search, which is sub-linear in the size of the network $G$, given $|V|=n^2$ and $|E|\le3|V|$.

Considering the case of generating the routing tables with a partial two-dimensional toroidal vertex labeling function, there is a compact routing scheme for $G$ with preprocessing algorithm that generates sub-linear structures per vertex and with routing algorithm that forwards messages in expected constant time.
Theorem~\ref{thm:compactRoutingSchemeForUTSWGraphs} shows this.

\begin{theorem}
	There is a compact routing scheme for \UTSW graphs.
	\label{thm:compactRoutingSchemeForUTSWGraphs}
\end{theorem}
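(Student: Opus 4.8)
The plan is to present the scheme as a pair consisting of a preprocessing algorithm and a routing algorithm, and then to verify the two properties that make it \emph{compact}: every vertex stores a sublinear number of bits in $|V|$, and the routing algorithm delivers every message while spending expected $\textrm{O}(1)$ time per forwarding. Throughout I write $n=|V|^{1/2}$ and $[\![n]\!]=\{0,\dots,n-1\}$ as in the paper.

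First I would describe the preprocessing. It runs \LA on the input \UTSW graph $G=(V,E)$, obtaining in expected linear time (Theorem~\ref{thm:runnigTimeOfLabelingAlgorithm}) the array $\ell$, which assigns to almost every vertex $u$ a label $\ell[u]\in[\![n]\!]^2$ in the sense of Definition~\ref{dfn:2DimensionalToroidalVertexLabelingFunction}. For each labeled vertex $u$ it then stores the routing table $\{(\ell[v],p_u(v)):v\in N_G(u)\}$, where $p_u(v)$ is the port id at $u$ toward $v$ -- exactly the data myopic search consults. As observed just before the statement, $\ell[u]$ takes $2\lceil\log n\rceil$ bits and, by Lemma~\ref{lem:expectedDegree}, the table has expected $\textrm{O}(1)$ rows of $\textrm{O}(\log n)$ bits, so a labeled vertex keeps expected $\textrm{O}(\log n)=\textrm{O}(\log|V|)$ bits. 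The routing algorithm, at a labeled vertex holding a message addressed to a labeled target $t$, performs one myopic step: it forwards to the neighbor $v$ minimizing $d'_n(\ell[v],\ell[t])$, computable from the table in expected $\textrm{O}(1)$ time by Lemma~\ref{lem:expectedDegree}. Since on a two-dimensional torus every non-target vertex has a local-edge neighbor strictly closer to $t$ under $d'_n$, these steps strictly decrease the distance and therefore reach $t$; because \UTSW is the \cite{Kleinberg:2000a} model on a torus with $r=2$, this in fact happens within a polylogarithmic expected number of hops, although mere delivery already suffices for compactness.

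The hard part is that \LA may leave a set $U\subseteq V$ of vertices unlabeled -- small, and vanishing relative to $|V|$ as $n$ grows by Theorem~\ref{thm:numberVerticesLabelingAlgorithmLabelsTendsToN}, but nonempty in general -- and such a vertex may be the source of a message, may lie on the greedy path, or may be the target, none of which plain myopic search can cope with. The plan is to superpose a fallback: during preprocessing also run an off-the-shelf compact routing scheme for general graphs on $G$, for instance that of \cite{Thorup:2001a} with per-vertex storage $\tilde{\textrm{O}}(|V|^{1/2})$, and give every vertex its label and table from that scheme as well, together with one flag recording whether it carries a valid torus label; the advertised address of a vertex is the pair (torus label or \nil, fallback label), so both are available in a message header. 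Routing then dispatches on these flags: at a torus-labeled vertex carrying a message for a torus-labeled target, if some torus-labeled neighbor is strictly closer under $d'_n$ it forwards there, and in every other case -- source not labeled, target not labeled, or no labeled neighbor makes progress, which is exactly the situation in which the message would otherwise enter $U$ -- it switches permanently to the fallback scheme, which delivers the message on any graph.

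It remains to combine the bounds. Per vertex the storage is the sum of the expected $\textrm{O}(\log|V|)$ bits above and the $\tilde{\textrm{O}}(|V|^{1/2})$ bits of the fallback structure, hence $\tilde{\textrm{O}}(|V|^{1/2})=\textrm{o}(|V|)$, so the scheme is compact; each forwarding is a constant-time dispatch followed either by an expected $\textrm{O}(1)$ myopic minimization or by a constant number of fallback table lookups; and delivery always holds because the route either strictly decreases $d'_n$ down to $t$ or is handed off to the fallback scheme, which is correct on any graph. An alternative to the superposition worth mentioning is to make each unlabeled vertex store a shortest path to the nearest labeled vertex -- of length at most the diameter $|V|^{1/2}$, hence $\textrm{O}\!\left(|V|^{1/2}\log|V|\right)$ bits, still sublinear -- and route through such detours, but crossing a large unlabeled region stays awkward, so the fallback is the cleaner option. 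The only real subtlety in this plan is the clean hand-off at the boundary of $U$, which the address-pair convention settles.
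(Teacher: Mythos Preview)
Your argument is correct, but it takes a genuinely different route from the paper's own proof. The paper's proof is considerably shorter: it simply runs \LA, assigns to each vertex $u$ its label $\ell[u]$, builds the routing table $\{(\ell[v],p_u(v)):\{u,v\}\in E\}$, observes that this is expected $\textrm{O}(\log n)$ bits per vertex, and declares myopic search to be the routing algorithm. That is all --- the paper does not address what happens when the source, the target, or an intermediate vertex is unlabeled, nor does it introduce any fallback mechanism. In effect the paper treats ``there is a compact routing scheme'' as a statement about storage size only (consistent with the definition in Section~\ref{sec:introduction}, which says a scheme is compact if each vertex uses sublinear space), and leaves the partial-labeling issue to the informal discussion preceding the theorem.

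Your version is more careful: by layering the \cite{Thorup:2001a} scheme on top of the torus labeling you guarantee delivery in every case, at the price of raising the per-vertex storage from expected $\textrm{O}(\log n)$ to $\tilde{\textrm{O}}(|V|^{1/2})$. Both bounds are sublinear, so both establish compactness, but the paper's bound is exponentially stronger while yours closes a correctness gap the paper leaves open. The hand-off convention you describe (address pair, permanent switch to the fallback on first failure) is sound, and the alternative you mention --- storing a short path from each unlabeled vertex to a labeled one --- would also work but, as you note, is messier when the unlabeled region must be crossed.
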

\begin{proof}
	The following preprocessing algorithm generates the labels and routing tables for each vertex of $G$.
	Run \LA with input $G$, generating the labeling $\ell$.
	For each $u\in V$, assign $\ell[u]$ to its label.
	For each $\{u,v\}\in E$, add $(\ell[v],p_u(v))$ in $u$'s routing table and $(\ell[u],p_v(u))$ in $v$'s routing table.
	\LA generates the labeling $\ell$ in expected linear time, as Theorem~\ref{thm:runnigTimeOfLabelingAlgorithm} states.
	The preprocessing algorithm generates the labels and the routing tables in $\Theta(|V|)$ time, because each access in $\ell$ and in $p$ runs in constant time and because $2|V|\le|E|\le3|V|$.
	So, the preprocessing algorithm runs in O$(|V|)$ expected time and generates structures with expected O$(\log n)$ bits for each vertex, as explained above.
	The myopic search runs in expected constant time, by Lemma~\ref{lem:expectedDegree}.
	Therefore, there is an expected linear time preprocessing algorithm that generates structures with expected sub-linear size for each vertex and a related routing algorithm that runs in expected constant time on each vertex.
\end{proof}

\section{Conclusion}
\label{sec:conclusion}

In this work, we present a small world graph model, called \UTSW model, and a labeling algorithm for this model.
The model topology, the linear time execution of the labeling algorithm, and the labels itself, imply in a compact routing scheme for the graph.

The \UTSW model is built upon a two-dimensional torus together with random long-range edges.
The resulting graph has average distance O$(\log n)$, and allows the myopic search to perform O$\left(\log^2n\right)$ expected forwards for a message reach its destination.
Therefore, a \UTSW graph exhibits the two main properties of the small world networks: clustering and paths with small sizes.

The structure of the underlying torus can be seen as a well-formed pattern of four-cycles, which motivates the approach used to label the vertices.
The difficulty arises from the random edges, which may create new four-cycles that do not belong to the torus.
Indeed, in extreme cases (see Figure~\ref{fig:Ambiguity}), a unique labeling is not even possible.
Nevertheless, for large graphs, the proposed algorithm labels almost all vertices (Theorem~\ref{thm:numberVerticesLabelingAlgorithmLabelsTendsToN}).

Generally speaking, labeling random graphs poses serious challenges.
For future works, one can consider graphs that better models the real world.
For example, the torus used here models a situation where the vertices are equally distributed on the geographical space.
Some models \cite{Liben-Nowell:2005,Kleinberg:2001} deal with this issue, but it remains an open problem if the results here can be adapted to them.
Also, several real-world network models can be considered, for example, models for power law graphs, and the hyperbolic geometric graph.
\acknowledgements

We thank the Coordination for the Improvement of Higher Education Personnel (CAPES) by the scholarship financial support.

\nocite{*}
\bibliographystyle{abbrvnat}
\bibliography{dmtcs-bibliography}

\end{document}